\newcommand{\od}{ {\rm ord}_2}
\newcommand{\tr}{ \rm Tr}
\newcommand{\gf}{ \mathbb{F}}
\newcommand{\ef}{ \mathbb{F}}
\newcommand{\ec}{ \mathbb{C}}
\newtheorem{theorem}{Theorem}
\newtheorem{corollary}{Corollary}
\newtheorem{proposition}{Proposition}
\newtheorem{problem}{Problem}
\newtheorem{lemma}{Lemma}
\newtheorem{definition}{Definition}
\newtheorem{example}{Example}
\newtheorem{remark}{Remark}
\def\yin#1 {\fbox {\footnote {\ }}\ \footnotetext { From Yin: {\color{blue}#1}}}
\def\hyin#1 {}
\newcommand{\figcaption}{\def\@captype{figure}\caption}
\newcommand{\tabcaption}{\def\@captype{table}\caption}
\renewcommand\Authands{ and }
\begin{document}

\title{Quadratic Zero-Difference Balanced Functions, APN Functions and Strongly Regular Graphs}

\author[1]{Claude Carlet\thanks{Email: claude.carlet@univ-paris8.fr}}
\author[2]{Guang Gong\thanks{Email: ggong@uwaterloo.ca}}
\author[2]{Yin Tan\thanks{Corresponding author. Email: y24tan@uwaterloo.ca. Tel: +1 (519) 888-4567 EXT 32140.}}
\affil[1]{LAGA, Universities of Paris 8 and Paris 13, CNRS
\authorcr
	Department of Mathematics,  University of Paris 8  
2 rue de la libert\'e, 93526 Saint-Denis cedex 02, France }
\affil[2]{Department of Electrical and Computer Engineering\authorcr
        University of Waterloo, Ontario, Canada}

\renewcommand\Authands{ and }

\maketitle

\begin{abstract}
Let $F$ be a function from $\gf_{p^n}$ to itself and $\delta$ a positive integer. 
$F$ is called zero-difference $\delta$-balanced if the equation $F(x+a)-F(x)=0$ 
has exactly $\delta$ solutions for all nonzero $a\in\gf_{p^n}$. As a particular case, 
all known quadratic planar functions are zero-difference 1-balanced; and some quadratic 
APN functions over $\gf_{2^n}$ are zero-difference 2-balanced. In this paper, we study the 
relationship between this notion and differential uniformity; we show that all quadratic 
zero-difference $\delta$-balanced functions are differentially $\delta$-uniform and we 
investigate in particular such functions with the form $F=G(x^d)$, where $\gcd(d,p^n-1)=\delta +1$ 
and where the restriction of $G$ to the set of all nonzero $(\delta +1)$-th powers in $\gf_{p^n}$ 
is an injection. We introduce new families of zero-difference $p^t$-balanced functions. 
More interestingly, we show that the image set of such functions is a regular partial difference set, 
and hence yields strongly regular graphs; this generalizes the constructions of strongly regular 
graphs using planar functions by Weng et al. Using recently discovered quadratic APN functions on 
$\gf_{2^8}$, we obtain $15$ new $(256, 85, 24, 30)$ negative Latin square type strongly regular graphs. 

\bigskip
{\bf{Keywords}}: Zero-difference balanced functions, almost perfect nonlinear functions, strongly regular graphs.

\bigskip
{\bf{MSC}}: 11T06, 11T71, 05E30.
\end{abstract}

\section{Introduction}
\label{intro}

Functions defined over $\gf_{p^n}$ with high nonlinearity have been studied extensively in the last three decades
as they are widely used in symmetric cipher design, allowing resisting known attacks.
For instance, permutations over $\gf_{2^n}$ with high nonlinearity and
low differential uniformity (defined in Section 2.1) are
chosen as the Substitution boxes in block ciphers to bring the necessary confusion, as shown in \cite{nyberg}.
Besides this, they are interesting thanks to their close relationship with notions in coding theory and  
combinatorics. For instance, almost bent functions can be used to
construct codes, association schemes, graphs, authentication schemes \cite{CDN,codeandgraph} and classes of sequences with low cross-correlation \cite{HK}; bent functions can be used to
construct association schemes, strongly regular graphs \cite{tanas,tangraph} and classes of sequences with low cross-correlation as well \cite{HK};
quadratic almost perfect nonlinear functions (APN, defined in Section 2.1) can be used to construct dual hyperplanes \cite{edel};
perfect nonlinear functions (PN, defined in Section 2.1) can be used to construct difference sets
and strongly regular graphs \cite{weng}. For a survey of highly nonlinear
functions, one may refer to \cite{Cbook1, Cbook, carletandding}.
In this paper, we will study some quadratic functions having some peculiarity implying the 
properties above, and establish a new relationship between them and strongly regular graphs.

In \cite{ZDB-Ding}, Ding introduced a special kind of functions called \textit{zero-difference balanced}
(ZDB) functions to construct optimal constant composition codes. 
A function from an Abelian group $A$ to the other Abelian group $B$ is called zero-difference $\delta$-balanced
if the equation $F(x+a)-F(x)=0$ has exactly $\delta$ solutions for all nonzero $a\in A$, 
where $\delta$ is some positive integer. 
Recently, several classes of zero-difference balanced functions were constructed and more of their applications 
to the construction of combinatorial objects were explored, see
\cite{Han-ZDB,ZDB-Ding,Ding-DSS,Ding-ZDB,Ding-Tan,QiZhou-ZDB,ZTWY} and the references therein.
It is worth mentioning that the known constructions of ZDB functions are mostly defined on a cyclic group 
because such ZDB functions have more applications.
Throughout this paper, we shall consider ZDB functions on a non-cyclic group (more precisely the additive group 
of the field $\gf_{p^n}$) and show that some of them may be applied to construct partial difference sets.
We shall first show that
a quadratic zero-difference $\delta$-balanced function is a
quadratic differentially $\delta$-uniform function.
The converse of the above statement is not true in general:
by \cite{gohar-pott, weng}, all quadratic PN functions
are zero-difference 1-balanced, up to the addition of a linear function, 
but this is not true for APN functions on $\gf_{2^n}$. For example, $18$ of the $2,275$ newly discovered quadratic APN
functions on $\gf_{2^8}$ in \cite{yu-APN, tan-APN} are zero-difference 2-balanced; and when $n$ is odd, 
APN permutations are clearly not zero-difference 2-balanced. 
However, the notion of zero-difference balance gives a nice enlightenment on the APNness of 
some classes of known APN functions (see below).

For the construction of zero-difference $\delta$-balanced functions, it is shown in Corollary \ref{dto1map} that quadratic functions of the
form $F(x)=G(x^d)$ satisfy the requirement when $\gcd(d, p^n-1)=\delta +1$ and when
the restriction of $G$ to the set of $(\delta+1)$-th powers of $\gf_{p^n}$ is an injection.
By discovering such $G$, we obtain new families of differentially $\delta$-uniform functions.
In Section 4, on the one hand, we provide new methods to construct zero-difference $\delta$-balanced functions; and on the other hand, new families of zero-difference $p$-balanced functions
are presented. As a particular case, new APN functions are obtained.

It is proven in \cite{weng} that, given an even PN function on $\gf_{p^n}$ (i.e. $f(0)=0$ and $f(-x)=f(x)$ for all $x\in\gf_{p^n}\setminus\{0\}$),
its image set (excluding $0$) is either a Payley difference set when
$p^n\equiv 3\mod 4$, or a Payley partial difference set when $p^n\equiv 1\mod 4$. In Section 5,
we establish similar results. Precisely, let $F$ be a zero-difference $p^t$-balanced
function on $\gf_{p^n}$, where $n\equiv 0\mod 2t$ and $t>0$ (the case $t=0$ is actually the result obtained in \cite{weng}),
denoting $D=\mbox{Im}(F) \setminus \{0\}$, then $D$ is a regular
\begin{equation}
\label{srgpara}
    \left(p^n, \frac{p^n-1}{p^t+1},
               \frac{p^n-3p^t-2-\epsilon p^{n/2+2t}+\epsilon p^{n/2+t}}{(p^t+1)^2},
               \frac{p^n-\epsilon p^{n/2}+\epsilon p^{n/2+t}-p^t}{(p^t+1)^2}
    \right)
\end{equation}
partial difference set (PDS, defined in Section 2.3), where $n=2kt$ and $\epsilon=(-1)^{k}$.
Therefore, we obtain a new construction of strongly regular graphs (SRG, defined in Section 2.3)
by its relationship to partial difference sets.
Particularly, when $k$ is even, we obtain negative Latin square type SRGs.
In Section 6, using newly discovered zero-difference $2$-balanced (namely APN) functions
on $\gf_{2^8}$ and by comparing the SRGs with parameter $(256,85,24,30)$ to known constructions,
we found $15$ new such graphs.

The rest of the paper is organized as follows. In Section 2, we give necessary definitions and results.
The properties of zero-difference balanced functions are presented in Section 3.
Section 4 presents constructions of zero-difference $p^t$-balanced functions.
In Section 5, we establish the relationship between zero-difference $p^t$-balanced
functions and partial difference sets (strongly regular graphs), and in Section 6 we discuss the
newness of the SRGs obtained from zero-difference $2$-balanced functions.
Some concluding remarks are given in Section 7.

\section{Preliminary}
\label{pre}

In this section, we introduce basic definitions and results which will be used in the following sections.

\subsection{Functions defined over $\gf_{p^n}$}
\label{nonlinear-functions}

Let $F$ be a function from $\ef_{p^n}$ to itself.  For any $a,b\in\gf_{p^n}$; $a\neq 0$,
define the difference function $\delta_F(a,b) = |\{ x : x \in \gf_{p^n} | F(x+a) - F(x) = b \}|$,
where $|S|$ denotes the size of a set $S$. Let $\Delta_F = \max_{a,b\in\gf_{p^n}, a\neq 0} \delta_F(a,b)$,
the function $F$ is called a \textit{differentially $\Delta_F$-uniform function}.
Particularly, when $p=2$, it is easy to see that the smallest value of $\Delta_F$ is $2$,
we call a function with such value of $\Delta_F$ \textit{almost perfect nonlinear} (APN);
and when $p$ is odd and $\Delta_F=1$, we call such functions \textit{perfect nonlinear} (PN) or \textit{planar}.
The multiset $ \mathcal{D}_F:=\{ \delta_F(a,b): a \in \gf_{p^n}, b\in\gf_{p^n}, a \neq 0 \}$ is
called the \textit{differential spectrum} of $F$.

Another commonly used parameter evaluating the nonlinearity of $F$ is as follows.
For the above function $F$, the \textit{Walsh transform} of $F$ is defined as
\begin{equation*}
	\mathcal{W}_F(a,b) := \sum_{x\in\ef_{p^n}}\zeta_p^{\tr(aF(x)+bx)}, \ \ a\in\ef_{p^n}^*=\ef_{p^n}\setminus\{0\}, b\in\ef_{p^n},
\end{equation*}
where $\zeta_p$ is a complex primitive $p$-th root of unity, and $\tr(x):=\sum_{i=0}^{n-1}x^{p^i}$
denotes the usual trace function from $\ef_{p^n}$ to $\ef_p$.
The multiset $\mathcal{W}_F:=\{\mathcal{W}_F(a,b): a\in\ef_{p^n}^*, b\in\ef_{p^n} \}$ is called the
\textit{Walsh spectrum} of $F$, and each number $\mathcal{W}_F(a,b)$ is called the
\textit{Walsh coefficient} at $(a,b)$.  Particularly, for a $p$-ary function $f:\ef_{p^n}\rightarrow\ef_p$, the Walsh transform of $f$
equals $\sum_{x\in\ef_{p^n}}\zeta_p^{f(x)+\tr(bx)}$, which is denoted by $\mathcal{W}_f(b)$. When $p=2$, the \textit{nonlinearity} of $F$ is defined as $\mbox{NL}(F)=2^{n-1}-\frac{1}{2}\max_{a,b\in\gf_{2^n}, a\neq 0}\mathcal{W}_F(a,b)$.

Finally, the  \textit{algebraic degree} of a function $ F(x)=\sum_{i=0}^{p^n-1} a_ix^i\in\gf_{p^n}[x] $, denoted by $\deg F$,  is defined as
the maximal $p$-weight of the exponent $i$ such that $a_i\neq 0$,
where  {the $p$-weight of an integer $i$} is the sum in ${\Bbb Z}$ of the coefficients in
its $p$-ary expression. Particularly, $F$ is called \textit{quadratic} if $\deg F=2$
(sometimes it is called \textit{Dembowski-Ostrom}, in brief, DO, if $F(x)=\sum_{i,j=0}^{n-1} a_{ij} x^{p^i+p^j}$);
and $F$ is called \textit{affine} if $\deg F\leq 1$.
Two functions $F_1, F_2 :\gf_{p^n}\rightarrow\gf_{p^n}$ are called \textit{extended affine} (EA)-
equivalent if there exist linear permutations $L_1, L_2$ and an affine function $A$ of $\gf_{p^n}$ such that
$F_2=L_1\circ F_1\circ L_2+A$. Furthermore, $F_1$ and $F_2$ are called \textit{Carlet-Charpin-Zinoviev}-equivalent, in brief, (CCZ)-equivalent,
if there exists an affine permutation $\phi$ of $\gf_{p^n}\times\gf_{p^n}$ such that
$\phi\left(\{(x, F_1(x)) : x\in\gf_{p^n}\}\right)= \{(x, F_2(x)) : x\in\gf_{p^n}\} $.
It is well known that CCZ-equivalence implies EA-equivalence, but not vice versa.
Interested readers may refer to \cite{Cbook} for more details.

\subsection{Group rings and character theory}

Group rings and character theory of finite fields are useful tools to study
functions defined on $\gf_{p^n}$ and their related combinatorial objects.
We briefly review some definitions and results. For more details on group rings and character theory,
please refer to \cite{passman} and \cite{lidl} respectively.
In the following, we assume ${\cal G}$ is a finite Abelian group.
The group algebra $\mathbb{C}[{\cal G}]$ consists of all formal sums
$\sum\limits_{g\in {\cal G}}a_gg, a_g\in\mathbb{C}$.
We define component-wise addition
\begin{eqnarray*}
  \sum\limits_{g\in {\cal G}}a_gg + \sum\limits_{g\in {\cal G}}b_gg = \sum\limits_{g\in {\cal G}}(a_g+b_g)g,
\end{eqnarray*}
and multiplication by
\begin{eqnarray*}
	\sum\limits_{g\in {\cal G}}a_gg\cdot \sum\limits_{g\in {\cal G}}b_gg = \sum\limits_{g\in {\cal G}}\left(\sum\limits_{h\in {\cal G}}a_h\cdot b_{gh^{-1}}\right)g.
\end{eqnarray*}

A subset $S$ of ${\cal G}$ is identified with the group ring element $\sum_{s\in S}s$ in
$\mathbb{C}[{\cal G}]$, which is also denoted by $S$ (by abuse of notation).
For $A=\sum_{g\in {\cal G}}a_gg$ in $\mathbb{C}[{\cal G}]$ and
$t$  an integer, we define
$A^{(t)}:=\sum_{g\in {\cal G}}a_gg^t$.

A character $\chi$ of a finite Abelian group ${\cal G}$ is a homomorphism
from ${\cal G}$ to $\mathbb{C}^*$. A character $\chi$ is called
\textit{principal} if $\chi(c)=1$ for all $c\in {\cal G}$, otherwise it is
called \textit{non-principal}. All characters form a group which is denoted
by $\widehat{{\cal G}}$. This {\em character group} $\widehat{{\cal G}}$ is isomorphic to ${\cal G}$.
We denote its unity by $\chi_0$ and the unity of ${\cal G}$ by $1_{\cal G}$. The action of any character $\chi$ is extended to $\ec[{\cal G}]$ by $\chi(\sum_{g\in {\cal G}} a_gg)=\sum_{g\in {\cal G}}a_g\chi(g)$.
The following well-known Inversion formula is very useful to us.
\begin{lemma}[Inversion Formula]
  \label{invfor}
  Let $D=\sum\limits_{g\in {\cal G}} a_gg\in\ec[{\cal G}]$. Then
  $
     a_g =\frac{1}{|{\cal G}|} \sum\limits_{\chi\in \widehat{{\cal G}}}\chi(D)\chi(g^{-1}).
  $
\end{lemma}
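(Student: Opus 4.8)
The plan is to expand the right-hand side and collapse it using the orthogonality relations for characters. First I would write $\chi(D)=\sum_{h\in{\cal G}}a_h\chi(h)$ by the linearity of the extended character action, so that
\[
\frac{1}{|{\cal G}|}\sum_{\chi\in\widehat{{\cal G}}}\chi(D)\chi(g^{-1})
=\frac{1}{|{\cal G}|}\sum_{h\in{\cal G}}a_h\sum_{\chi\in\widehat{{\cal G}}}\chi(hg^{-1}),
\]
where I have used that $\chi(h)\chi(g^{-1})=\chi(hg^{-1})$ because each $\chi$ is a group homomorphism, together with the freedom to interchange the two finite sums. Everything then reduces to evaluating the inner sum $\sum_{\chi\in\widehat{{\cal G}}}\chi(x)$ for a fixed element $x=hg^{-1}\in{\cal G}$.

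The key step is the orthogonality relation: $\sum_{\chi\in\widehat{{\cal G}}}\chi(x)$ equals $|{\cal G}|$ when $x=1_{\cal G}$, and equals $0$ otherwise. The first case is immediate, since every character sends $1_{\cal G}$ to $1$ and $|\widehat{{\cal G}}|=|{\cal G}|$. For $x\neq 1_{\cal G}$, I would invoke the isomorphism $\widehat{{\cal G}}\cong{\cal G}$ recorded above, which guarantees a character $\psi$ with $\psi(x)\neq 1$. Since left multiplication $\chi\mapsto\psi\chi$ permutes $\widehat{{\cal G}}$, reindexing gives $\psi(x)\sum_{\chi}\chi(x)=\sum_{\chi}(\psi\chi)(x)=\sum_{\chi}\chi(x)$, so $(\psi(x)-1)\sum_{\chi}\chi(x)=0$, and $\psi(x)\neq 1$ forces the sum to vanish.

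Substituting the two cases back into the double sum, only the term $h=g$ survives (the unique $h$ with $hg^{-1}=1_{\cal G}$), and it contributes $a_g\cdot|{\cal G}|$; the prefactor $1/|{\cal G}|$ then yields exactly $a_g$, as claimed. The only genuine content here is the orthogonality relation, so the main obstacle—if one wishes to be fully self-contained—is exhibiting the separating character $\psi$ for each $x\neq 1_{\cal G}$. I would simply take this from the stated isomorphism $\widehat{{\cal G}}\cong{\cal G}$ rather than rederive it from the structure theorem for finite Abelian groups, keeping the argument short.
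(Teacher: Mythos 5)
Your proof is correct, and it is the standard orthogonality argument for finite Abelian groups; the paper itself offers no proof of this lemma, stating it as a well-known fact and deferring to its character-theory references, so your argument is exactly the classical one being implicitly invoked. One small caveat: the isomorphism $\widehat{{\cal G}}\cong{\cal G}$ for ${\cal G}$ alone does not literally hand you a separating character $\psi$ with $\psi(x)\neq 1$; you should either extend a nontrivial character from the cyclic subgroup $\langle x\rangle$ (using divisibility of $\mathbb{C}^*$), or apply the cardinality statement $|\widehat{H}|=|H|$ to the quotient $H={\cal G}/\langle x\rangle$ and note that if every character were trivial on $x$ then $\widehat{{\cal G}}$ would embed into $\widehat{H}$, contradicting $|H|<|{\cal G}|$ --- a one-line repair that closes the only loose end in your write-up.
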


The following lemma is an application of the Inversion formula.
\begin{lemma}\label{result2}
  \label{groupringeq}
  Let $D_1, D_2\in\mathbb{C}[{\cal G}]$ be two group ring elements. Then  $D_1=D_2$
  if and only if $\chi(D_1)=\chi(D_2)$ for all characters of ${\cal G}$.
\end{lemma}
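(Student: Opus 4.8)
The plan is to establish the two implications separately, with the Inversion Formula (Lemma~\ref{invfor}) doing all the substantive work in the nontrivial direction.

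The forward implication is immediate, and I would dispatch it first. A character acts on $\ec[{\cal G}]$ through its linear extension $\chi(\sum_{g\in{\cal G}}a_gg)=\sum_{g\in{\cal G}}a_g\chi(g)$, so its value is a function of the group ring element itself. Hence if $D_1=D_2$ as elements of $\ec[{\cal G}]$, then $\chi(D_1)=\chi(D_2)$ for every $\chi\in\widehat{{\cal G}}$, with nothing further to check.

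For the converse I would write $D_1=\sum_{g\in{\cal G}}a_gg$ and $D_2=\sum_{g\in{\cal G}}b_gg$ and reduce the claim to matching coefficients, i.e.\ showing $a_g=b_g$ for every $g\in{\cal G}$. The key step is to invoke the Inversion Formula on each element, which expresses each coefficient as
\begin{equation*}
  a_g=\frac{1}{|{\cal G}|}\sum_{\chi\in\widehat{{\cal G}}}\chi(D_1)\chi(g^{-1}),
  \qquad
  b_g=\frac{1}{|{\cal G}|}\sum_{\chi\in\widehat{{\cal G}}}\chi(D_2)\chi(g^{-1}).
\end{equation*}
By hypothesis $\chi(D_1)=\chi(D_2)$ holds for every character, so these two expressions agree summand by summand; thus $a_g=b_g$, and since $g$ is arbitrary we conclude $D_1=D_2$.

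I do not expect any genuine obstacle here, since the Inversion Formula is precisely the assertion that the coefficients of a group ring element are recovered from its character values; once the character values coincide, the coefficients are forced to coincide. The only point worth keeping in mind is that the sum ranges over the full character group $\widehat{{\cal G}}$, whose order equals $|{\cal G}|$, so the normalization factor $1/|{\cal G}|$ is exactly what makes the inversion valid. An equivalent and slightly shorter route would be to apply the Inversion Formula directly to the single element $D_1-D_2$ and observe that all of its coefficients vanish, whence $D_1-D_2=0$.
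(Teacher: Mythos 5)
Your proof is correct and follows exactly the route the paper intends: the paper presents Lemma~\ref{groupringeq} as ``an application of the Inversion formula'' (Lemma~\ref{invfor}), which is precisely your argument of recovering each coefficient from the character values (or, equivalently, applying the inversion to $D_1-D_2$). Nothing is missing, and no further comparison is needed.
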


Finally, for the finite field $\mathbb{F}_{p^n}$, define $\chi_1:\ef_{p^n}\rightarrow\mathbb{C}$ as
$\chi_1(x):=\zeta_p^{\tr(x)}$ for all $x\in \mathbb{F}_{p^n}$. Then
$\chi_1$ is an additive character of $\mathbb{F}_{p^n}$. Moreover,
every additive character $\chi$ is of the form $\chi_b$ $(b\in
\mathbb{F}_{p^n})$, where $\chi_b$ is defined by
$\chi_b(x)=\chi_1(bx)$ for all $x\in \mathbb{F}_{p^n}$.

\subsection{Partial difference sets and strongly regular graphs}
\label{2.3}

Let ${\cal G}$ be a multiplicative group of order $v$. A $k$-subset $D$ of ${\cal G}$ is called a
$(v,k,\lambda,\mu)$ \textit{partial
difference set} (PDS) if each non-identity element in $D$ can be represented as $gh^{-1}\ (g,h\in D, g \ne h)$
in exactly $\lambda$ ways, and each non-identity element in ${\cal G}\backslash D$ can be represented as
$gh^{-1}\ (g,h\in D, g \ne h)$ in exactly $\mu$ ways. We shall always assume that the identity element
$1_{\cal G}$ of ${\cal G}$ is not contained in $D$. Particularly, $D$ is called \textit{regular} if, denoting $D^{(-1)}:=\{d^{-1};\, d\in D\}$, we have $D^{(-1)}=D$.
Using the group ring language, a $k$-subset
$D$ of ${\cal G}$ with $1_{\cal G}\not\in D$ is a $(v,k,\lambda,\mu)$-PDS if and only if the following
equation holds:
\begin{equation}
	\label{PDSequation}
	DD^{(-1)}=(k-\mu)1_{\cal G}+(\lambda-\mu)D+\mu {\cal G}.
\end{equation}
Particularly, for a PDS, when $\lambda=\mu$, this reduces to the so-called
difference set. A $k$-subset $D$ of ${\cal G}$ is called a $(v,k,\lambda)$ \textit{difference set} (DS)
if each nonidentity element of ${\cal G}$ can be represented in the form $d_1d_2^{-1}\ (d_1,d_2\in D,
d_1\neq d_2)$ in exactly $\lambda$ ways. Similarly, using group ring notation, the subset $D$
is a $(v,k,\lambda)$ difference set if and only if
\begin{equation*}
DD^{(-1)}=k 1_{\cal G}+\lambda({\cal G}-1_{\cal G}).
\end{equation*}
By Lemma \ref{groupringeq}, we have the following result to show a $k$-subset $D$ is a PDS.
This result can be found in \cite{ma-pds}.
\begin{lemma}[\cite{ma-pds}]
  \label{pdsiff}
  Let $D$ be a group ring element of $\mathbb{C}[{\cal G}]$ with $|D|=k$.
  Then
  \begin{enumerate}[(i)]
  \item $D$ is a $(v,k,\lambda)$ difference set if and only if $\chi(DD^{(-1)})=k-\lambda$ for all
  non-principal character $\chi$ and $k^2=(k-\lambda) + \lambda v$.
  \item $D$ is a $(v,k,\lambda,\mu)$ partial difference set if and only if,
  for any nonprincipal character $\chi$ of ${\cal G}$,
  \begin{equation}
  \chi(DD^{(-1)})=k-\mu+(\lambda-\mu)\chi(D)
  \end{equation}and $k^2=(k-\mu)+k(\lambda-\mu)+\mu v$.\\
  If $D$ is regular then $\chi(D)^2=\chi(DD^{(-1)})$ and the former condition is equivalent to:
  \begin{equation*}m
	  \chi(D)= \frac{(\lambda-\mu)\pm \sqrt{(\mu-\lambda)^2-4(\mu-k)}}{2}.
  \end{equation*}

  \end{enumerate}
\end{lemma}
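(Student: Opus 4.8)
The plan is to translate each combinatorial definition into its group-ring identity and then test that identity against the full character group, invoking Lemma~\ref{groupringeq}. Recall from Section~\ref{2.3} that $D$ is a $(v,k,\lambda)$ difference set precisely when $DD^{(-1)}=(k-\lambda)1_{\cal G}+\lambda{\cal G}$, and that $D$ is a $(v,k,\lambda,\mu)$ partial difference set precisely when the identity \eqref{PDSequation} holds. By Lemma~\ref{groupringeq}, such an identity in $\mathbb{C}[{\cal G}]$ holds if and only if both sides agree under every character $\chi\in\widehat{{\cal G}}$. The whole argument then reduces to evaluating both sides at the principal character and at the non-principal characters separately.

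First I would record the standard character values: $\chi_0(D)=k$, $\chi_0({\cal G})=v$, and $\chi_0(1_{\cal G})=1$, while for any non-principal $\chi$ one has $\sum_{g\in{\cal G}}\chi(g)=0$, i.e.\ $\chi({\cal G})=0$, together with $\chi(1_{\cal G})=1$. For part (i), applying $\chi_0$ to $DD^{(-1)}=(k-\lambda)1_{\cal G}+\lambda{\cal G}$ gives $k^2=(k-\lambda)+\lambda v$, which is exactly the stated numerical condition; applying a non-principal $\chi$ and using $\chi({\cal G})=0$ leaves $\chi(DD^{(-1)})=k-\lambda$. Since these two families of conditions are together equivalent to the group-ring identity by Lemma~\ref{groupringeq}, part (i) follows. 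Part (ii) is identical in structure: evaluating \eqref{PDSequation} at $\chi_0$ yields $k^2=(k-\mu)+k(\lambda-\mu)+\mu v$, and at a non-principal $\chi$ yields $\chi(DD^{(-1)})=k-\mu+(\lambda-\mu)\chi(D)$, because the term $(\lambda-\mu)D$ contributes $(\lambda-\mu)\chi(D)$ while $\mu{\cal G}$ vanishes.

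For the regular addendum I would use that $D^{(-1)}=D$ as group-ring elements, so that $\chi(DD^{(-1)})=\chi(D)\chi(D^{(-1)})=\chi(D)^2$ for every $\chi$. Substituting this into the non-principal condition of part (ii) turns it into the quadratic
\begin{equation*}
  \chi(D)^2-(\lambda-\mu)\chi(D)-(k-\mu)=0,
\end{equation*}
whose roots are $\chi(D)=\tfrac{1}{2}\bigl((\lambda-\mu)\pm\sqrt{(\lambda-\mu)^2+4(k-\mu)}\bigr)$; rewriting $(\lambda-\mu)^2=(\mu-\lambda)^2$ and $4(k-\mu)=-4(\mu-k)$ puts this in the displayed form. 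The computation is elementary throughout, so there is no genuine obstacle; the only points that demand care are keeping the principal character (which produces the scalar feasibility relation on $v,k,\lambda,\mu$) strictly separate from the non-principal ones (which produce the spectral conditions), and correctly invoking the orthogonality relation $\chi({\cal G})=0$ for non-principal $\chi$. The regular case additionally relies on recognizing that $D^{(-1)}=D$ forces each $\chi(D)$ to be a real root of the above quadratic.
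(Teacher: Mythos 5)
Your proof is correct and follows precisely the route the paper itself indicates: the paper gives no written proof for this lemma (it cites \cite{ma-pds} and remarks that the result follows ``by Lemma~\ref{groupringeq}''), and your argument—translating the difference-set and PDS definitions into the group-ring identities of Section~\ref{2.3}, then testing them against the principal character (yielding the counting relations $k^2=(k-\lambda)+\lambda v$, resp.\ $k^2=(k-\mu)+k(\lambda-\mu)+\mu v$) and the non-principal characters (yielding the spectral conditions), with Lemma~\ref{groupringeq} supplying the equivalence—is exactly that intended proof. The regular addendum is also handled correctly, including the observation that $D^{(-1)}=D$ makes $\chi(D)=\overline{\chi(D)}$ real, so it is a real root of the quadratic $\chi(D)^2-(\lambda-\mu)\chi(D)-(k-\mu)=0$.
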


Combinatorial objects associated with partial difference sets are
strongly regular graphs. A graph $\Gamma$ with $v$ vertices is
called a $(v,k,\lambda,\mu)$ \textit{strongly regular graph} (SRG)
if each vertex is adjacent to exactly $k$ other vertices, any two
adjacent vertices have exactly $\lambda$ common neighbours, and any two non-adjacent vertices have exactly $\mu$
common neighbours.

Given a group ${\cal G}$ of order $v$ and a $k$-subset $D$ of ${\cal G}$ with $1_{\cal G}\not\in D$
and $D^{(-1)}=D$, the
graph $\Gamma=(V, E)$ defined as follows is called
the \textit{Cayley graph} generated by $D$ in ${\cal G}$:
\begin{enumerate}[(1)]
	\item The vertex set V is ${\cal G}$;
	\item Two vertices $g,h$ are joined by an edge if and only if $gh^{-1}\in D$.
\end{enumerate}
The following lemma points out the relationship between SRGs and PDSs.
\begin{lemma}[\cite{ma-pds}]
	\label{pdsandsrg} Let $\Gamma$ be the Cayley graph generated by a
	$k$-subset $D$ of a multiplicative group ${\cal G}$ with order
	$v$. Then $\Gamma$ is a $(v,k,\lambda,\mu)$ strongly regular graph
	if and only if $D$ is a $(v,k,\lambda,\mu)$-PDS with $1_{\cal G}\not\in
	D$ and $D^{(-1)}=D$.
\end{lemma}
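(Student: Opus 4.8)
The plan is to translate the two combinatorial requirements defining a strongly regular graph---vertex regularity and the common-neighbour counts---into statements about coefficients of the group ring element $DD^{(-1)}$, and then to recognize those statements as exactly the defining equation~\eqref{PDSequation} of a partial difference set. Since the Cayley graph is only defined when $1_{\cal G}\notin D$ and $D^{(-1)}=D$, these two hypotheses are already in force on both sides: the condition $D^{(-1)}=D$ is precisely what makes the adjacency relation symmetric (so that $\Gamma$ is an undirected graph), while $1_{\cal G}\notin D$ guarantees there are no loops. I would first record that $\Gamma$ is automatically $k$-regular, since the neighbours of a vertex $g$ are the $h$ with $gh^{-1}\in D$, that is, $h=d^{-1}g$ for $d\in D$, and these are $k=|D|$ distinct vertices.

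The heart of the argument is a counting bijection. For two distinct vertices $g,h$, set $a=gh^{-1}\neq 1_{\cal G}$. A common neighbour is a vertex $x$ with $gx^{-1}\in D$ and $hx^{-1}\in D$; writing $d_1=gx^{-1}$ and $d_2=hx^{-1}$ yields $d_1d_2^{-1}=gh^{-1}=a$ with $d_1,d_2\in D$. Conversely, any pair $(d_1,d_2)\in D\times D$ with $d_1d_2^{-1}=a$ determines the unique vertex $x=d_1^{-1}g$, and one checks directly that $gx^{-1}=d_1$ and $hx^{-1}=d_2$. Hence the number of common neighbours of $g$ and $h$ equals the number of representations of $a=gh^{-1}$ as $d_1d_2^{-1}$ with $d_1,d_2\in D$, which is exactly the coefficient of $gh^{-1}$ in $DD^{(-1)}$. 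Because $a\neq 1_{\cal G}$, the pairs with $d_1=d_2$ are automatically excluded, matching the $g\neq h$ clause in the PDS definition.

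With this in hand the equivalence follows by comparing coefficients. Suppose first that $D$ is a $(v,k,\lambda,\mu)$-PDS, so equation~\eqref{PDSequation} holds. Reading off the coefficient of a non-identity element $a$ on the right-hand side $(k-\mu)1_{\cal G}+(\lambda-\mu)D+\mu\,{\cal G}$ gives $\lambda$ when $a\in D$ and $\mu$ when $a\in{\cal G}\setminus D$. Translating through the bijection: if $g,h$ are adjacent then $gh^{-1}\in D$ and they share $\lambda$ common neighbours, while if they are non-adjacent and distinct then $gh^{-1}\in{\cal G}\setminus D$ and they share $\mu$ common neighbours; together with $k$-regularity this is exactly the assertion that $\Gamma$ is a $(v,k,\lambda,\mu)$-SRG. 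Conversely, if $\Gamma$ is such an SRG, the common-neighbour counts force the coefficient of every non-identity $a$ in $DD^{(-1)}$ to be $\lambda$ for $a\in D$ and $\mu$ for $a\in{\cal G}\setminus D$, while the coefficient of $1_{\cal G}$ is $k$ (the $k$ pairs with $d_1=d_2$); this is precisely equation~\eqref{PDSequation}, so $D$ is a regular PDS. Equivalently, one may verify $\chi(DD^{(-1)})=\chi\bigl((k-\mu)1_{\cal G}+(\lambda-\mu)D+\mu{\cal G}\bigr)$ for all characters and invoke Lemma~\ref{groupringeq}.

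I do not expect a genuine obstacle here, as the result is classical; the only point requiring care is bookkeeping the inverses so that the combinatorial common-neighbour count lines up with the group-ring product $DD^{(-1)}$ rather than $D^2$ or $D^{(-1)}D$, and remembering to isolate the identity coefficient separately since $1_{\cal G}\notin D$.
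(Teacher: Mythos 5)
Your proof is correct. The paper itself states Lemma~\ref{pdsandsrg} without proof, citing \cite{ma-pds}; your argument---identifying the number of common neighbours of $g$ and $h$ with the coefficient of $gh^{-1}$ in $DD^{(-1)}$, handling the identity coefficient separately, and matching the result against equation~(\ref{PDSequation})---is precisely the standard argument behind that reference, with the inverse bookkeeping done correctly, so there is nothing in the paper for it to diverge from.
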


Strongly regular graphs (or partial difference sets) with
parameters $(n^2,r(n+\varepsilon),-\varepsilon n+r^2+3\varepsilon
r,r^2+\varepsilon r)$ are called of \textit{Latin Square type}  if
$\varepsilon=-1$, and  of \textit{negative Latin Square type} if
$\varepsilon=1$. There are many constructions of SRGs of Latin
square type (see Lemma \ref{PCP} in Section 6), but only a few constructions of negative Latin square
type are known. We will show that such graphs may be obtained from quadratic 
zero-difference $\delta$-balanced functions.

\section{A new approach for constructing differentially uniform quadratic functions}

In this section, we first discuss the properties of differentially $\delta$-vanishing (defined below) functions,
and then use them to construct differentially $\delta$-uniform functions.

\subsection{A new notion related to differential uniformity, and its properties}
It is well-known that if a function $F:\ef_{p^n}\rightarrow\ef_{p^n}$ is quadratic, that is,
if all of its derivatives $F(x+a)-F(x)$ are affine,
then it is differentially $\delta$-uniform if and only if,
for every $a\neq 0$ in $\ef_{p^n}$, the related homogeneous linear equation $F(x+a)-F(x)=F(a)-F(0)$
has at most $\delta$ solutions. We have then:

\begin{proposition}
	\label{p1}
	Let $F:\ef_{p^n}\rightarrow\ef_{p^n}$ be a quadratic function.
	Then $F$ is differentially $\delta$-uniform if and only if,
	for every $a\neq 0$, there exists $b\in \ef_{p^n}$ such that the
	equation $F(x+a)-F(x)=b$ has at least one and at most $\delta$ solutions in $\ef_{p^n}$.
\end{proposition}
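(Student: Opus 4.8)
The plan is to exploit the affine structure of the derivatives, which is exactly what quadraticity provides. Fix $a\neq 0$ and write $D_aF(x):=F(x+a)-F(x)$. By assumption $D_aF$ is affine, so I decompose it as $D_aF(x)=\ell_a(x)+c_a$, where $\ell_a(x):=D_aF(x)-D_aF(0)$ is $\ef_p$-linear and $c_a:=D_aF(0)=F(a)-F(0)$ is the constant term. The whole argument rests on understanding the equation $D_aF(x)=b$ through $\ell_a$.

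The key observation is that the solution set of $D_aF(x)=b$, equivalently $\ell_a(x)=b-c_a$, is either empty or a coset of $\ker\ell_a$. Hence $\delta_F(a,b)\in\{0,\,|\ker\ell_a|\}$ for every $b$, and the nonzero value $|\ker\ell_a|$ is attained exactly when $b-c_a\in\mathrm{Im}(\ell_a)$. In particular it is attained at $b=c_a=F(a)-F(0)$, since $x=0$ solves $\ell_a(x)=0$. Consequently $\max_b\delta_F(a,b)=|\ker\ell_a|$, and therefore $\Delta_F=\max_{a\neq 0}|\ker\ell_a|$. This already dovetails with the well-known criterion recalled just before the statement, namely that $F$ is differentially $\delta$-uniform if and only if the equation $F(x+a)-F(x)=F(a)-F(0)$ has at most $\delta$ solutions for every $a\neq 0$.

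With this in hand both implications are short. For the forward direction, I assume $F$ is differentially $\delta$-uniform and, for each $a\neq 0$, take $b=c_a=F(a)-F(0)$: the equation $F(x+a)-F(x)=b$ then has exactly $|\ker\ell_a|\le\delta$ solutions, and at least one because $x=0$ is a solution, so the required $b$ exists. For the converse, suppose that for every $a\neq 0$ some $b$ makes $F(x+a)-F(x)=b$ have between one and $\delta$ solutions. Because the equation has \emph{at least one} solution, $b-c_a\in\mathrm{Im}(\ell_a)$, so the number of solutions equals $|\ker\ell_a|$; hence $|\ker\ell_a|\le\delta$. As $\Delta_F=\max_{a\neq 0}|\ker\ell_a|\le\delta$, the function is differentially $\delta$-uniform.

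The only genuinely delicate point is the role of the ``at least one solution'' clause in the converse. Without it one could always select a $b$ outside $c_a+\mathrm{Im}(\ell_a)$, for which $\delta_F(a,b)=0\le\delta$ trivially, and this would carry no information about $|\ker\ell_a|$. Requiring a solution to exist forces the count to be the full kernel size $|\ker\ell_a|$, which is precisely the quantity controlling differential uniformity; so I expect verifying this positivity subtlety to be the main thing to get right, the remaining steps being routine coset counting.
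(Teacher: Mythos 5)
Your proof is correct and takes essentially the same route as the paper's: both exploit that the derivative $F(x+a)-F(x)$ is affine, so every nonempty solution set of $F(x+a)-F(x)=b$ is a coset of the kernel of its linear part and hence all attained values of $b$ give the same count, and both use $b=F(a)-F(0)$ (solved by $x=0$) for the necessity direction. Your kernel/coset bookkeeping simply spells out what the paper compresses into the remark that the two equations ``have the same linear part'' and ``both have solutions.''
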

Indeed, the condition is clearly necessary (take $b=F(a)-F(0)$),
and it is also sufficient since the two linear equations
$F(x+a)-F(x)=F(a)-F(0)$ and $F(x+a)-F(x)=b$ have then the same number of solutions, since we know that they have both solutions and that they have the same linear part.

An interesting case of application is when $b$ can be taken equal to 0 for every $a\ne 0$. 

\begin{definition}
Let $\delta$ be some positive integer. A function $F$ is called
	{\em differentially $\delta$-vanishing}
	if for every $a\neq 0$, the equation $F(x+a)-F(x)=0$ has at least one and at most $\delta$ solutions in $\ef_{p^n}$; it is called  {\em zero-difference $\delta$-balanced} if the equation has $\delta$ solutions for every $a\neq 0$  in $\ef_{p^n}$.
\end{definition}

The notion of differentially $\delta$-vanishing function is new; the closely related notion of zero-difference
$\delta$-balanced function has been introduced in \cite{Ding-Tan}.

\begin{remark}\label{r1} Of course, zero-difference $\delta$-balanced implies differentially $\delta$-vanishing.
\end{remark}

For $\delta=1$ (in odd characteristic) and $\delta=2$ in characteristic 2,
the two notions ``zero-difference $\delta$-balanced'' and ``differentially
$\delta$-vanishing'' coincide.\\

According to Proposition \ref{p1}, we have:
\begin{proposition}\label{p2}
	Any quadratic differentially $\delta$-vanishing function is differentially $\delta$-uniform.
\end{proposition}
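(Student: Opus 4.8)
The plan is to deduce Proposition~\ref{p2} immediately from Proposition~\ref{p1}. The whole point is that the defining condition of a differentially $\delta$-vanishing function is literally the hypothesis of Proposition~\ref{p1} read with the single choice $b=0$, so the implication should be essentially formal once Proposition~\ref{p1} is in hand.

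First I would unwind the two definitions. Since $F$ is assumed quadratic, Proposition~\ref{p1} is applicable to it. Being differentially $\delta$-vanishing means that for every $a\neq 0$ the equation $F(x+a)-F(x)=0$ has at least one and at most $\delta$ solutions in $\ef_{p^n}$. Thus, for each $a\neq 0$, the value $b=0$ is a witness for the existential clause of Proposition~\ref{p1}: there is a $b$, namely $b=0$, for which $F(x+a)-F(x)=b$ has between one and $\delta$ solutions. Proposition~\ref{p1} then yields at once that $F$ is differentially $\delta$-uniform, which is the claim.

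If one prefers an argument that does not cite Proposition~\ref{p1} but reproves its mechanism, I would recall the structural fact behind it. For quadratic $F$ the derivative $D_aF(x):=F(x+a)-F(x)$ is affine in $x$, so it can be written $D_aF(x)=L_a(x)+\bigl(F(a)-F(0)\bigr)$ with $L_a$ an $\ef_p$-linear map (its linearized part) and $F(a)-F(0)=D_aF(0)$ its constant term. Hence, for each fixed $a\neq 0$, the equation $D_aF(x)=b$ is affine, so its solution set is either empty or a coset of $\ker L_a$; consequently $\delta_F(a,b)\in\{0,\,|\ker L_a|\}$ for every $b$. The differentially $\delta$-vanishing hypothesis says $\delta_F(a,0)\geq 1$, which forces $\delta_F(a,0)=|\ker L_a|$, and at the same time $\delta_F(a,0)\leq\delta$, whence $|\ker L_a|\leq\delta$. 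Taking the maximum over $a\neq 0$ gives $\Delta_F=\max_{a\neq 0}|\ker L_a|\leq\delta$, as asserted.

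There is no genuine obstacle here: all the force of the statement lives in Proposition~\ref{p1}, that is, in the observation that for quadratic functions every solvable derivative equation has the same number of solutions, namely $|\ker L_a|$. Once that is granted, the only thing to notice is that the differentially $\delta$-vanishing property supplies a uniform witness $b=0$, after which the conclusion is purely formal. The single point worth double-checking is that $b=0$ really lies in the admissible range, i.e.\ that the hypothesis truly guarantees a nonempty solution set (the ``at least one solution'' clause), so that we land on the nonzero branch $|\ker L_a|$ and not on the empty branch; this is exactly what rules out the possibility that $\delta_F(a,0)=0$ while some other $b$ forces a large value.
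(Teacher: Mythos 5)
Your proof is correct and follows exactly the paper's route: the paper derives Proposition~\ref{p2} directly from Proposition~\ref{p1} (``According to Proposition~\ref{p1}, we have:''), using precisely your observation that the differentially $\delta$-vanishing hypothesis supplies the witness $b=0$ for every $a\neq 0$. Your second, self-contained argument via the kernel of the linearized derivative $L_a$ correctly reproves the mechanism underlying Proposition~\ref{p1}, but it is the same idea rather than a genuinely different approach.
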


We shall see that many known quadratic differentially $\delta$-uniform (in particular, PN and APN)
functions, are in fact zero-difference $\delta$-balanced or differentially $\delta$-vanishing (with $\delta=1$ and 2).
The notion of zero-difference $\delta$-balanced and differentially vanishing functions gives a simpler reason
why these functions are differentially $\delta$-uniform. It will also lead to new constructions.

We give now a construction of functions $F$ which are differentially vanishing, and more precisely zero-difference balanced.

\begin{proposition}
	Let $d$ be any positive integer and $e=gcd(d,p^n-1)$.
	The function $x^d$ on $\ef_{p^n}$ is zero-difference $(e-1)$-balanced.
\end{proposition}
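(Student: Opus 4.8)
The plan is to reduce the equation $(x+a)^d = x^d$ to a count of roots of unity in the multiplicative group. First I would dispose of the two degenerate values of $x$. Since $a \neq 0$, setting $x = 0$ gives $a^d = 0$ and setting $x = -a$ gives $(-a)^d = 0$, both impossible; hence every solution of $(x+a)^d = x^d$ automatically satisfies $x \neq 0$ and $x + a \neq 0$. This is what allows the division step below to be harmless.

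Next, for any such admissible $x$, I would divide the equation by $x^d$ and introduce the new variable $y := (x+a)/x = 1 + a/x \in \gf_{p^n}^*$. The equation $(x+a)^d = x^d$ is then exactly equivalent to $y^d = 1$. Because $\gf_{p^n}^*$ is cyclic of order $p^n - 1$, the number of $d$-th roots of unity in it, i.e.\ the number of solutions of $y^d = 1$, equals $\gcd(d, p^n-1) = e$. The heart of the argument is to show that $x \mapsto y = (x+a)/x$ is a bijection from the solution set onto $\{\, y \in \gf_{p^n}^* : y^d = 1 \,\} \setminus \{1\}$, which has $e-1$ elements. The inverse map is explicit: from $y = 1 + a/x$ one recovers $x = a/(y-1)$, which is well-defined and nonzero precisely when $y \neq 1$, and then $x + a = a y/(y-1) \neq 0$ since $y \neq 0$. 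The value $y = 1$ would force $a/x = 0$, i.e.\ $a = 0$, contradicting $a \neq 0$; so $y=1$ is the unique $d$-th root of unity that does not correspond to a genuine $x$, and it must be discarded. Consequently the admissible solutions $x$ are in one-to-one correspondence with the $e-1$ nontrivial $d$-th roots of unity, and the equation has exactly $e-1$ solutions for every $a \neq 0$, which is the claimed zero-difference $(e-1)$-balancedness.

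The only real obstacle is the careful bookkeeping of the boundary cases $x = 0$, $x = -a$, and $y = 1$, which ensures that neither the division by $x^d$ nor the passage back to $x$ introduces or loses solutions. Once these are handled, the correspondence with roots of unity is immediate and the count $e-1$ follows at once from the cyclicity of $\gf_{p^n}^*$; no quadratic hypothesis on the function is needed here, since $x^d$ is treated directly.
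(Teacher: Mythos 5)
Your proof is correct and takes essentially the same approach as the paper's: both parametrize the solutions of $(x+a)^d=x^d$ by the nontrivial roots of unity $w=(x+a)/x$ (the paper reduces to $e$-th roots via $\gcd(d,p^n-1)=e$, you count $d$-th roots of unity directly, which is the same set), discard $w=1$, and recover each solution as $x=a/(w-1)$. Your version is merely more explicit about the boundary cases $x=0$ and $x=-a$, which the paper's argument handles implicitly.
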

\begin{proof}
For every $x\in \ef_{p^n}$, we have  $(x+a)^d=x^d$ if and only if $(x+a)^e=x^e$,
which is equivalent to $x+a=wx$ for some $e$-th root of unity $w$ in $\ef_{p^n}$.
There are $e$ such $w$. One of them is $w=1$; the equation $x+a=wx$ is then impossible.
Any other $e$-th root of unity $w$ gives one distinct solution $x=\frac {a}{w-1}$.
\end{proof}

Of course, composing a differentially $\delta$-vanishing function $F$ on the left by a
function which is injective on the image set $\{F(x),x\in \ef_{p^n}\}$ of $F$ gives a
differentially $\delta$-vanishing function.
\begin{corollary}\label{dto1map}
        Let $d$ be any positive integer and $G$ a function from $\gf_{p^n}$ to $\gf_{p^n}$ such that $G(x^d)$ is quadratic.
	Let $e=gcd(d,p^n-1)$ and $C_d=\{x^d : x\in \gf_{p^n}\}=C_e$.
        Then function $F(x)=G(x^d)$ is a differentially $(e-1)$-uniform function if and only if
	the restriction $G\mid _{C_d}$ of $G$ to $C_d$ is an injection.
\end{corollary}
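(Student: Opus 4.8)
The plan is to prove both implications of the equivalence, leaning on the preceding Proposition (that $x\mapsto x^d$ is zero-difference $(e-1)$-balanced, i.e.\ $(x+a)^d=x^d$ has exactly $e-1$ solutions for each $a\neq 0$), on Remark \ref{r1}, and on Proposition \ref{p2}. The single fact driving everything is that $G\bigl((x+a)^d\bigr)=G(x^d)$ whenever $(x+a)^d=x^d$, so the $e-1$ solutions of $(x+a)^d=x^d$ are \emph{always} solutions of $F(x+a)-F(x)=0$, regardless of $G$. Injectivity of $G\mid_{C_d}$ is exactly what prevents any \emph{extra} solution from appearing, and failure of injectivity is exactly what forces one.

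For the direction $(\Leftarrow)$, I would assume $G\mid_{C_d}$ is injective. Since $(x+a)^d$ and $x^d$ both lie in $C_d$, injectivity gives $F(x+a)-F(x)=0 \iff (x+a)^d=x^d$, so by the preceding Proposition the derivative of $F$ has exactly $e-1$ zeros for every $a\neq 0$; that is, $F$ is zero-difference $(e-1)$-balanced, hence differentially $(e-1)$-vanishing by Remark \ref{r1}. As $F=G(x^d)$ is quadratic by hypothesis, Proposition \ref{p2} then yields that $F$ is differentially $(e-1)$-uniform. This is the only place where quadraticity enters.

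For the direction $(\Rightarrow)$ I would argue by contraposition. Suppose $G\mid_{C_d}$ is not injective, so there are $u\neq v$ in $C_d$ with $G(u)=G(v)$; write $u=x_0^d$, $v=y_0^d$ and set $a=y_0-x_0$. Because $u\neq v$ forces $x_0^d\neq y_0^d$ and hence $x_0\neq y_0$, we have $a\neq 0$. Then $F(x_0+a)=G(y_0^d)=G(v)=G(u)=G(x_0^d)=F(x_0)$, so $x_0$ solves $F(x+a)-F(x)=0$; yet $(x_0+a)^d=y_0^d=v\neq u=x_0^d$, so $x_0$ is \emph{not} among the $e-1$ solutions of $(x+a)^d=x^d$. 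Since $x_0$ is distinct from each of those $e-1$ solutions, we obtain at least $e$ distinct zeros, i.e.\ $\delta_F(a,0)\geq e$, whence $\Delta_F\geq e>e-1$ and $F$ is not differentially $(e-1)$-uniform.

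The one thing to get right — and the only real obstacle — is this last step: one must certify that the witness $x_0$ is a genuinely new solution rather than one of the $e-1$ trivial ones, which is precisely what $(x_0+a)^d\neq x_0^d$ guarantees. It is worth noting the asymmetry of the two directions: the converse $(\Rightarrow)$ needs only that a zero-difference count exceeding $e-1$ defeats $(e-1)$-uniformity and uses no structural hypothesis on $G(x^d)$, whereas $(\Leftarrow)$ passes through Proposition \ref{p2} and so genuinely requires $F$ to be quadratic. The degenerate case $e=1$, where $x\mapsto x^d$ is a permutation, can be treated separately or excluded.
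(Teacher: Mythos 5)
Your proof is correct and follows essentially the same route as the paper, which derives the corollary (without a written proof) from the preceding Proposition that $x^d$ is zero-difference $(e-1)$-balanced, the remark on left-composition with a map injective on the image set, and Proposition \ref{p2} for the quadratic case. Your explicit contrapositive argument for the ``only if'' direction --- including the key verification that the witness $x_0$ satisfies $(x_0+a)^d\neq x_0^d$ and is therefore a genuinely additional zero of the derivative --- correctly fills in the step the paper leaves implicit, and your caveat about the degenerate case $e=1$ matches the paper's tacit assumption that $e=\delta+1\geq 2$.
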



\begin{example}
As an application, we revisit the APNness of the well-known APN function
$F(x)=x^3+\tr(x^9)$ over $\gf_{2^n}$, in the case where $n$ is even. We have $F(x)=G(x^3)$, where $G(x)=x+\tr(x^3)$.
Function $G$ is a permutation polynomial. Indeed, if $G(x)=G(x+a) $
for some nonzero $a$, we get $a+\tr(a^3+a^2x+ax^2)=0$.
Since $a\neq 0$ and $a\in \ef_2$, then $a=1$, and therefore $a+\tr(a^3+a^2x+ax^2)=1$, a contradiction.
Therefore, $G$ is a permutation and then $F$ is an APN function.
\end{example}

Several remarks on Corollary \ref{dto1map} can be made:
\begin{remark}
	\begin{enumerate}[(1)]
		\item When $d=2$ and $p$ is odd, if the function $F(x)=G(x^2)$ is quadratic and if
		      $G\mid _{C_2}$ is an injection,  then $F$ is a perfect nonlinear function.
	              It is further proven in \cite{gohar-pott, weng} that all quadratic
		      PNs are EA-equivalent to functions of the form $G(x^2)$ where $G\mid_{C_2}$ is an injection.
                \item When $d=3$, the function $F(x)=G(x^3)$, if it is quadratic and if $G\mid _{C_3}$
 	              is an injection, is an APN function. The condition ``quadratic`` seems necessary; there are many examples of permutations $G$ such that $G(x^3)$, non-quadratic, is not APN; for instance there are many examples of power functions $x^d$
	    over $\gf_{2^n}$, $n$ even,         where $d$ is co-prime with $2^n-1$ and such that $x^{3d}$ is not APN.
	              \item However, there are also examples of non-quadratic APN functions which are zero-difference $2$-balanced. 	              This is the case of all APN power functions over $\gf_{2^n}$, $n$ even,  since Dobbertin has shown that for these functions $x^d$, we have $gcd(d,2^n-1)=3$ (see a recall of his proof in \cite{Cbook}).  The Kasami functions $x^{2^{2k}-2^k+1}$, where $k$ is co-prime with $n$ and  the Dobbertin function $x^{2^{\frac{4n}{5}}+2^{\frac{3n}{5}}+2^{\frac{2n}{5}}+2^{\frac{n}{5}}-1}$ where $n$ is divisible by $10$ are such functions. This fact has also been observed in the bottom of \cite[page 8]{carlet-dcc2010}.
	              
	              \item There
	              are many quadratic APN functions which are not of the form $G(x^3)$
	              (and which are not differentially 2-vanishing); for instance,
	              by checking the $2,275$ quadratic APN functions listed in \cite{yu-APN} and \cite{tan-APN},
                      we found that only $18$ of them are of the form $G(x^3)$. Note that every function of
	              the form $G(x)=\sum_{i} a_ix^{\frac{2^{k_i}+2^{l_i}}{3}}$ where, for every $i$,
	              $k_i$ is an odd number and $l_i$ is an even number (and then $\frac{2^{k_i}+2^{l_i}}{3}$ is an integer),
	              is such that $G(x^3)$ is quadratic.
	
	\end{enumerate}
\end{remark}

\subsection{Further properties}
In Corollary \ref{dto1map}, function $G$ does not need to be bijective; it just needs to be injective 
on $C_d$. But given such $G$, we can always find a permutation $G'$ on $\gf_{p^n}$
such that $G'\mid _{C_d}=G\mid _{C_d}$ and therefore $F(x)=G'(x^d)$.
Indeed, any function coinciding with $G$ on $C_d$ and mapping injectively the
complement of $C_d$ onto the complement of $G(C_d)$ can be taken for $G'$. In more precise setting:

\begin{proposition}
	Let $d$ be a positive integer, $e=gcd(d,p^n-1)$ and $G$ a function defined on
	$\gf_{p^n}$ such that $G\mid_{C_d}$ is an injection. Let $h:\gf_{p^n}\rightarrow\gf_p$
	be the characteristic function of $C_d$:
	\begin{equation*}
		h(x)= 1-\left( x^{\frac{2(p^n-1)}{e}} - x^{\frac{p^n-1}{e}}\right)^{p^n-1},
	\end{equation*}
        (satisfying $h(x)=1$ if $x\in C_d=C_e$, and $h(x)=0$ otherwise).
        There exists a function $T(x)$ on $\gf_{p^n}$
	such that the function $G'$ defined by
	\begin{equation}
		\label{relatedpp}
		G'(x)=h(x)G(x)+(1-h(x))T(x)
        \end{equation}
	is a permutation and satisfies $G'\mid _{C_d}=G\mid _{C_d}$, that is, $G(x^d)=G'(x^d)$, for all $x$.
\end{proposition}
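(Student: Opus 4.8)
The plan is to observe that formula \eqref{relatedpp} already pins down $G'$ completely once $T$ is chosen: since $h$ is (by the asserted property, which one checks immediately from $x^{(p^n-1)/e}\in\{0,1\}$ exactly on $C_e$) the characteristic function of $C_d=C_e$, for $x\in C_d$ we have $h(x)=1$ and $G'(x)=G(x)$, while for $x\notin C_d$ we have $h(x)=0$ and $G'(x)=T(x)$. Thus the requirement $G'\mid_{C_d}=G\mid_{C_d}$ holds automatically, and since $x^d\in C_d$ for every $x$, this also yields $G(x^d)=G'(x^d)$. The whole content of the proposition is therefore to exhibit a single function $T$ making $G'$ a permutation; because the values of $T$ on $C_d$ are annihilated by the factor $1-h(x)$, all the freedom lies in prescribing $T$ on the complement $\gf_{p^n}\setminus C_d$.

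First I would do the cardinality bookkeeping. Since $e=\gcd(d,p^n-1)$ divides $p^n-1$, the nonzero $e$-th powers form a subgroup of $\gf_{p^n}^*$ of order $(p^n-1)/e$, so, adding $0$, we get $|C_d|=|C_e|=(p^n-1)/e+1$. Because $G\mid_{C_d}$ is an injection by hypothesis, its image $G(C_d)$ has exactly the same cardinality, whence
\[
    |\gf_{p^n}\setminus G(C_d)|=p^n-\tfrac{p^n-1}{e}-1=|\gf_{p^n}\setminus C_d|.
\]
This common value equals $(p^n-1)(e-1)/e$, and the equality of these two complementary cardinalities is the crux of the argument.

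Given this, I would fix any bijection $\phi:\gf_{p^n}\setminus C_d\to\gf_{p^n}\setminus G(C_d)$ (which exists precisely because of the cardinality match) and define $T$ to be any function on $\gf_{p^n}$ whose restriction to $\gf_{p^n}\setminus C_d$ equals $\phi$, its values on $C_d$ being arbitrary; recall that every map $\gf_{p^n}\to\gf_{p^n}$ is a polynomial, so such a $T$ is a legitimate function. It then remains to verify that $G'$ is a permutation. Writing $\gf_{p^n}=C_d\sqcup(\gf_{p^n}\setminus C_d)$, the map $G'$ sends $C_d$ bijectively onto $G(C_d)$ via $G$ and sends $\gf_{p^n}\setminus C_d$ bijectively onto $\gf_{p^n}\setminus G(C_d)$ via $\phi$. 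Since $G(C_d)$ and $\gf_{p^n}\setminus G(C_d)$ partition $\gf_{p^n}$, the map $G'$ is a bijection of $\gf_{p^n}$, i.e.\ a permutation, which completes the proof.

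I expect no serious obstacle: once injectivity of $G\mid_{C_d}$ is translated into the equality of the two complementary cardinalities, the existence of $T$ follows by a direct matching. The only mild subtlety is to make explicit that the values of $T$ on $C_d$ are irrelevant, so that $T$ may be chosen freely there while its prescribed values on the complement are what turn $G'$ into a bijection.
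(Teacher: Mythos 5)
Your proof is correct and follows essentially the same route as the paper's: check that $h$ is the characteristic function of $C_d$, then choose $T$ so that its restriction to $\gf_{p^n}\setminus C_d$ is a bijection onto $\gf_{p^n}\setminus G(C_d)$, and conclude that $G'$ glues two bijections along the partition $C_d\sqcup(\gf_{p^n}\setminus C_d)$. The only difference is that you spell out the cardinality count $|\gf_{p^n}\setminus C_d|=|\gf_{p^n}\setminus G(C_d)|$, which the paper leaves implicit behind the word ``clearly''; this is a harmless (indeed welcome) elaboration, not a new idea.
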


\begin{proof}
First, we show that $h$ is indeed the characteristic function of $C_d$: if $x\in C_d$, that is,
$x\in C_e$, then $x^{\frac{p^n-1}{e}}\in \{0,1\}$ implies
$x^{\frac{2(p^n-1)}{e}} - x^{\frac{p^n-1}{e}}=0$ and then $h(x)=1$;
and if $x\not\in C_d$, that is, $x\not\in C_e$, then $x^{\frac{p^n-1}{e}}\not\in \{0,1\}$
implies $x^{\frac{2(p^n-1)}{e}} - x^{\frac{p^n-1}{e}}\neq 0$ and then $h(x)=0$.
Next, clearly there exists a function $T:\gf_{p^n}\rightarrow\gf_{p^n}$ whose restriction
$T|_{\gf_{p^n}\setminus C_d}$ is a bijection from
$\gf_{p^n}\setminus C_d$ to $\gf_{p^n}\setminus \mbox{Im}(G|_{C_d})$, then $G'$ defined in (\ref{relatedpp}) is
a permutation. Indeed, this is clear from $G'(x)=G(x)$ for $x\in C_d$ and $G'(x)=T(x)$ for $x\not\in C_d$,
and the properties of $G$ and $T$. This completes the proof.
\end{proof}

A natural question one may ask is the following:

\begin{problem}
Do all differentially $\delta$-vanishing quadratic functions have the form $G(x^d)$ with
$\delta=gcd(d, p^n-1)-1$ and where $G\mid_{C_d}$ is an injection?
\end{problem}
We leave this problem open.\\

In the following we characterize the zero-difference balanced functions
by their Walsh transform.
\begin{proposition}
	Let $F$ be a function on $\gf_{p^n}$. Then $F$ is zero-difference $\delta$-balanced if and only if
        \begin{equation}
		\label{walshcond1}
		\sum\limits_{b\in\gf_{p^n}}\mathcal{W}_F(a,b)\overline{\mathcal{W}_F(a,b)}=\left\{
			  \begin{array}{ll}
				  p^{2n} - \delta p^n,                        &\mbox{if}\ a\neq 0, \\
				  (\delta+1)p^{2n}-\delta p^n,                &\mbox{if}\ a=0,
	                   \end{array}
                          \right.
	\end{equation}
        where $\overline{w}$ denotes the complex conjugate of the complex number $w$.
        If $F$ is 
        differentially $\delta$-uniform, then the condition when $a=0$:	
        \begin{equation}
		\label{walshcond2}
	        \sum\limits_{b\in\gf_{p^n}}\mathcal{W}_F(0,b)\overline{\mathcal{W}_F(0,b)}=(\delta+1)p^{2n}-\delta p^n	
        \end{equation}	
        is sufficient (and so is necessary and sufficient).
\end{proposition}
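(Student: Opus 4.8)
The plan is to push everything through the zero-difference counts $N(a):=\delta_F(a,0)=|\{x\in\gf_{p^n}:F(x+a)=F(x)\}|$, for which ``zero-difference $\delta$-balanced'' is by definition the statement that $N(a)=\delta$ for every $a\neq 0$ (while $N(0)=p^n$ always). The engine is a single character-sum identity linking the Walsh side to these counts, which I would establish first. Expanding the squared modulus as $\mathcal{W}_F(a,b)\overline{\mathcal{W}_F(a,b)}=\sum_{x,y\in\gf_{p^n}}\zeta_p^{\tr(a(F(x)-F(y))+b(x-y))}$ and summing over the Walsh variable attached to $F$ (the first argument), orthogonality of additive characters---$\sum_{c}\zeta_p^{\tr(cz)}$ is $p^n$ when $z=0$ and $0$ otherwise---annihilates every pair except those with $F(x)=F(y)$. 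Grouping the surviving pairs $(x,y)$ by their difference $u=x-y$, for which $F(y+u)=F(y)$ has exactly $N(u)$ solutions $y$, yields
\begin{equation}\label{keyid}
\sum_{c\in\gf_{p^n}}\mathcal{W}_F(c,b)\overline{\mathcal{W}_F(c,b)}=p^n\sum_{u\in\gf_{p^n}}N(u)\,\zeta_p^{\tr(bu)}.
\end{equation}
(Here the index $b$ is the Walsh argument paired with $x$, i.e. the one dual to the input difference; this is the reading under which the two prescribed values appear.) Everything else is extraction of information from \eqref{keyid}.

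For the forward direction I substitute $N(u)=\delta$ $(u\neq 0)$ and $N(0)=p^n$ into \eqref{keyid} and split off the $u=0$ term. The remainder is $\delta\sum_{u\neq 0}\zeta_p^{\tr(bu)}$, which by orthogonality equals $-\delta$ when $b\neq 0$ and $\delta(p^n-1)$ when $b=0$. This produces $p^n(p^n-\delta)=p^{2n}-\delta p^n$ in the first case and $p^n\big(p^n+\delta(p^n-1)\big)=(\delta+1)p^{2n}-\delta p^n$ in the second, matching the two prescribed values exactly.

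For the converse equivalence I invert \eqref{keyid}: its right-hand side is $p^n$ times the additive Fourier transform of $u\mapsto N(u)$, so the Inversion Formula (Lemma \ref{invfor}, equivalently Lemma \ref{groupringeq}) recovers $N(u)=\frac{1}{p^{2n}}\sum_{b}\big(\sum_{c}\mathcal{W}_F(c,b)\overline{\mathcal{W}_F(c,b)}\big)\zeta_p^{-\tr(bu)}$. Feeding in the prescribed values and separating the $b=0$ term from the constant $b\neq 0$ terms collapses the sum to $\delta p^{2n}$ for each $u\neq 0$, i.e. $N(u)=\delta$, which is zero-difference $\delta$-balance. The final assertion is where differential $\delta$-uniformity does the work: it forces $N(u)=\delta_F(u,0)\le\delta$ for all $u\neq 0$, while the single identity at $b=0$ pins down $\sum_{u\neq 0}N(u)=\delta(p^n-1)$, so the average of the $N(u)$ over the $p^n-1$ nonzero $u$ is exactly $\delta$; an average equal to the uniform upper bound forces every $N(u)=\delta$, so \eqref{walshcond2} alone suffices (and is clearly necessary). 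The main obstacle is purely bookkeeping: applying orthogonality to the correct (output-side) Walsh variable in \eqref{keyid} and cleanly isolating the $b=0$ term in the inversion. The one genuinely content-bearing step is the elementary ``average equals maximum'' argument, which is what lets the lone $b=0$ condition stand in for all the others once uniformity is assumed.
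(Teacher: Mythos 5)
Your proof is correct and takes essentially the same route as the paper: your key identity is precisely the paper's computation of the Fourier transform of $u\mapsto p^n\,|\{x : F(x+u)=F(x)\}|$ as a sum of squared Walsh moduli over the output mask, your converse is the same Fourier-inversion step the paper invokes, and your final ``average equals the uniform upper bound'' argument is exactly how the paper disposes of the differentially $\delta$-uniform case. Your parenthetical about which Walsh argument gets summed is also well taken: the paper's displayed formula lists the two arguments of $\mathcal{W}_F$ in the opposite order from its own definition, and its proof (like yours) in fact sums over the mask applied to $F$, with the fixed variable dual to the input difference.
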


\begin{proof}
        Function $F$ is zero-difference balanced if and only if the  numerical function
	$$ \sigma: u\mapsto |\{x : x \in \gf_{p^n} | F(x+u) - F(x)=0 \}| $$
	takes constant value $\delta$ at every $u\neq 0$ and takes value $p^n$ at 0.
	This is equivalent to the fact that the numerical function
	$$ \sigma': u\mapsto \sum\limits_{b,x\in\gf_{p^n}}\zeta_p^{\tr(b(F(x+u) - F(x)))} $$
	takes constant value  $\delta p^{n}$ at every $u\neq 0$ and takes value $p^{2n}$ at 0.
        By the properties of the Fourier transform (see e.g. \cite{Cbook1}), this is equivalent
	to the fact that the Fourier transform of $\sigma'$, that is,
	\begin{eqnarray*}
	   \widehat{\sigma'}(a) &=&  \sum_{u\in \gf_{p^n}}\sigma'(u)\zeta_p^{\tr(au)}=
	                             \sum\limits_{b,x,u\in\gf_{p^n}}\zeta_p^{\tr(b(F(x+u) - F(x)))+Tr(au)} \\
	                        &=&  \sum\limits_{b,x,y\in\gf_{p^n}}\zeta_p^{\tr(b(F(y) - F(x)))+Tr(a(y-x))}=
	                             \sum\limits_{b\in\gf_{p^n}} \mathcal{W}_F(a,b)\overline{\mathcal{W}_F(a,b)}
	\end{eqnarray*}
	takes constant value  $p^{2n}-\delta p^n$ at every $a\neq 0$ and takes value
	$\delta p^n(p^n-1)+p^{2n}=(\delta+1)p^{2n}-\delta p^n$ at 0. Therefore, $F$ is differentially
	$\delta$-uniform if and only if (\ref{walshcond1}) holds.
\\Furthermore, if $F$ is 
differentially $\delta$-uniform, then we know that the number
	of solutions of $F(x+u) - F(x)=0$ is not larger than $\delta$. Hence, we need only to show that it is never strictly smaller than $\delta$ and this is characterized by $\widehat{\sigma'}(0)=\delta p^n(p^n-1)+p^{2n}$, that is by (\ref{walshcond2}).
\end{proof}

\section{New families of quadratic zero-difference balanced functions over $\gf_{p^n}$}
In this section, we present new families of quadratic zero-difference balanced functions.
In Section 6 below, we will demonstrate that some of these functions give rise to new negative Latin square
type strongly regular graphs.
As mentioned in the Introduction, the classes of ZDB functions constructed in 
\cite{ZDB-Ding,QiZhou-ZDB,Han-ZDB,Ding-ZDB,ZTWY} are on a cyclic group except
the one in \cite[Theorem 1]{Ding-ZDB}. However, one can easily see our constructions
below are different from it by comparing the parameters of the ZDB functions.
Therefore, the ZDB functions presented in this section are new.

\subsection{A class of quadratic zero-difference 2-balanced functions over $\gf_{2^n}$}\label{4.1}

Recall that quadratic zero-difference 2-balanced functions over $\gf_{2^n}$ are APN functions. In this section we use Corollary \ref{dto1map} to characterize a family of APN functions
of the form $x^3 + \sum\limits_{i=1}^\ell\alpha_i\tr(\beta_i x^3 + \gamma_i x^9) $ defined on
$\gf_{2^n}$ with $n$ even. We begin with the subclass of those functions of the form $x^3+\alpha\tr(\beta x^3 + \gamma x^9)$.
When $n=8$, by choosing proper $\alpha,\beta,\gamma$, it generalizes a sporadic
example discovered in \cite{edel-pott}.

\begin{proposition}
	\label{newapn}
	Let $G$ be a function on $\gf_{2^n}$  defined by
	$$ G(x) = x + \alpha\tr(\beta x + \gamma x^3), $$
        where $\alpha, \beta, \gamma$ are elements in $\gf_{2^n}$, $\alpha\neq 0$ and $n$ is an even integer.
        Then $G$ is a permutation polynomial of $\gf_{2^n}$ if and only if
        (i) $\gamma=0$ and $\tr(\beta\alpha)=0$, or (ii) $\gamma \alpha^3=1$ and $\tr(\beta\alpha)=0$.
       If one of these two conditions is satisfied, the function $F(x)=G(x^3)= x^3 + \alpha\tr(\beta x^3 + \gamma x^9)$
        is a quadratic APN function.
\end{proposition}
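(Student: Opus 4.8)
The plan is to reduce the permutation question to a simple translation-invariance condition and then invoke Corollary~\ref{dto1map} for the APN conclusion. First I would write $G(x) = x + \alpha\, b(x)$, where $b(x) := \tr(\beta x + \gamma x^3)$ takes values in $\gf_2 = \{0,1\}$. For a map of this shape I claim that $G$ is a permutation if and only if $b$ is invariant under the translation $x \mapsto x + \alpha$. Indeed, if $G(x) = G(y)$ then $x + y = \alpha(b(x) + b(y))$; the subcase $b(x) = b(y)$ forces $x = y$, so a genuine collision requires $b(x) \neq b(y)$ and $y = x + \alpha$. Conversely, whenever $b(x) \neq b(x+\alpha)$ we get $G(x) = G(x+\alpha)$ with $x \neq x+\alpha$. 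Hence, since $\gf_{2^n}$ is finite, $G$ is a permutation if and only if $b(x+\alpha) = b(x)$ for all $x$.

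The core computation is then to evaluate $b(x+\alpha) + b(x)$. Expanding $(x+\alpha)^3 = x^3 + \alpha x^2 + \alpha^2 x + \alpha^3$ in characteristic $2$ and using the Frobenius-invariance of the trace, $\tr(z^2) = \tr(z)$, which lets me rewrite $\tr(\gamma\alpha x^2) = \tr((\gamma\alpha)^{1/2} x)$, I obtain
\begin{equation*}
  b(x+\alpha) + b(x) = \tr\left(\beta\alpha + \gamma\alpha^3\right) + \tr\left(\bigl((\gamma\alpha)^{1/2} + \gamma\alpha^2\bigr)x\right).
\end{equation*}
This is identically zero in $x$ if and only if its constant term vanishes, $\tr(\beta\alpha + \gamma\alpha^3) = 0$ (evaluate at $x=0$), and its linear coefficient vanishes, $(\gamma\alpha)^{1/2} + \gamma\alpha^2 = 0$ (as $\tr(cx) = 0$ for all $x$ forces $c=0$). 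Squaring the latter, which is reversible since the Frobenius map is bijective, gives $\gamma\alpha(1 + \gamma\alpha^3) = 0$, so as $\alpha \neq 0$ we are forced into $\gamma = 0$ or $\gamma\alpha^3 = 1$.

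It then remains to simplify the constant condition in each case, using that $n$ is even and hence $\tr(1) = n\cdot 1 = 0$. If $\gamma = 0$ the constant condition reads $\tr(\beta\alpha) = 0$, giving (i); if $\gamma\alpha^3 = 1$ then $\tr(\gamma\alpha^3) = \tr(1) = 0$, so the constant condition again collapses to $\tr(\beta\alpha) = 0$, giving (ii). This settles the claimed equivalence. For the APN assertion I note that $F(x) = G(x^3) = x^3 + \alpha\tr(\beta x^3 + \gamma x^9)$ is quadratic, since each monomial occurring ($x^3 = x^{2+1}$, the $x^{3\cdot 2^i} = x^{2^{i+1}+2^i}$ from $\tr(\beta x^3)$, and the $x^{9\cdot 2^i} = x^{2^{i+3}+2^i}$ from $\tr(\gamma x^9)$) has $2$-weight $2$. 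When (i) or (ii) holds $G$ is a permutation, so in particular $G\mid_{C_3}$ is injective; and because $n$ is even, $e = \gcd(3, 2^n-1) = 3$. Corollary~\ref{dto1map} then yields that $F$ is differentially $2$-uniform (since $e-1 = 2$), i.e. APN.

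I expect the only real subtlety to lie in the core computation: correctly collapsing the quadratic-in-$x$ term $\tr(\gamma\alpha x^2)$ into a linear trace term via $\tr(z^2) = \tr(z)$, and then tracking how the parity of $n$ (through $\tr(1) = 0$) is exactly what makes both constant conditions reduce to the single requirement $\tr(\beta\alpha) = 0$. Everything else is a direct consequence of the translation-invariance reformulation of $G$ and of the already-established Corollary~\ref{dto1map}.
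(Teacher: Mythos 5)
Your proposal is correct and follows essentially the same route as the paper: both arguments reduce injectivity of $G$ to the single possible collision $y=x+\alpha$ (your translation-invariance reformulation of $b$ is just a repackaging of the paper's observation that $a\alpha^{-1}+\tr(\cdot)=0$ forces $a=\alpha$), then linearize $\tr(\gamma\alpha x^2)$ via $\tr(z^2)=\tr(z)$, square the vanishing linear coefficient to obtain $\gamma=0$ or $\gamma\alpha^3=1$, and use $\tr(1)=0$ for $n$ even to collapse the constant condition to $\tr(\beta\alpha)=0$. The APN conclusion via Corollary~\ref{dto1map} with $e=\gcd(3,2^n-1)=3$ is likewise the paper's intended argument, and your explicit check that $F$ is quadratic is a harmless addition.
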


\begin{proof}
        Function $G$ is a permutation polynomial (PP) if and only if, for every $a\in \gf_{2^n}^*$, the equation $G(x+a)+G(x)=0$ has
	no solution. This equation is equivalent to:
	\begin{equation}
	   \label{thm2eq1}
	    a\alpha^{-1} + \tr\left(\beta a + \gamma a^3+(\gamma a^2+(\gamma a)^{2^{n-1}})x\right)=0.
	\end{equation}
        Since $a\neq 0$,  the above equation holds only if $a=\alpha$ and
	\begin{equation}
	   \label{thm2eq2}
           1 + \tr\left(\beta\alpha + \gamma \alpha^3+(\gamma \alpha^2 + (\gamma \alpha)^{2^{n-1}})x  \right)=0.
        \end{equation}
	If $\gamma \alpha^2 + (\gamma \alpha)^{2^{n-1}}\neq 0$, then it is clear that (\ref{thm2eq2}) always
	has solutions. If $\gamma \alpha^2 + (\gamma \alpha)^{2^{n-1}}=0$, that is, if $\gamma=0$ or
	$\gamma \alpha^3=1$, then (\ref{thm2eq2}) has solutions if and only if
	$\tr(\beta\alpha + \gamma \alpha^3)=1$, that is, $\tr(\beta\alpha)=1$. This completes the proof.
\end{proof}

\begin{remark}
	Some remarks on the CCZ-inequivalent APN functions generated by Proposition \ref{newapn}:
	\begin{enumerate}[(1)]
	   \item For $\gamma=0$ and $\tr(\beta\alpha)=0$,  $G(x)$ is a linear permutation, and hence the APN function $F(x)=G(x^3)$ is CCZ-equivalent
	         to the Gold APN function $x^3$.
	   \item For $\beta=0$ and $\gamma\alpha^3=1$, we have $F(x)=\alpha\left(\frac{x^3}{\alpha}+\tr\left(\left(\frac{x^3}{\alpha}\right)^3\right)\right)$. On $\gf_{2^8}$ and $\gf_{2^{10}}$, we checked that taking for $\alpha$ a primitive element of $\gf_{2^2}$ gives a function CCZ-inequivalent to $x^3+\tr(x^9)$ and $x^3$.
	   \item By exhaustive search of all $\alpha,\beta,\gamma$ on $\gf_{2^8}$, there are only three (up to equivalence) aforementioned
                 APN functions found. Furthermore, on $\gf_{2^{10}}$, by partly search $\alpha,\beta,\gamma$,
	         we get three APN functions, $x^3, x^3+\tr(x^9)$ and one which is not in any known infinite families.
	\end{enumerate}
\end{remark}
\subsection{Obtaining APN functions from known ones}
We give now a method to generate APN functions obtained by Corollary \ref{dto1map}  from known ones.

\begin{lemma}
	\label{injlemma}
	Let $G:\gf_{2^n}\rightarrow\gf_{2^n}$ be a function satisfying that $G|_{C_3}$ is an injection,
        and $h:\gf_{2^n}\rightarrow\gf_2$ be a Boolean function.
        Let $\gamma\in\gf_{2^n}$ be a nonzero constant.
        Then the function $H:\gf_{2^n}\rightarrow\gf_{2^n}$ defined by $H(x)=G(x)+\gamma h(x)$ has also injective restriction to $C_3$ if and only if
        \begin{equation}
            \label{injcondition}
            h(x^3)+h(y^3)=0\ \mbox{ holds for any}\ x,y\ \mbox{satisfying}\ G(x^3)+G(y^3)=\gamma .
        \end{equation}
       The set
        $ S_{G,\gamma}:=\{h\in\mathcal{BF}_n| G(x)+\gamma h(x)\ \mbox{is an injection on}\ C_3\} $  is a subspace of $(\mathcal{BF}_n, +)$,
        where $\mathcal{BF}_n$ is the set of all Boolean functions.
\end{lemma}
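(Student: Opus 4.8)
The plan is to first translate the condition ``$H\mid_{C_3}$ is an injection'' into a statement about pairs $x,y\in\gf_{2^n}$ via the parametrization $u=x^3$ of $C_3$, and then to read off the subspace property from the linearity of the resulting condition in $h$.

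First I would record the basic algebraic identity in characteristic $2$. Since $h$ takes values in $\gf_2\subset\gf_{2^n}$, for any $x,y$ we have
$$ H(x^3)+H(y^3)=\bigl(G(x^3)+G(y^3)\bigr)+\gamma\bigl(h(x^3)+h(y^3)\bigr), $$
with $h(x^3)+h(y^3)\in\{0,1\}$. Injectivity of $H\mid_{C_3}$ means exactly that $H(x^3)=H(y^3)$ forces $x^3=y^3$. I would analyze the equation $H(x^3)=H(y^3)$, i.e. $H(x^3)+H(y^3)=0$, by splitting on the two possible values of $h(x^3)+h(y^3)$: if it is $0$, the identity gives $G(x^3)=G(y^3)$, whence $x^3=y^3$ because $G\mid_{C_3}$ is an injection; if it is $1$, the identity gives $G(x^3)+G(y^3)=\gamma$, and since $\gamma\neq 0$ this is a genuine collision $H(x^3)=H(y^3)$ with $x^3\neq y^3$ (as then $G(x^3)\neq G(y^3)$). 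Hence the only way injectivity can fail is through the second branch, so $H\mid_{C_3}$ is an injection if and only if that branch never occurs, i.e. if and only if every pair $x,y$ with $G(x^3)+G(y^3)=\gamma$ satisfies $h(x^3)+h(y^3)=0$. This is precisely condition (\ref{injcondition}). Both directions of the equivalence fall out of this single case analysis, so I would write it once and read off each implication.

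For the subspace claim I would observe that the pair set $P=\{(x^3,y^3):G(x^3)+G(y^3)=\gamma\}\subseteq C_3\times C_3$ depends only on $G$ and $\gamma$, not on $h$. By the equivalence just proved, $h\in S_{G,\gamma}$ if and only if $h(u)+h(v)=0$ for every $(u,v)\in P$. Each such requirement is a homogeneous $\gf_2$-linear condition on $h$ (viewing $\mathcal{BF}_n$ as the $\gf_2$-vector space of $\gf_2$-valued functions under pointwise addition), so $S_{G,\gamma}$ is the common solution set of a system of homogeneous linear equations, hence a $\gf_2$-subspace. More concretely, the zero function lies in $S_{G,\gamma}$ (as $G$ itself is injective on $C_3$), and if $h_1,h_2\in S_{G,\gamma}$ then for any $(u,v)\in P$ we have $(h_1+h_2)(u)+(h_1+h_2)(v)=\bigl(h_1(u)+h_1(v)\bigr)+\bigl(h_2(u)+h_2(v)\bigr)=0+0=0$, so $h_1+h_2\in S_{G,\gamma}$; closure under the only nontrivial scalar $1\in\gf_2$ is automatic.

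I do not expect a serious obstacle here: the content is entirely in the characteristic-$2$ cancellation $\gamma+\gamma=0$ together with the hypotheses $\gamma\neq 0$ and $G\mid_{C_3}$ injective, which make the two branches of the case analysis mutually exclusive and pin down exactly when a collision is created. The one point requiring care is to keep the two sub-cases straight --- in particular that $\gamma\neq 0$ is what guarantees the second branch genuinely produces distinct elements of $C_3$ --- after which the subspace statement is immediate from the linearity of the collision condition in $h$.
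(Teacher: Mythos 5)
Your proposal is correct and follows essentially the same route as the paper: expand $H(x^3)+H(y^3)=\bigl(G(x^3)+G(y^3)\bigr)+\gamma\bigl(h(x^3)+h(y^3)\bigr)$, split on the two values of $h(x^3)+h(y^3)\in\{0,1\}$ using the injectivity of $G\mid_{C_3}$ and $\gamma\neq 0$, and derive the subspace property from the $\gf_2$-linearity of condition (\ref{injcondition}) in $h$. Your presentation of the equivalence as a single case analysis (rather than two separately written implications) and your explicit remark that the zero function lies in $S_{G,\gamma}$ are only cosmetic refinements of the paper's argument.
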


\begin{proof}
First, we assume that condition (\ref{injcondition}) holds and we show that $H\mid_{C_3}$ is an injection, that is, $H(x^3)+H(y^3)=0$ if and only if $x^3=y^3$.
Expanding $H$ we get $G(x^3)+G(y^3)=\gamma(h(x^3)+h(y^3))$. If $h(x^3)+h(y^3)=0$,
we have $G(x^3)+G(y^3)=0$ and hence $x^3=y^3$ by the injectivity of $G$. Otherwise,
$G(x^3)+G(y^3)=\gamma$ when $h(x^3)+h(y^3)=1$, but this is not possible by condition (\ref{injcondition}).
Conversely, assume that $H\mid_{C_3}$ is an injection and there exist $x_0,y_0$ such that
$G(x_0^3)+G(y_0^3)=\gamma$ and $h(x_0^3)+h(y_0^3)=1$, this would lead to $H(x_0^3)+H(y_0^3)=0$ and $x_0^3\ne y_0^3$,
which is not possible as $H\mid_{C_3}$ is an injection.
\\
Finally, it is easy to see that, for $h_1,h_2\in S_{G,\gamma}$, $h_1+h_2$ satisfies
condition (\ref{injcondition}) and hence $h_1+h_2\in S_{G,\gamma}$, which implies that
$S_{G,\gamma}$ is a subspace.
\end{proof}

Lemma \ref{injlemma} above provides an algorithmic method to find functions $H=G+\gamma h$
with the property that $H\mid_{C_3}$ is an injection from known such functions $G$. This can be done by solving a
linear system of  equations in $h$: we regard $h$ as
the vector of length $2^n$ in the last column of its truth table (by abuse of notation, we still denote this vector  by $h$); a Boolean function $h$ belongs then to the vector space $S_{G,\gamma}$ if and only if
$ R\times h^T=0$, where $\times$ denotes the matrix product and $R$ is the matrix whose term at row indexed by an ordered pair $\{x^3,y^3\}$ such that $G(x^3)+G(y^3)=\gamma$ and at column indexed by $u\in \gf_{2^n}$ equals 1 if $u=x^3$ or $u=y^3$ and equals 0 otherwise.

If $G(x^3)$ is quadratic, then $H(x^3)$ will be quadratic if and only if $h(x^3)$ is quadratic. As already observed, this is achieved when the degree 2 part of $h(x)$ has the form $\sum_{i} a_ix^{\frac{2^{k_i}+2^{l_i}}{3}}$ where, for every $i$,
	              $k_i$ is an odd positive number and $l_i$ is an even positive number, and where $1<\frac{2^{k_i}+2^{l_i}}{3}<2^n-1$ (since $\frac{2^{k_i}+2^{l_i}}{3}=2^n-1$ is impossible, unless $n\le 2$). Note that such $h$ is Boolean when $a_{j}=a_i^2$ for every $i,j$ such that $l_j=k_i+1$ and $k_j=l_i+1$, and for every $i,j$ such that $k_i=l_j+n-1$ and $l_i=k_j+n-1$.


\subsection{Zero-difference $p$-balanced functions over $\gf_{p^n}$}

We present quadratic zero-difference $p$-balanced functions over
$\gf_{p^n}$ in this section. Note that such functions are differentially
$p$-uniform functions by Remark \ref{r1} and Proposition \ref{p2}.
The following well-known lemma will be useful for the construction. We give a proof for self-completeness.

\begin{lemma}
	\label{xxp}
	Let $t\in\gf_{p^n}$ with $n$ even, the equation $x+x^p=t$ has solutions on $\gf_{p^n}$ if and only if
	$ \sum\limits_{i=0\atop i\ \tiny\mbox{even}}^{n-2} t^{p^i}\in\gf_p$. The number of solutions is then $p$.
\end{lemma}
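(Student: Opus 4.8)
The plan is to treat $L(x)=x+x^p$ as a $\gf_p$-linear endomorphism of $\gf_{p^n}$ and to read off everything from the kernel and image of $L$. Since $c^p=c$ for $c\in\gf_p$, the map $L$ is additive and $\gf_p$-linear, so for fixed $t$ the solution set of $L(x)=t$ is either empty or a coset of $\ker L$. In particular, the number of solutions is $0$ or $|\ker L|$, which will immediately give the assertion that it equals $p$ once I compute the kernel.

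First I would determine $\ker L$. An element lies in the kernel iff $x^p=-x$; raising to the $p$-th power yields $x^{p^2}=(-1)^{p+1}x=x$ (one checks $(-1)^{p+1}=1$ in every characteristic), so every kernel element lies in $\gf_{p^2}$. Because $n$ is even, $\gf_{p^2}\subseteq\gf_{p^n}$, and hence $\ker L=\{x\in\gf_{p^2}:x+x^p=0\}$ is exactly the kernel of $\Tr_{\gf_{p^2}/\gf_p}$, which has $p$ elements. Thus $L$ is $p$-to-one onto its image, the equation has $0$ or $p$ solutions, and $\mathrm{Im}(L)$ is a $\gf_p$-hyperplane of codimension $1$.

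It then remains to identify this hyperplane with the stated condition. The key observation is that the partial sum $S(t):=\sum_{i\text{ even}}t^{p^i}=t+t^{p^2}+\cdots+t^{p^{n-2}}$ is precisely the relative trace $\Tr_{\gf_{p^n}/\gf_{p^2}}(t)$, so $\{t:S(t)\in\gf_p\}$ is the preimage, under a surjective $\gf_p$-linear map $\gf_{p^n}\to\gf_{p^2}$, of the line $\gf_p\subset\gf_{p^2}$; it is therefore also a hyperplane of codimension $1$. To prove the two hyperplanes coincide it suffices to establish one inclusion, and the natural one is $\mathrm{Im}(L)\subseteq\{t:S(t)\in\gf_p\}$: for $t=x+x^p$ a telescoping computation gives
\[
 S(x+x^p)=S(x)+S(x^p)=\sum_{i=0}^{n-1}x^{p^i}=\Tr_{\gf_{p^n}/\gf_p}(x)\in\gf_p,
\]
because $S(x)$ collects the even Frobenius powers and $S(x^p)$ the odd ones, so their sum is the full absolute trace. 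Since both sides are hyperplanes of the same dimension and one contains the other, they are equal, completing the characterization.

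The main obstacle—really the only genuinely nontrivial point—is this telescoping identity linking $S$ to the absolute trace, together with the bookkeeping that makes $S(L(x))$ collapse to $\Tr_{\gf_{p^n}/\gf_p}(x)$; the remaining ingredients ($\gf_p$-linearity, the kernel count, and the codimension computation) are routine. I would also keep in mind that both the kernel argument and the identification $S=\Tr_{\gf_{p^n}/\gf_{p^2}}$ use that $n$ is even, which is exactly the hypothesis of the lemma.
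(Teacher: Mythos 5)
Your proof is correct, but it reaches the conclusion by a genuinely different route than the paper, at least for the ``if'' direction. The forward direction is the same in both: the identity $S(x+x^p)=\Tr_{\gf_{p^n}/\gf_p}(x)$, obtained by splitting the absolute trace into even and odd Frobenius powers. For the converse, however, the paper runs a descent argument: it takes a root $\beta$ of the (separable, degree-$p$) equation $x+x^p=t$ in an extension field and shows $\beta^{p^n}-\beta=0$ by writing $\beta^{p^n}-\beta$ as an alternating telescoping sum of the $(\beta+\beta^p)^{p^j}=t^{p^j}$, which collapses to $-S(t)+S(t)^p$ and vanishes precisely because $S(t)\in\gf_p$; hence the root already lies in $\gf_{p^n}$. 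You instead argue by pure linear algebra and counting: $\mathrm{Im}(L)$ has index $p$ since $\ker L$ is the trace-zero line of $\gf_{p^2}$ (using $n$ even), the set $\{t:S(t)\in\gf_p\}$ is also an index-$p$ subspace since $S=\Tr_{\gf_{p^n}/\gf_{p^2}}$ is surjective, and one inclusion between two subspaces of equal size forces equality. What each buys: the paper's telescoping proof is self-contained and avoids invoking surjectivity of the relative trace, while your argument gives a cleaner structural picture and, notably, gets the solution count right for the correct reason --- your identification of the kernel as $\{u\in\gf_{p^2}:u+u^p=0\}$ is accurate, whereas the paper's assertion that the solutions are $x+i$, $i\in\gf_p$, only holds for $p=2$ (for odd $p$ one has $L(x+i)=t+2i\ne t$ when $i\ne 0$), though the count of $p$ solutions is unaffected. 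Do note that your step ``$S$ is a surjective map onto $\gf_{p^2}$'' is an appeal to the standard surjectivity of the relative trace, which you should cite or prove if full self-containment is wanted.
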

\begin{proof}
	If the equation $x+x^p=t$ has a solution $x\in \gf_{p^n}$,  then we have
	$\tr(x)=\sum\limits_{i=0\atop i\ \tiny\mbox{even}}^{n-2} t^{p^i}\in\gf_p$. All solutions of the equation are then $x+i$, $i\in \gf_p$.\\
	Conversely, assume that $ \sum\limits_{i=0\atop i\ \tiny\mbox{even}}^{n-2} t^{p^i}\in\gf_p$.  Let $\beta$ be a solution of the equation $x+x^p=t$ in some extension
	field of $\gf_{p^n}$. Then:
	\begin{eqnarray*}
             \beta^{p^n}-\beta &=& (\beta^{p^n}+\beta^{p^{n-1}}) - (\beta^{p^{n-1}}+\beta^{p^{n-2}}) + \ldots + (\beta^{p^2}+\beta^p) - (\beta^p+\beta) \\
                               &=& (\beta^p+\beta)^{p^{n-1}} - (\beta^p+\beta)^{p^{n-2}} + \ldots + (\beta^p+\beta)^{p} - (\beta^p+\beta) \\
                               &=& -\sum\limits_{i=0\atop i\ \tiny\mbox{even}}^{n-2} t^{p^i} +
			       \left(\sum\limits_{i=0\atop i\ \tiny\mbox{even}}^{n-2} t^{p^i}\right)^p =0.\\
		       \end{eqnarray*}
	Therefore we get $\beta\in\gf_{p^n}$ and this completes the proof.
\end{proof}

Now we are ready to present the main result of this section, which is an extension of the results of Section \ref{4.1} to general characteristic.

\begin{theorem}
	\label{newfun}
	Let $F(x)=x^{p+1}+\alpha\tr(\beta x^{p+1} + \gamma x^{p^3+1})$ defined on $\gf_{p^n}$,
	where $\alpha, \beta, \gamma\in\gf_{p^n}$.
	Then
        \begin{enumerate} [(i)]
		\item when $n=4$, function $F$ is a zero-difference $p$-balanced
	              function if and only if $\tr(\alpha\beta+\gamma\alpha^{p^3})\neq -1$.
                \item when $n=6$,  if $\gamma^{p^3-1}=-1$ and
		      $-1-\tr(\alpha\beta)\neq 0$, then function $F$ is a zero-difference $p$-balanced
		      function.
	\end{enumerate}
\end{theorem}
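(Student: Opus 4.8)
The plan is to reduce the statement to a solution count for $F(x+a)-F(x)=0$ and to organize that count using the fact that the trace part is $\gf_p$-valued. Writing the derivative as $F(x+a)-F(x)=L(x)+\alpha\tr(\beta L(x)+\gamma M(x))$ with $L(x)=ax^p+a^px+a^{p+1}$ and $M(x)=ax^{p^3}+a^{p^3}x+a^{p^3+1}$, and substituting $x=au$, one gets $L=a^{p+1}(u^p+u+1)$ and $M=a^{p^3+1}(u^{p^3}+u+1)$. The key simplification is that with $s:=u^p+u$ one has $u^{p^3}+u+1=s^{p^2}-s^p+s+1$, so the whole derivative depends on $u$ only through $s$. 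By Lemma \ref{xxp} each attained value of $s$ has exactly $p$ preimages $u$; hence $F$ is zero-difference $p$-balanced if and only if, for every $a\ne 0$, exactly one value $s=t_c:=-1-c\,\alpha a^{-(p+1)}$ (with $c\in\gf_p$) is simultaneously attained by $u\mapsto u^p+u$ (solvability) and consistent with $c=\tr(\beta L+\gamma M)$.

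First I would dispatch $c=0$: here $t_0=-1$ is always attained by Lemma \ref{xxp}, and the consistency reduces to $\tr(0)=0$ because $s^{p^2}-s^p+s+1$ vanishes at $s=-1$; so $c=0$ always contributes its $p$ solutions. Thus $F$ is zero-difference $p$-balanced iff no $c\in\gf_p^{*}$ contributes, for any $a$. A short computation shows that for $c\ne 0$ the consistency condition is independent of $c$ (the factor $c$ cancels), namely $\tr(\gamma a^{p^3+1}(A^p-A^{p^2}-A))=1+\tr(\alpha\beta)$ with $A=\alpha a^{-(p+1)}$, while solvability $t_c\in\mathrm{Im}(u\mapsto u^p+u)$ holds for some (hence all) $c\ne 0$ iff $\mathrm{Tr}_{\gf_{p^n}/\gf_{p^2}}(A)\in\gf_p$. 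So everything reduces to whether these two conditions hold simultaneously for some $a\ne 0$.

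The main work (and the only delicate point) is translating this simultaneity into the stated conditions. For $n=4$, the solvability condition $\mathrm{Tr}_{\gf_{p^4}/\gf_{p^2}}(A)\in\gf_p$ gives $A^p+A^{p^3}=A+A^{p^2}$, hence $A^p-A^{p^2}-A=-A^{p^3}$; and since $a^{p^3+1}A^{p^3}=\alpha^{p^3}$ (the power of $a$ cancels using $a^{p^4}=a$), the consistency condition collapses to the $a$-free identity $\tr(\alpha\beta+\gamma\alpha^{p^3})=-1$. As $a$ ranges over $\gf_{p^4}^{*}$, $A$ ranges over $\alpha\gf_{p^2}^{*}$ and one checks that $\mathrm{Tr}_{\gf_{p^4}/\gf_{p^2}}(A)\in\gf_p$ is always satisfiable; hence a bad $c\ne 0$ exists for some $a$ iff $\tr(\alpha\beta+\gamma\alpha^{p^3})=-1$, which yields (i).

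For $n=6$ I would use $\gamma^{p^3}=-\gamma$ together with the invariance $\tr(w)=\tr(w^{p^3})$. Applying $p^3$-Frobenius to $W:=\gamma a^{p^3+1}(A^p-A^{p^2}-A)$ and using $\gamma^{p^3}=-\gamma$, $a^{p^6}=a$ gives $W^{p^3}=-\gamma a^{p^3+1}(A^{p^4}-A^{p^5}-A^{p^3})$, so that $2\tr(W)=\tr\big(\gamma a^{p^3+1}[(A^p+A^{p^3}+A^{p^5})-(A+A^{p^2}+A^{p^4})]\big)$. Under the solvability condition $\mathrm{Tr}_{\gf_{p^6}/\gf_{p^2}}(A)\in\gf_p$ the bracket vanishes, so $\tr(W)=0$; since $1+\tr(\alpha\beta)\ne 0$, the consistency condition $\tr(W)=1+\tr(\alpha\beta)$ fails. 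Hence no $c\ne 0$ can contribute and $F$ is zero-difference $p$-balanced, proving (ii). I expect the bookkeeping of Frobenius powers (reducing exponents of $a$ and $\alpha$ modulo $p^n-1$) in these last two steps to be the main source of friction, but the two cancellations above are exactly what make the argument close cleanly.
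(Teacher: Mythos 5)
Your reduction is essentially the paper's own: substitute $x=au$, split the derivative equation over the $\gf_p$-valued trace into a linear condition on $s=u^p+u$ plus a consistency condition, dispatch $c=0$ via Lemma \ref{xxp}, and note that for $c\neq 0$ the factor $c$ cancels. Your consistency condition is exactly the paper's equation (\ref{eq2}) up to sign (your $W$ is the negative of its right-hand side), and your $n=4$ collapse $A^p-A^{p^2}-A=-A^{p^3}$, $a^{p^3+1}A^{p^3}=\alpha^{p^3}$ is a tidier form of the paper's longer telescoping computation. However, two steps as written do not hold up.

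First, in (i), your justification of the ``only if'' direction is false: as $a$ ranges over $\gf_{p^4}^*$, $a^{-(p+1)}$ ranges over the subgroup of $(p+1)$-th powers, which has order $(p-1)(p^2+1)$, not over $\gf_{p^2}^*$ (order $p^2-1$); for $p=2$, for instance, the cubes of $\gf_{16}^*$ meet $\gf_4^*$ only in $\{1\}$. So ``$A$ ranges over $\alpha\gf_{p^2}^*$'' is wrong, and the satisfiability of the solvability condition $\mathrm{Tr}_{\gf_{p^4}/\gf_{p^2}}(A)\in\gf_p$ for some $a$ does not follow from your argument. The needed existence is true, but it requires a real argument, e.g.\ a character-sum count: writing the indicator of $\gf_p\subset\gf_{p^2}$ with additive characters, the number of admissible $a$ equals $\frac{1}{p}\left(p^4+\sum_{\lambda\in\gf_p^*}S(\lambda\theta\alpha)\right)$ where $S(\beta')=\sum_{a}\zeta_p^{\tr(\beta' a^{p+1})}$ has modulus at most $p^3$, giving at least $p^2>1$ admissible $a$. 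To be fair, the paper glosses over exactly this point with ``Clearly''; your attempt at least makes the step visible, but fills it with an incorrect claim.

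Second, in (ii), the step ``$2\tr(W)=0$, hence $\tr(W)=0$'' divides by $2$ and fails for $p=2$, which is within the theorem's scope (for $p=2$ the hypothesis $\gamma^{p^3-1}=-1$ reads $\gamma\in\gf_{2^3}^*$, and the theorem is explicitly the general-characteristic extension of the $p=2$ results of Section \ref{4.1}). The paper avoids any division: after substituting $\ell$ into the right-hand side of (\ref{eq2}) it groups conjugate terms to extract the factor $\gamma+\gamma^{p^3}=0$, leaving $-\ell\,\tr(\gamma a^{p^3+1})$, which vanishes in every characteristic since $z=\gamma a^{p^3+1}$ satisfies $z^{p^3}=-z$ (the six trace terms cancel in pairs, and for $p=2$ they coincide in pairs). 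Your argument is repairable along the same lines: for $p=2$ one has $z^{p^3}=z$, so $z\in\gf_{p^3}$ and, by transitivity of the trace, $\tr(W)=\mathrm{Tr}_{\gf_{p^3}/\gf_p}\left(z\cdot\mathrm{Tr}_{\gf_{p^6}/\gf_{p^3}}(A+A^p+A^{p^2})\right)=\mathrm{Tr}_{\gf_{p^3}/\gf_p}\left(z(\ell+\ell^p)\right)=0$ under solvability. But as written, the characteristic-$2$ case of (ii) is a genuine hole; for odd $p$ your proof of (ii) is correct and slightly more economical than the paper's.
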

\begin{proof}
	To prove that $F$ is a zero-difference $p$-balanced, we need to show that, for any nonzero
	$a\in\gf_{p^n}$,
	the equation $\Delta_F(ax)=F(ax+a)-F(ax)=0$ has exactly $p$ solutions. \\
	Expanding $\Delta_F(ax)$ we have
	\begin{equation}
		\label{deltaeq}
		-a^{p+1}(1+x+x^p)=\alpha\tr\left(\beta a^{p+1}(1+x + x^p)+\gamma a^{p^3+1}(1+x+x^{p^3})\right).
	\end{equation}
	Clearly the above equation holds if and only if, for some $k\in\gf_p$, we have: $$\left\{\begin{array}{l}1+x+x^p=ka^{-p-1}\alpha\\-k=\tr\left(\beta a^{p+1}(1+x + x^p)+\gamma a^{p^3+1}(1+x+x^{p^3})\right)\end{array}\right..$$
	In the case that $k=0$, corresponding to $x+x^p=-1$,  we have
	$1+x+x^{p^3}=\left((1+x+x^p)-(1+x^p+x^{p^2})+(1+x^{p^2}+x^{p^3})\right)=0$, and
	then the second equation in the system above gives $0=0$, which implies that all solutions of $x+x^p=-1$
	are the solutions of (\ref{deltaeq}). Note also that the equation $x+x^p=-1$ has $p$ solutions
	in $\gf_{p^4}$ (resp. $\gf_{p^6}$) by Lemma \ref{xxp}.
	Therefore, $F$ is zero-difference $p$-balanced if and only if, $\alpha=0$ or,
	for any $k\neq 0$, the system above has no solution. The second equation in the system is equivalent to the following equation, obtained by replacing $1+x+x^p$ by its value from the first equation, using that $\tr(ku)=k\tr(u)$ for every $u\in \gf_{p^4}$ (resp. $\gf_{p^6}$), and dividing by $k$:
	\begin{equation}
		\label{eq2}
		-1-\tr(\alpha\beta)=\tr\left(\gamma(a^{p^3-p}\alpha- a^{p^3+1-p^2-p}\alpha^p + a^{1-p^2}\alpha^{p^2} ) \right).
	\end{equation}
$(i)$ In the case $n=4$, by Lemma \ref{xxp}, the equation $x+x^p=ka^{-p-1}\alpha-1$ has solutions in
	$\gf_{p^4}$ if and only if
	$\ell=a^{-p-1}\alpha + a^{-p^3-p^2}\alpha^{p^2}$ belongs to $\gf_p$, and therefore also equals $a^{-p^2-p}\alpha^p+a^{-1-p^3}\alpha^{p^3}$.
        Now, considering the right hand side of (\ref{eq2}), we have
	\begin{eqnarray*}
		\mbox{RHS} &=& \tr\left( \gamma a^{p^3-p}\alpha+\gamma a^{1-p^2}\alpha^{p^2}  - \gamma \alpha^p  a^{p^3+1-p^2-p}\right)  \\
	 	&=&  \tr\left(\gamma a^{p^3-p}\alpha+\gamma a^{1-p^2}\alpha^{p^2} -\gamma a^{p^3+1}(\alpha^p a^{-p^2-p})  \right) \\
		&=&  \tr\left(\gamma a^{p^3-p}\alpha+\gamma a^{1-p^2}\alpha^{p^2} -\gamma a^{p^3+1}(\ell - \alpha^{p^3}a^{-1-p^3})  \right) \\
		&=&  \tr\left(\gamma a^{p^3-p}\alpha+\gamma a^{1-p^2}\alpha^{p^2} -\ell\gamma a^{p^3+1} +\gamma \alpha^{p^3}\right) \\
		&=&  \tr\left(\gamma a^{p^3-p}\alpha+\gamma a^{1-p^2}\alpha^{p^2} -(a^{-p-1}\alpha+a^{-p^3-p^2}\alpha^{p^2})\gamma a^{p^3+1}
		+\gamma\alpha^{p^3}\right) \\
		&=& \tr\left(\gamma a^{p^3-p}\alpha+\gamma a^{1-p^2}\alpha^{p^2}  - (\gamma\alpha a^{p^3-p}+\gamma\alpha^{p^2}a^{1-p^2})
		+\gamma \alpha^{p^3}  \right)=\tr(\gamma\alpha^{p^3}).
	\end{eqnarray*}
	Clearly, (\ref{eq2}) does not have then any solution in $\gf_{p^4}$ if and only if
	$\tr(\gamma\alpha^{p^3})\neq -1-\tr(\alpha\beta)$. \\
	
	$(ii)$ In the case $n=6$, by Lemma \ref{xxp}, the equation $x+x^p=ka^{-p-1}\alpha-1$ has solutions in $\gf_{p^6}$ if and only if
	$\ell=a^{-p-1}\alpha + a^{-p^3-p^2}\alpha^{p^2} + a^{-p^5-p^4}\alpha^{p^4}=
	a^{-p^2-p}\alpha^p+a^{-p^4-p^3}\alpha^{p^3} + a^{-1-p^5}\alpha^{p^5}\in\gf_p$.
        The right hand side of (\ref{eq2}) equals then:
	\begin{eqnarray*}
		\mbox{RHS} &=& \tr\left(\gamma\alpha a^{p^3-p}+\gamma\alpha^{p^2}a^{1-p^2}-\gamma\alpha^pa^{p^3+1-p^2-p}  \right) \\
		&=& \tr\left(\gamma\alpha a^{p^3-p}+\gamma\alpha^{p^2}a^{1-p^2}
		-\gamma a^{p^3+1}(\ell-a^{-p^4-p^3}\alpha^{p^3}-a^{-1-p^5}\alpha^{p^5})  \right)  \\
		&=& \tr\left( \gamma\alpha a^{p^3-p}+\gamma\alpha^{p^2}a^{1-p^2}-\ell\gamma a^{p^3+1}
		+\gamma\alpha^{p^3}a^{1-p^4}+\gamma\alpha^{p^5}a^{p^3-p^5} \right) \\
		&=& \tr\left( (\gamma\alpha^{p^3}+(\gamma\alpha)^{p^3})a^{1-p^4}+
		    (\gamma\alpha^{p^2}+(\gamma\alpha^{p^5})^{p^3})a^{1-p^2}-\ell\gamma a^{p^3+1}\right) \\
		    &=& \tr\left( (\gamma+\gamma^{p^3})(\alpha^{p^3}a^{1-p^4}+\alpha^{p^2}a^{1-p^2})
		    -\ell\gamma a^{p^3+1}\right).
	\end{eqnarray*}
	Since $\gamma^{p^3-1}=-1$
	and hence $\gamma^{p^3}=-\gamma$, from the above equation we have
	$\mbox{RHS}=\tr(-\ell\gamma a^{p^3+1})=-\ell\tr(\gamma a^{p^3+1})=-\ell (\gamma a^{p^3+1}-\gamma a^{p^3+1}+\gamma a^{p^3+1}-\gamma a^{p^3+1}+\gamma a^{p^3+1}-\gamma a^{p^3+1})=0$.
Hence, if  $-1-\tr(\alpha\beta)\neq 0$, (\ref{eq2}) does not have solutions and we complete the proof.
\end{proof}

\begin{remark}
1. The functions of Theorem \ref{newfun} are of the form $G(x^{p+1})$. They satisfy Corollary \ref{dto1map}. Showing this would result in a similar but slightly more complex proof.\\
2. The condition on $\gamma$ in Theorem \ref{newfun} (ii) is equivalent to: ``$\gamma^2\in \gf_{p^3}^*$ and $\gamma\not\in \gf_{p^3}^*$''.\\
3.  For $p=3$ and $n=4$, by visiting
exhaustively all $\alpha,\beta$ and $\gamma$ in $\gf_{p^n}$ satisfying the condition in Theorem \ref{newfun} (i),
we get only one class under CCZ equivalence, namely $x^4$. It is interesting to mention that Theorem \ref{gengraph} in Section 5 will show that the image of any quadratic zero-difference $3$-balanced function on $\gf_{3^4}$
is a $(81,20,1,6)$-SRG. We know from \cite{81graph} that there is only one such graph up to isomorphism.\\
4.  For $p=5, n=4$, we could only perform partial search of coefficients $\alpha,\beta,\gamma\in\gf_{5^4}$;
we also get one class under CCZ inequivalence of quadratic zero-difference $5$-balanced function, namely $x^6$.\\
5. For $n=8,10$, by a computer search, we did not find other coefficients
$\alpha,\beta,\gamma$ such that $F$ in Theorem \ref{newfun} is a quadratic zero-difference $p$-balanced function, except for $\alpha=0$.\\
6. In Theorem \ref{newfun}(ii), there exist coefficients $\alpha,\beta,\gamma$ which do not satisfy the conditions in the Theorem, while the function $F$ is still a zero-difference $p$-balanced function. 
\end{remark}

\begin{remark}
It is worth mentioning that the ZDB functions considered in this paper, which are on the non-cyclic group  can be used to construct other objects. In \cite{ZDB-Ding}, Ding made use of zero-difference balanced functions to construct optimal constant composition codes (CCCs in short). An $(n, M, d, [w_0,\ldots,w_{q-1}])_q$ \textit{constant composition code} is a code over an Abelian group $\{b_0, b_1, \ldots, b_{q-1} \}$ with length $n$, size $M$, and minimum Hamming distance $d$ such that, in every codeword, the element $b_i$ appears exactly $w_i$ times for every $i$.

Let $A=\{a_0,\ldots,a_{n-1}\}$ and $B=\{b_0, \ldots, b_{\ell-1}\}$ be two Abelian groups, and let $\Pi$ be a function from $A$ to $B$. Define $w_i=|\{x\in A \;|\; \Pi(x)=b_i \}|$ for $0\leq i\leq \ell-1$.
Now define $\mathcal{C}_\Pi$ as 
\begin{equation}
\label{CCCdef}
\mathcal{C}_\Pi = \Big\{  \Big(\Pi(a_0+a_i), \ldots, \Pi(a_{n-1} + a_i)\Big) : 0\leq i\leq n-1   \Big\}.
\end{equation}
If $\Pi$ is a zero-difference $\delta$-balanced function, then $\mathcal{C}_\Pi$ is an $(n, n, n-\lambda, [w_0,\ldots,w_{\ell-1}])_\ell$ CCC over $B$.
In Proposition 6 and Theorem \ref{newfun}, we provide zero-difference $p$-balanced functions from $\gf_{p^n}$ to itself, where $n=2k$. By the above result, we may obtain optimal CCCs using these ZDB functions. In the following we determine the parameters of these CCCs. Note that we state the result in a more general form.

Let $F$ be a function from $\gf_{p^n}$  to itself with the form $F(x)=G(x^d)$, where $d\mid p^n-1$ and $G|_{C_d}$ is an injection. Assuming that $F(0)=0$. Denoting the preimage set of $C_d\setminus \{0\}$ by $G^{-1}(C_d^*)$.
Indexing the elements in $\gf_{p^n}$ as 
\[ \left[\; 0, x\in G^{-1}(C_d^*), x\not\in G^{-1}(C_d^*)\; \right].  \]
Then the code defined in \emph{(\ref{CCCdef})} is an optimal CCC with parameter
$$
\Big( p^n, p^n, p^n - (d-1), [ 1, \underbrace{ {d, \cdots, d} }_{(p^n-1)/d} , \underbrace{ {0, \cdots, 0}}_{(p^n- 1)(d-1)/d}  ] \Big)_{p^n}.
$$

The proof of the above claim is not difficult. Indeed, it is easy to see that $F$ is a zero-difference $(d-1)$-balanced function on $\gf_{p^n}$. Indeed, for any $a\neq 0$, the equation $F(x+a)-F(x)=0$ leads to $G((x+a)^d)=G(x^d)$, and therefore $(x+a)^d=x^d$ since $G|_{C_d}$ is an injection. Now we have that $x+a=wx$ for some $w\in\{ x\in\gf_{p^n} | x^d =1 \}$. Note that there are $d$ such $w$ and only for $w=1$ the equation $x+a=wx$ does not have solution (since $a\neq 0$), we have shown that $F$ is zero-difference $(d-1)$-balanced function, and therefore $\mathcal{C}_\Pi$ is a constant composition code. Now we determine the frequency set $\{ w_i : 0\leq i\leq p^n-1\}$, where $w_i = |\{ x\in\gf_{p^n} \;|\; F(x) = G(x^d) = b_i \}|$. It is clear to see that the equation $G(x^d)=b_i$ has solution if and only if $b_i\in G(C_d)$. First, it is clear that $w_0=1$ since from $G(x^d)=0$ we have that $G(x^d)=0=G(0)$, and then $x=0$ as $G|_{C_d}$ is an injection. Otherwise, if $b_i\in G(C_d)$ and $b_i\neq 0$ we have $w_{b_i}=d$. This is because from the equation $G(x^d)=b_i$ we get $x^d=G^{-1}(b_i)$ (using the assumption $G|_{C_d}$ is an injection), which has exactly $d$ solutions.   
We complete the proof.

\end{remark}

\section{Strongly regular graphs and quadratic zero-difference $p^t$-balanced functions}

In this section, we discuss the relationship between partial differential sets and
quadratic zero-difference $p^t$-balanced functions, which are of the form
$F(x)=G(x^{p^t+1})$ where $G|_{C_{p^t+1}}$ is an injection.
We recall first a well-known fact and we give a lemma which is used to prove Theorem \ref{gengraph} below.

\begin{lemma}[\cite{pottbook}]
	\label{rationalinteger}
	An algebraic integer $X\in\mathbb{Z}[\zeta_p]$ is a rational integer if and only if
	$\sigma_\alpha(X)=X$ for all $\alpha\in\gf_p$ with $\alpha\neq 0$, where
	$\sigma_\alpha\in \mbox{Gal}(\mathbb{Q}(\zeta_p)/\mathbb{Q})$ defined by
	$\sigma_\alpha(\zeta_p)=\zeta_p^\alpha$.
\end{lemma}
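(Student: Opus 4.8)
The plan is to derive this from the fundamental theorem of Galois theory together with the fact that $\mathbb{Z}$ is integrally closed in $\mathbb{Q}$. First I would recall that $\mathbb{Q}(\zeta_p)/\mathbb{Q}$ is the $p$-th cyclotomic extension, which is Galois of degree $p-1$ (the minimal polynomial of $\zeta_p$ being the $p$-th cyclotomic polynomial $1+x+\cdots+x^{p-1}$), and that its Galois group consists exactly of the automorphisms $\sigma_\alpha:\zeta_p\mapsto\zeta_p^\alpha$ for $\alpha\in\gf_p$ with $\alpha\neq 0$. These $p-1$ maps are pairwise distinct, hence they exhaust $\mbox{Gal}(\mathbb{Q}(\zeta_p)/\mathbb{Q})$.

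The forward implication is immediate: if $X$ is a rational integer, then in particular $X\in\mathbb{Q}$, so every automorphism $\sigma_\alpha$, fixing $\mathbb{Q}$ pointwise, satisfies $\sigma_\alpha(X)=X$.

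For the converse, I would assume $\sigma_\alpha(X)=X$ for every $\alpha\neq 0$. Since the maps $\sigma_\alpha$ constitute the entire Galois group, $X$ lies in the fixed field of $\mbox{Gal}(\mathbb{Q}(\zeta_p)/\mathbb{Q})$, which by the fundamental theorem of Galois theory is precisely $\mathbb{Q}$. Thus $X\in\mathbb{Q}$. On the other hand, $X$ is by hypothesis an algebraic integer; since $\mathbb{Z}$ is integrally closed in $\mathbb{Q}$, any rational number that is an algebraic integer must in fact lie in $\mathbb{Z}$. Therefore $X$ is a rational integer, as claimed.

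The only point requiring care is the assertion that the listed $\sigma_\alpha$ genuinely realize \emph{every} element of the Galois group; this rests on the irreducibility of the $p$-th cyclotomic polynomial over $\mathbb{Q}$, which guarantees both that each $\sigma_\alpha$ is a well-defined automorphism and that there are exactly $p-1$ of them. Beyond this standard input, the argument is a direct appeal to Galois theory and to integral closure, so I do not anticipate any substantive obstacle.
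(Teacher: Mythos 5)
Your proof is correct: the paper itself gives no proof of this lemma but simply cites \cite{pottbook}, and your argument --- that the $\sigma_\alpha$ exhaust $\mbox{Gal}(\mathbb{Q}(\zeta_p)/\mathbb{Q})$ by irreducibility of the $p$-th cyclotomic polynomial, that the fixed field is $\mathbb{Q}$, and that $\mathbb{Z}$ is integrally closed in $\mathbb{Q}$ --- is exactly the standard one the reference relies on. No gaps; note only that the hypothesis $X\in\mathbb{Z}[\zeta_p]$ already guarantees $X$ is an algebraic integer, so that assumption comes for free.
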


In the following, for any nonzero integer $a$, we use the notation
$\od(a)$ to denote the highest integer $t$ such that $2^t\mid a$ and $2^{t+1}\nmid a$.

\begin{lemma}
	\label{divide}
	Let $p$ be a prime and $t, n$ be two positive integers.
	  Let $a=\od(p^t+1), b=\od(p^t-1), c=\od(p^n-1)$ and $d=\od(p^n+1)$.
	Then:
	\begin{enumerate}[(i)]
	  \item $\gcd(p^t+1,p^n-1)=\delta_{t,n}\cdot\frac{p^{\gcd(2t,n)}-1}{p^{\gcd(t,n)}-1}$,
	    where $\delta_{t,n}=2^{\min(a,c)+\min(b,c)-\min(a+b,c)}\in \{1,2\}$.
	  Furthermore, $p^t+1\mid p^{n}-1$ if and only if $2t\mid n$;
	  \item
	      $\gcd(p^t+1,p^n+1)=\eta_{t,n}\cdot \frac{p^{\gcd(2t, 2n)}-1}{p^{\gcd(t, 2n)}-1}
	      \cdot\frac{p^{\gcd(t, n)}-1}{p^{\gcd(2t, n)}-1}$,
	    where $\eta_{t,n}=\frac{\eta_{t,n}'\delta_{t,2n}}{\delta_{t,n}}\in \{1,2\}$
	    and $\eta_{t,n}'= 2^{\min(a,c)+\min(a,d)-\min(a,c+d)}$.
	    Furthermore, $p^t+1\mid p^n+1$ if and only if
	    $n=\ell t$ for some odd integer $\ell$.
	
	  \item Let $n=2kt$ for some positive integer $k$. Then, for every positive integer $i$, we have
			$p^t+1\mid p^{n/2+i}-1$ if and only if $i\equiv kt\mod 2t$; and
			$p^t+1\mid p^{n/2+i}+1$ if and only if $i\equiv (k+1)t\mod 2t$.
\end{enumerate}
\end{lemma}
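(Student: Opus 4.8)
The plan is to reduce every statement to the classical identity $\gcd(p^a-1,p^b-1)=p^{\gcd(a,b)}-1$, supplemented by the lifting-the-exponent (LTE) formulas for the $r$-adic valuations $v_r$. The unifying preliminary fact is that the multiplicative order of $p$ modulo $p^t+1$ equals $2t$: the congruence $p^t\equiv-1$ gives $p^{2t}\equiv1$, while for $1\le k\le t$ one has $0<p^k-1<p^t+1$ and for $t<k<2t$ one has $p^k\equiv-p^{k-t}$ with $0<p^{k-t}+1<p^t+1$, so no smaller exponent works. From this, the two ``Furthermore'' assertions are immediate: $p^t+1\mid p^n-1\Leftrightarrow p^n\equiv1\Leftrightarrow 2t\mid n$, and $p^t+1\mid p^n+1\Leftrightarrow p^n\equiv-1\equiv p^t\Leftrightarrow p^{n-t}\equiv1\Leftrightarrow n\equiv t\pmod{2t}$, i.e. $n=\ell t$ with $\ell$ odd.

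For the exact formula in (i), I would compare the two sides prime by prime. For an odd prime $r\ne p$, letting $o$ be the order of $p$ modulo $r$ and using $v_r(p^k-1)=v_r(p^o-1)+v_r(k)$ when $o\mid k$ (and $0$ otherwise) together with $v_r(p^t+1)=v_r(p^{2t}-1)-v_r(p^t-1)$, a short case split on whether $o$ divides $t$ and $n$ shows that $v_r(\gcd(p^t+1,p^n-1))$ and $v_r\bigl(\frac{p^{\gcd(2t,n)}-1}{p^{\gcd(t,n)}-1}\bigr)$ agree (both equal $v_r(p^o-1)+v_r(\gcd(2t,n))$ in the single regime where they are nonzero, and $0$ otherwise). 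The prime $r=p$ is trivial. The entire content of $\delta_{t,n}$ is therefore concentrated at $r=2$: since $\od(\gcd(p^t+1,p^n-1))=\min(a,c)$, the claim amounts to the identity $\od(p^{\gcd(2t,n)}-1)-\od(p^{\gcd(t,n)}-1)=\min(a+b,c)-\min(b,c)$, which I would establish by the two-adic LTE after splitting on the parities of $t$ and $n$ (using $\od(p^{2t}-1)=a+b$). Subtracting from $\min(a,c)$ yields the stated exponent $\min(a,c)+\min(b,c)-\min(a+b,c)$.

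To see $\delta_{t,n}\in\{1,2\}$, I would use that $\min(a,b)\le1$ always---for odd $p$ exactly one of $p^t\pm1$ is $\equiv2\pmod4$, and for $p=2$ all of $a,b,c,d$ vanish---and check, by a three-case split on the location of $c$ relative to $a$, $b$ and $a+b$, that $\min(a,c)+\min(b,c)-\min(a+b,c)\in\{0,1\}$ whenever $\min(a,b)\le1$. For (ii) I would bootstrap from (i) rather than redo the analysis: writing $p^n+1=\frac{p^{2n}-1}{p^n-1}$ shows, for odd $r\ne p$, that $v_r(\gcd(p^t+1,p^n+1))=v_r(\gcd(p^t+1,p^{2n}-1))-v_r(\gcd(p^t+1,p^n-1))$, which by (i) is exactly $v_r$ of $\frac{p^{\gcd(2t,2n)}-1}{p^{\gcd(t,2n)}-1}\cdot\frac{p^{\gcd(t,n)}-1}{p^{\gcd(2t,n)}-1}$; one checks through the order $o$ that the regime forcing $r\mid\gcd(p^t+1,p^n+1)$ makes the three factors consistent. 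The leftover two-part is $\eta_{t,n}=\eta'_{t,n}\,\delta_{t,2n}/\delta_{t,n}$, where $\delta_{t,2n}$ and $\delta_{t,n}$ come from applying (i) at $2n$ and at $n$, and $\eta'_{t,n}=2^{\min(a,c)+\min(a,d)-\min(a,c+d)}$ records the two-adic defect of $p^n+1=(p^{2n}-1)/(p^n-1)$; membership $\eta_{t,n}\in\{1,2\}$ follows from the same bound $\min(a,b)\le1$.

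Finally, part (iii) is a corollary of the divisibility criteria. With $n=2kt$ and $m=n/2+i=kt+i$, criterion (i) gives $p^t+1\mid p^m-1\Leftrightarrow2t\mid kt+i\Leftrightarrow i\equiv-kt\equiv kt\pmod{2t}$ (since $2kt\equiv0$), and criterion (ii) gives $p^t+1\mid p^m+1\Leftrightarrow m=\ell t$ with $\ell$ odd $\Leftrightarrow t\mid i$ and $k+i/t$ odd $\Leftrightarrow i\equiv(k+1)t\pmod{2t}$. The main obstacle throughout is the two-adic bookkeeping that pins down $\delta_{t,n}$ and $\eta_{t,n}$ and confirms they lie in $\{1,2\}$; the odd-prime computations, though requiring the order-$o$ case analysis, are routine once the LTE formulas are invoked.
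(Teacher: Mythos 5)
Your proposal is correct, and it reaches the lemma by a genuinely different route than the paper. The paper never invokes the multiplicative order of $p$ modulo $p^t+1$: it proves the exact formula in (i) first, via the near-multiplicativity identity $\gcd(p^t+1,N)\cdot\gcd(p^t-1,N)=\delta_{t,n}\,\gcd(p^{2t}-1,N)$ with $N=p^n-1$, combined with the classical $\gcd(p^{2t}-1,p^n-1)=p^{\gcd(2t,n)}-1$, and only then extracts the ``furthermore'' divisibility criteria from the exact formula by a case analysis on $\od(t)$ versus $\od(n)$ (arguing, e.g., that $\delta_{t,n}(p^{\gcd(t,n)}+1)=p^t+1$ forces $t\mid n$ because otherwise $(p^t+1)/(p^{\gcd(t,n)}+1)\neq 1,2$); part (ii) is bootstrapped from (i) via $p^{2n}-1=(p^n-1)(p^n+1)$ exactly as you do, and (iii) is the same computation in both treatments, with the paper also only saying ``tedious but easy'' for $\eta_{t,n}\in\{1,2\}$, so your level of detail there is comparable. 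Your order-of-$p$-mod-$(p^t+1)$ argument yields both divisibility criteria directly and independently of the exact formulas, which is cleaner than the paper's route and makes transparent that (iii) needs none of the $\delta,\eta$ bookkeeping; conversely, your prime-by-prime LTE verification of the exact formulas is heavier than the paper's one-line gcd manipulation, though it localizes where each factor comes from, and I have checked that the regime analysis you sketch (through the order $o$ of $p$ modulo $r$) does close in every case. Two small points to tighten: the splitting $v_r(\gcd(p^t+1,p^{2n}-1))=v_r(\gcd(p^t+1,p^n-1))+v_r(\gcd(p^t+1,p^n+1))$ is valid for odd $r$ precisely because $r$ cannot divide both $p^n-1$ and $p^n+1$ --- this should be stated explicitly, since $\min$ does not distribute over sums of valuations in general; and your 2-adic identity $\od(p^{\gcd(2t,n)}-1)-\od(p^{\gcd(t,n)}-1)=\min(a+b,c)-\min(b,c)$ needs no parity case-split at all, being immediate from $\od(p^{\gcd(u,v)}-1)=\od(\gcd(p^u-1,p^v-1))=\min\bigl(\od(p^u-1),\od(p^v-1)\bigr)$ applied with $(u,v)=(2t,n)$ and $(t,n)$, together with $\od(p^{2t}-1)=a+b$.
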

\begin{proof}
   (i)
  First note that, since $p^t-1$ and $p^t+1$ have gcd equal to $1$ if $p=2$ and to $2$ if $p$ is odd,
  we have $\gcd(p^t+1,p^n-1)\gcd(p^t-1, p^n-1)=\delta_{t,n}\gcd((p^t+1)(p^t-1), p^n-1))$, where $\delta_{t,n}$ equals 1 if $p=2$ and is a power of 2 if $p$ is odd. It is a simple matter to see more precisely that $\delta_{t,n}$ equals the value defined above. Since
  $\gcd((p^t+1)(p^t-1), p^n-1))=\gcd(p^{2t}-1, p^n-1)= p^{\gcd(2t,4n)}-1$, this
  proves the value of $\gcd(p^t+1,p^n-1)$. The fact that $\delta_{t,n}\in \{1,2\}$ is obvious when $p=2$. For $p$ odd, from $\gcd(p^t+1,p^t-1)=2$, we
 have $\min(a,b)=1$, where $a,b$ are defined above.
 If $a=1$, then $\min(a,c)+\min(b,c)-\min(a+b,c)=1+\min(b,c)-\min(b+1,c)$ equals either $0$ or $1$,
 and therefore $\delta_{t,n}=1$ or $2$. Similarly we may show the case $b=1$.
\\ It is straightforward that, if $2t\mid n$ then the value obtained for $\gcd(p^t+1, p^n-1)$ equals $p^t+1$.
 Conversely:\\
 (1) If $\od(t) \geq \od(n)$ then $\gcd(2t,n)=\gcd(t,n)$ and hence
 $\gcd(p^t+1,p^{n}-1)=\delta_{t,n}$. Since $\delta_{t,n}\in\{1,2\}$
 and $p^t + 1 > 2$,  $p^t+1\mid p^n-1$ cannot happen. \\
 (2) If $\od(t) < \od(n)$ then
 $\gcd(p^t+1,p^{n}-1)=\delta_{t,n}(p^{\gcd(t,n)}+1)$. It is easy to check that in this case if $t\nmid n$, then
 $p^t+1\nmid p^n-1$ as $\delta_{t,n}(p^{\gcd(t,n)}+1)\neq p^t+1$, since $(p^t+1)/(p^{\gcd(t,n)}+1)\neq 1,2$. Therefore we have $t\mid n$.
 Further, one may check in this case we have $2t\mid n$.\\

 (ii) Similarly to the beginning of (i),
  since $p^n-1$ and $p^n+1$ have gcd equal to $1$ if $p=2$ and to $2$ if $p$ is odd,
  we have $\gcd(p^t+1,p^n+1)\gcd(p^t+1, p^n-1)=\eta_{t,n}'\gcd(p^t+1, (p^n-1)(p^n+1))$, where
  $\eta_{t,n}'=2^{\min(a,c)+\min(a,d)-\min(a,c+d)}$.
  Using (i), we get
  \begin{equation}
    \label{divide:eq1}
	\begin{array}{lll}
	  \gcd(p^t+1, p^n+1) &=& \eta_{t,n}'\frac{\gcd(p^t+1, p^{2n}-1)}{\gcd(p^t+1,p^n-1)} \\ [2ex]
	  &=& \eta_{t,n}'\left(\delta_{t,2n}\cdot\frac{p^{\gcd(2t,2n)}-1}{p^{\gcd(t,2n)}-1}\right)
	  \left(\frac{1}{\delta_{t,n}}\cdot\frac{p^{\gcd(t,n)}-1}{p^{\gcd(2t,n)-1}}\right) \\ [2ex]
	   &=& \frac{\eta_{t,n}'\delta_{t,2n}}{\delta_{t,n}}\cdot
	   \frac{(p^{\gcd(2t,2n)}-1)(p^{\gcd(t,n)}-1)}{(p^{\gcd(t,2n)}-1)(p^{\gcd(2t,n)}-1)}.
	\end{array}
  \end{equation}
  This completes the proof of the value of $\gcd(p^t+1,p^n+1)$.
  It is tedious but easy to show that $\eta_{t,n}\in\{1,2\}$.\\
  (1) If $\od(t)= \od(n)$, then from (\ref{divide:eq1}) we have
  \begin{eqnarray*}
    \gcd(p^t+1,p^n+1) &=&  \eta_{t,n}\cdot \left(p^{\gcd(t,n)}+1\right).
  \end{eqnarray*}
  It is easy to see that if $t\nmid n$ then $p^t+1\nmid p^n+1$ since $(p^t+1)/(p^{\gcd(t,n)}+1)\neq 1,2$.
  Hence we have $t\mid n$ and furthermore in this case $n/t$ is odd.\\
  (2) If $\od(t)\neq \od(n)$, then one may easily verify that in this case $\gcd(p^t+1, p^n+1)=\eta_{t,n}$.
  Since $\eta_{t,n}\in\{1,2\}$ and $p^t+1>2$, in this case $p^t+1\nmid p^n+1$.\\
  Conversely, if $t\mid n$ and $n/t$ is odd, one may easily verify that $p^t+1\mid p^n+1$.

(iii)
  By (i), we have $p^t+1\mid p^{n/2+i}-1$ if and only if $2t \mid n/2+i$,
  i.e.  $i\equiv -kt\mod 2t\equiv kt \mod 2t$ we proved then the first part.
  Next, by (ii), we have $p^t+1\mid p^{n/2+i}+1$ if and only if
  $n/2+i=kt+i=\ell t$ for some odd integer $\ell$. Equivalently,
  $i\equiv (k+1)t\mod 2t$. This completes the proof.
\end{proof}

\begin{lemma}
  \label{formofF}
  Let $F:\gf_{p^n}\mapsto \gf_{p^n}$ be defined as
  $F(x)=G(x^{p^t+1})$,  for some function $G$ over $\gf_{p^n}$.
  If  $F$ is quadratic, then
  $F$ can be represented in the form $F(x)=\sum\limits_{i\in \Omega}a_{i}x^{p^{k_i}+p^{\ell_i}}$,
  where $\Omega$ is some subset of $\mathbb{N}$ and, for every $i\in\Omega$, $t\mid (k_i-\ell_i)$ and $(k_i-\ell_i)/t$ is odd.
\end{lemma}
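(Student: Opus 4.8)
The plan is to exploit the invariance of $F$ under multiplication by $(p^t+1)$-th roots of unity, combine it with the linear independence of monomial functions, and then invoke the divisibility computations of Lemma \ref{divide}. Throughout I use the standing hypothesis of this section, namely that the functions considered are zero-difference $p^t$-balanced on $\gf_{p^n}$ with $n=2kt$, so that $2t\mid n$ and hence, by Lemma \ref{divide}(i), $p^t+1\mid p^n-1$. This guarantees that $\gf_{p^n}^*$ contains a cyclic group $U$ of $(p^t+1)$-th roots of unity of order \emph{exactly} $p^t+1$. The key structural observation is that $F(\omega x)=F(x)$ for every $\omega\in U$ and every $x$, because $(\omega x)^{p^t+1}=\omega^{p^t+1}x^{p^t+1}=x^{p^t+1}$ and $F=G(x^{p^t+1})$.

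First I would fix a canonical representation. Since $F$ is quadratic, after reducing exponents modulo $p^n-1$ and absorbing the (irrelevant) constant term, we may write $F(x)=\sum_{i\in\Omega}a_i\,x^{e_i}$ with $a_i\neq 0$, the $e_i$ pairwise distinct modulo $p^n-1$, and each $e_i$ of $p$-weight at most $2$. Applying the invariance $F(\omega x)=F(x)$ gives $\sum_i a_i\,\omega^{e_i}x^{e_i}=\sum_i a_i\,x^{e_i}$ for all $x$. The univariate monomial functions attached to distinct exponents modulo $p^n-1$ are linearly independent over $\gf_{p^n}$ (they form part of the standard basis of the space of polynomial functions), so matching coefficients and using $a_i\neq 0$ forces $\omega^{e_i}=1$ for every $\omega\in U$ and every $i$. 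Taking $\omega$ to be a generator of $U$, whose order is exactly $p^t+1$, the order divides the exponent and we obtain the \emph{integer} divisibility $p^t+1\mid e_i$. Because $p^t+1\ge 3$ is coprime to $p$, this already rules out every exponent of weight $1$ (a linear term $x^{p^c}$ would need $p^t+1\mid p^c$) and every ``diagonal'' term $x^{2p^c}$ (which would need $p^t+1\mid 2$). Hence each surviving $e_i$ has the form $p^{k_i}+p^{\ell_i}$ with $k_i\neq \ell_i$, which is the asserted shape.

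Finally I would convert the divisibility into the parity statement. Writing $s_i=|k_i-\ell_i|$ and factoring out the smaller power, $e_i=p^{\min(k_i,\ell_i)}(p^{s_i}+1)$; since $\gcd(p^t+1,p)=1$, the condition $p^t+1\mid e_i$ is equivalent to $p^t+1\mid p^{s_i}+1$. By Lemma \ref{divide}(ii) this holds if and only if $s_i=\ell t$ for some odd integer $\ell$, that is, $t\mid s_i$ and $s_i/t$ is odd; since $k_i-\ell_i=\pm s_i$, this is precisely $t\mid(k_i-\ell_i)$ with $(k_i-\ell_i)/t$ odd, completing the argument. The arithmetic heart of the lemma is thus entirely outsourced to Lemma \ref{divide}(ii), and I expect the only genuine (and modest) obstacle to be the clean passage from the functional identity $F(\omega x)=F(x)$ to the integer divisibility $p^t+1\mid p^{k_i}+p^{\ell_i}$: this step is where linear independence of monomials is invoked and, crucially, where one must use that $U$ has order exactly $p^t+1$, i.e. that $2t\mid n$ (without this hypothesis the conclusion can fail, as one sees already for $x^5=G(x^3)$ over $\gf_{2^3}$).
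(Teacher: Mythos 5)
Your proof is correct in the regime where the paper actually applies Lemma \ref{formofF} (every invocation, in Lemma \ref{dimEa} and Theorem \ref{gengraph}, assumes $n=2kt$), and it reaches the key divisibility by a genuinely different mechanism than the paper. The paper never reduces modulo $x^{p^n}-x$: writing $F(x)=\sum_j b_j x^{j(p^t+1)}$, the exponents are \emph{literal} integer multiples of $p^t+1$, so once this is matched with a quadratic form $\sum_i a_i x^{p^{k_i}+p^{\ell_i}}$ (the paper's terse parenthetical about the form being invariant under congruence modulo $p^n-1$ is doing the bookkeeping here), the divisibility $p^t+1\mid p^{k_i}+p^{\ell_i}$ is immediate. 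You instead work with the reduced representation and extract the divisibility from the invariance $F(\omega x)=F(x)$ for $\omega$ in the group $U$ of $(p^t+1)$-th roots of unity, via linear independence of the monomials $x^e$, $0\le e\le p^n-1$. From the factorization $p^{k_i}+p^{\ell_i}=p^{\ell_i}(p^{k_i-\ell_i}+1)$ and coprimality with $p$ onward, the two proofs coincide: both outsource the arithmetic entirely to Lemma \ref{divide}(ii). Your route buys a cleaner reduction step and an explicit elimination of the constant, weight-one and diagonal ($2p^c$) terms, which the paper glosses over; its cost is that it genuinely needs $U$ to have order exactly $p^t+1$, i.e.\ $p^t+1\mid p^n-1$, i.e.\ $2t\mid n$ --- your argument only detects divisibility by $\gcd(p^t+1,p^n-1)$ --- whereas the paper's unreduced-exponent device is designed to retain the full divisibility without that hypothesis (which is why the lemma is stated without it). Since the lemma is only ever used with $n=2kt$, importing the hypothesis is harmless.

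One concrete error, though only in a side remark: your claimed counterexample $x^5=G(x^3)$ over $\gf_{2^3}$ does not refute the unrestricted statement. The lemma places no upper bound on $k_i,\ell_i$ (the paper's proof explicitly refrains from reducing modulo $x^{p^n}-x$), and over $\gf_{2^3}$ one has $x^5=x^{12}=x^{2^3+2^2}$ as functions, with $k-\ell=1$ odd, so the asserted form does exist for this function. More generally, replacing $k_i$ by $k_i+n$ leaves the function unchanged and, when $n$ is odd, flips the parity of $k_i-\ell_i$, so the parity condition can always be arranged in that setting. Thus $2t\mid n$ is needed for \emph{your} argument, but your assertion that the conclusion itself fails without it, at least on the evidence of that example, is wrong.
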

\begin{proof}
  Since $F(x)=G(x^{p^t+1})$,
  then we have $F(x)=\sum_{j\ge 0}b_jx^{j(p^t+1)}$, where $b_j\in\gf_{p^n}$ (note that we do not bound here the values of the exponents since we do not reduce modulo $x^{p^n}-x$). Since $F$ is quadratic, we have $\sum_{j\ge 0}b_jx^{j(p^t+1)}=\sum_{i\in\Omega}a_ix^{p^{k_i}+p^{\ell_i}}$,
  where $a_i\in\gf_{p^n}$ and $\Omega$ is a subset of $\mathbb{N}$ (note indeed that the form $p^{k_i}+p^{\ell_i}$ of the exponents is invariant under congruence modulo $p^n-1$). We can assume that $p^{k_i}\ge p^{\ell_i}$ for each $i$ in $\Omega$. For each $i$ in $\Omega$, we have $p^t+1\mid p^{k_i}+p^{\ell_i}=p^{\ell_i}(p^{k_i-\ell_i}+1)$. Hence
  $p^t+1\mid p^{k_i-\ell_i}+1$. By Lemma \ref{divide}(ii), we obtain
  $t\mid k_i-\ell_i$ and $(k_i-\ell_i)/t$ is odd. The proof is completed.
\end{proof}

\begin{lemma}
  \label{dimEa}
  Let $F(x)=G(x^d)$ be a quadratic function from $\gf_{p^n}$ to itself, where $p$ is any prime,  $n=2kt$ and $\gcd(d,p^n-1)=p^t+1$ for some non-negative integer $t$.
  For each nonzero $a\in\gf_{p^n}$, define
  $$ E_a=\{s : s\in\gf_{p^n}| \tr(a(F(y+s)-F(y)))=0\ \mbox{for all}\ y\in\gf_{p^n} \}. $$
  Then $E_a$ is a vector space over $\gf_p$ with even dimension.
\end{lemma}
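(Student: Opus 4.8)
The plan is to reinterpret $E_a$ as the group of periods of the $p$-ary function $f_a(x):=\tr(aF(x))$ and then exploit a multiplicative symmetry of $F$ coming from the exponent $d$. Writing the defining condition as $\tr(a(F(y+s)-F(y)))=f_a(y+s)-f_a(y)$, we have $s\in E_a$ if and only if $f_a(y+s)=f_a(y)$ for all $y$. The set of such $s$ (the periods of $f_a$) is closed under addition and negation, hence is an additive subgroup of $(\gf_{p^n},+)$; since every subgroup of this elementary abelian $p$-group is automatically stable under scalar multiplication by $\gf_p$, $E_a$ is a vector space over $\gf_p$. (Note that quadraticity of $F$ is not actually needed for this part.)

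For the parity of the dimension, the key observation is that $F$ is invariant under scaling by $(p^t+1)$-th roots of unity. Let $\mu=\{\omega\in\gf_{p^n}:\omega^{p^t+1}=1\}$. Since $n=2kt$, Lemma \ref{divide}(i) gives $p^t+1\mid p^n-1$, so $|\mu|=\gcd(p^t+1,p^n-1)=p^t+1$. Because $\gcd(d,p^n-1)=p^t+1$ we have $p^t+1\mid d$, whence $\omega^d=1$ and therefore $F(\omega x)=G((\omega x)^d)=G(x^d)=F(x)$ for every $\omega\in\mu$; consequently $f_a(\omega x)=f_a(x)$. I then check that $E_a$ is $\mu$-invariant: for $s\in E_a$, $\omega\in\mu$ and any $z$, the substitution $z=\omega(\omega^{-1}z)$ gives $f_a(z+\omega s)=f_a\big(\omega(\omega^{-1}z+s)\big)=f_a(\omega^{-1}z+s)=f_a(\omega^{-1}z)=f_a(z)$, where the middle equalities use $f_a\circ(\omega\,\cdot)=f_a$ and $s\in E_a$. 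Hence $\omega s\in E_a$.

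It remains to count. The group $\mu$ acts by multiplication on the nonzero elements of $E_a$, and this action is free, since the stabilizer of any nonzero $s$ is $\{\omega\in\mu:\omega s=s\}=\{1\}$. Thus $E_a\setminus\{0\}$ splits into orbits each of size $|\mu|=p^t+1$, so $p^t+1$ divides $|E_a|-1=p^{\dim E_a}-1$, i.e. $p^{\dim E_a}\equiv 1 \pmod{p^t+1}$. Applying Lemma \ref{divide}(i) once more, now with $\dim E_a$ in the role of $n$, the divisibility $p^t+1\mid p^{\dim E_a}-1$ holds if and only if $2t\mid \dim E_a$. Therefore $2t$ divides $\dim_{\gf_p}E_a$, and in particular the dimension is even, as claimed.

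The only genuine content is spotting and verifying the right symmetry; the rest is bookkeeping. The single step to get right is the $\mu$-invariance of $E_a$ in the second paragraph, which is a one-line substitution once one notices $F(\omega x)=F(x)$. I expect no serious obstacle: this route deliberately avoids the more computational alternative of writing $f_a$ as a quadratic form, computing the radical of its polar bilinear form, and arguing parity through Walsh (Gauss) sums, where one would additionally have to control the $\gf_p$-linear functional obtained by restricting $f_a$ to that radical. By contrast, the orbit argument yields the (stronger) conclusion $2t\mid\dim E_a$ with essentially no calculation.
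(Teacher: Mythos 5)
Your proof is correct, and it takes a genuinely different route from the paper's. The paper gets the subspace property from quadraticity (closure of the identically-zero derivatives $\tr(a(F(y+s)-F(y)))$ under $\gf_p$-linear combinations) and then splits on the characteristic: for $p=2$ it cites the classical theory of quadratic Boolean functions, while for odd $p$ it proves the stronger fact that $E_a$ is an $\gf_{p^2}$-vector space, which requires the structural Lemma \ref{formofF} (every exponent of $F$ has the form $p^{\ell_i}(p^{e_it}+1)$ with $e_i$ odd, itself resting on Lemma \ref{divide}(ii)) together with a linearized-polynomial computation. You instead observe that $E_a$ is just the period group of $f_a=\tr(aF)$ --- so the subspace property is immediate and quadraticity is never used anywhere in your argument --- and you extract the parity from the free multiplicative action of the group $\mu$ of $(p^t+1)$-th roots of unity on $E_a\setminus\{0\}$; the two inputs, both verified correctly, are that $\mu\subseteq\gf_{p^n}$ with $|\mu|=p^t+1$ (Lemma \ref{divide}(i), since $2t\mid n$) and that $p^t+1\mid d$ forces $F(\omega x)=F(x)$, whence $E_a$ is $\mu$-stable and orbit counting gives $p^t+1\mid p^{\dim E_a}-1$, so $2t\mid\dim E_a$ by Lemma \ref{divide}(i) again. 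Your route buys three things: a characteristic-uniform proof with no appeal to the $p=2$ folklore (which, applied to $E_a$ as defined with constant $0$ rather than merely constant derivative, anyway needs the zero-difference observation supplied elsewhere by Corollary \ref{dto1map}); independence from Lemma \ref{formofF}; and the strictly stronger conclusion $2t\mid\dim E_a$, which is precisely the divisibility constraint ($t\mid i$ in the notation $|E_a|=p^{2i}$) that the paper only recovers later, inside the proof of Theorem \ref{gengraph}(ii), from integrality of $\chi_a(D)$ and Lemma \ref{divide}(iii), and which points directly toward Corollary \ref{hyperplane}. Your symmetry $F(\omega x)=F(x)$ is the multiplicative analogue of the relation $F(\alpha x)=-F(x)$ (with $\alpha^{p^t+1}=-1$) that the paper uses to show $D=D^{(-1)}$. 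Two cosmetic remarks: evenness follows from $2t\mid\dim E_a$ only when $t\geq 1$, but that is forced by the hypothesis $n=2kt\geq 1$ despite the ``non-negative'' phrasing (the paper's own proof equally degenerates at $t=0$); and your appeal to Lemma \ref{divide}(i) with exponent $\dim E_a$ should formally set aside the case $\dim E_a=0$, which is even anyway.
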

\begin{proof}
  Clearly $E_a\neq \emptyset$ as $0\in E_a$.
  Moreover, $E_a$ is a vector space over $\gf_p$ since if $\tr(a(F(y+s)-F(y)))$
  and $\tr(a(F(y+s')-F(y)))$ both equal the null function, then for every
  $u,v\in \gf_p$, $\tr(a(F(y+us)-F(y)))$ and $\tr(a(F(y-vs')-F(y)))$
  are also null and by subtraction $\tr(a(F(y+us)-F(y-vs')))$ is null and
  then $\tr(a(F(y+us+vs')-F(y)))$  is the null function.

  When $p=2$, it is well-known (see e.g. \cite{Cbook1}) that the dimension of $E_a$ is even.
  When $p$ is an odd prime, let us show that
  $E_a$ is a vector space over $\gf_{p^2}$, and therefore,
  the dimension of $E_a$ as a vector space over $\gf_p$ is even.

  By Lemma \ref{formofF}, $F$ has the form
  $F(x)=\sum\limits_{i\in\Omega}a_{i}x^{p^{k_i}+p^{\ell_i}}$,
  where $t\mid k_i-\ell_i$ and $(k_i-\ell_i)/t$ is odd for each $i\in\Omega$.
  An element  $s$ of $\gf_{p^n}$ belongs to $E_a$ if and only if:
  \begin{equation}
    \label{allzero}
 \forall y\in \gf_{p^n},\,    \tr(a(F(y+s)-F(y)))=0.
  \end{equation}
  Substituting $F(x)=\sum\limits_{i\in\Omega}a_{i}x^{p^{k_i}+p^{\ell_i}}$ into (\ref{allzero}) and simplifying, we have then:
  \begin{equation*}
   \forall y\in \gf_{p^n},\,    \tr\left(a\sum\limits_ia_i(s^{p^{k_i}+p^{\ell_i}}+s^{p^{k_i}}y^{p^{\ell_i}}
     +s^{p^{\ell_i}}y^{p^{k_i}}   ) \right)=0.
  \end{equation*}
  Therefore we have $s\in E_a$ if and only if, for every $y\in \gf_{p^n}$:
  \begin{equation}
    \label{lemma8:eq1}
    \begin{array}{lll}
	 -\tr\left(\sum\limits_iaa_is^{p^{k_i}+p^{\ell_i}}\right) &=&
	\tr\left(\sum\limits_iaa_i(s^{p^{k_i}}y^{p^{\ell_i}}
	 +s^{p^{\ell_i}}y^{p^{k_i}}   ) \right)   \\
	 &=& \tr\left(\sum\limits_i
	 \left((aa_is^{p^{k_i}})^{p^{-\ell_i}} + (aa_is^{p^{\ell_i}})^{p^{-k_i}}\right)y
	 \right).
     \end{array}
   \end{equation}
   Note that the left hand side of  (\ref{lemma8:eq1}) equals $-\tr(aF(s))$ and is null
 by letting $y=0$ in (\ref{allzero}) (we may assume without loss of generality that
   $F(0)=0$ as otherwise we may replace $F(x)$ with $F'(x)=F(x)-F(a)$ and $F'$ satisfies all properties
   of $F$ which are stated in the hypothesis). The condition becomes:
   \begin{equation}
     \label{lemma8:eq2}
	\sum\limits_i
	 \left((aa_is^{p^{k_i}})^{p^{-\ell_i}} + (aa_is^{p^{\ell_i}})^{p^{-k_i}}\right)=0.
   \end{equation}
  Recall that we have $t\mid (k_i-\ell_i)$ and $(k_i-\ell_i)/t$ is odd.
  In the following let $k_i-\ell_i=e_it$ for $i$ in $\Omega$, where $e_i$
  is some odd integer. Then (\ref{lemma8:eq2}) becomes:
  \begin{equation}
     \label{lemma8:eq3}
	\sum\limits_{i=0}^{n-1}
	\left((aa_i)^{p^{-\ell_i}}s^{p^{e_it}} + (aa_i)^{p^{-k_i}}s^{p^{-e_it}}\right)=0.
  \end{equation}
For any $s\in E_a$ and any $w\in\gf_{p^2}$, we have $ws\in E_a$. Indeed, $w^{p^e}$ equals $w$ if $e$ is even and equals $w^p$ if $e$ is odd. Then $(aa_i)^{p^{-\ell_i}}(ws)^{p^{e_it}} + (aa_i)^{p^{-k_i}}(ws)^{p^{-e_it}}=w^{p^t}[(aa_i)^{p^{-\ell_i}}s^{p^{e_it}} + (aa_i)^{p^{-k_i}}s^{p^{-e_it}}]=0$.
  For any constants $w_1,w_2\in\gf_{p^2}$ and any $s_1, s_2\in E_a$,
  we have then $w_1s_1+w_2s_2\in E_a$, since
  $\tr(a(F(y+w_1s_1)-F(y)))$ and $\tr(a(F(y-w_2s_2)-F(y)))$
  are constant zero and by subtraction $\tr(a(F(y+w_1s_1)-F(y-w_2s_2)))$ is constant zero and then $\tr(a(F(y+w_1s_1+w_2s_2)-F(y)))$  is constant zero.
\end{proof}

We will need the following result in \cite[Lemma 3]{Feng-Luo} for the proof of the main theorem.
\begin{lemma}
\label{Feng-Luo}
Let $f$ be a quadratic perfect nonlinear function with $f(-x)=f(x)$ for all nonzero $x\in\gf_{p^n}$ and $f(0)=0$.
Then the Walsh coefficient $\mathcal{W}_{\tr(aF)}(0)=\epsilon_{a,0}(\sqrt{p^*})^n$,
where $\epsilon_{a,0}\in\{\pm1\}$, $p^*=\left(\frac{-1}{p}\right)p$ and $\left(\frac{-1}{p}\right)$ is the Legendre symbol.
\end{lemma}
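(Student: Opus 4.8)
The plan is to reduce the evaluation of $\mathcal{W}_{\tr(af)}(0)=\sum_{x\in\gf_{p^n}}\zeta_p^{\tr(af(x))}$ to a classical Gauss sum attached to a nondegenerate quadratic form over $\gf_p$. First I would use the two hypotheses $f(0)=0$ and $f(-x)=f(x)$ to strip $f$ down to a homogeneous quadratic (Dembowski--Ostrom) form. Writing $f=Q+L$, with $Q$ its homogeneous degree-$2$ part and $L$ its linear part (the constant term being absent since $f(0)=0$), in odd characteristic we have $Q(-x)=Q(x)$ and $L(-x)=-L(x)$, so $f(-x)=f(x)$ forces $2L=0$, hence $L=0$ and $f=Q$. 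Fixing $a\neq 0$ and identifying $\gf_{p^n}$ with $\gf_p^n$, the map $\phi_a(x):=\tr(aQ(x))$ is then a homogeneous $\gf_p$-valued quadratic form in $n$ variables, with associated symmetric bilinear form $\beta_a(x,y)=\tr(aB(x,y))$, where $B(x,y)=Q(x+y)-Q(x)-Q(y)$.

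The heart of the argument is to show that $\phi_a$ is nondegenerate for every $a\neq 0$, and this is precisely where planarity enters. Suppose $\beta_a(x_0,y)=0$ for all $y$, i.e. $\tr(aB(x_0,y))=0$ for all $y\in\gf_{p^n}$. Since $f$ is PN, for every nonzero $u$ the derivative $x\mapsto f(x+u)-f(x)=B(x,u)+Q(u)$ is a permutation of $\gf_{p^n}$, so the $\gf_p$-linear map $y\mapsto B(u,y)=B(y,u)$ is bijective. If $x_0\neq 0$, applying this with $u=x_0$ shows that $B(x_0,y)$ runs over all of $\gf_{p^n}$ as $y$ varies, whence $\tr(az)=0$ for every $z\in\gf_{p^n}$, forcing $a=0$, a contradiction. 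Hence $x_0=0$, and $\phi_a$ has full rank $n$.

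Finally I would invoke the standard evaluation of the exponential sum of a nondegenerate quadratic form over $\gf_p$. Diagonalizing $\phi_a\sim\sum_{i=1}^n c_i x_i^2$ with all $c_i\neq 0$, the sum factors as $\prod_{i=1}^n\sum_{t\in\gf_p}\zeta_p^{c_it^2}=\left(\frac{c_1\cdots c_n}{p}\right)g^n$, where $g=\sum_{t\in\gf_p}\zeta_p^{t^2}$ is the quadratic Gauss sum. Using $g^2=\left(\frac{-1}{p}\right)p=p^*$, one has $g=\sqrt{p^*}$ for the appropriate branch, hence $g^n=(\sqrt{p^*})^n$; setting $\epsilon_{a,0}=\left(\frac{\det\phi_a}{p}\right)\in\{\pm1\}$ (the Legendre symbol of the determinant of $\phi_a$, which is well defined modulo squares) yields $\mathcal{W}_{\tr(af)}(0)=\epsilon_{a,0}(\sqrt{p^*})^n$.

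The main obstacle is the nondegeneracy step: one must translate the combinatorial planarity condition (all nonzero derivatives are permutations) into the algebraic statement that the polar form $\beta_a$ is nondegenerate for every $a\neq 0$. The reduction of $f$ to a pure quadratic form via the evenness hypothesis (which is what removes the linear term and thereby prevents a stray phase $\zeta_p^{c}$ from completing the square) and the Gauss-sum evaluation itself are routine, once the correct sign and square-root conventions for $g$ are fixed.
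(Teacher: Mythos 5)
Your proof is correct. Note that the paper itself contains no proof of this lemma --- it is imported verbatim from \cite[Lemma 3]{Feng-Luo} --- and your argument is precisely the standard one behind that citation: use evenness and odd characteristic to kill the linearized part so that $f$ is Dembowski--Ostrom, translate planarity (every derivative $x\mapsto B(x,u)+Q(u)$ is a bijection, hence so is its linear part) into nondegeneracy of the polar form $\beta_a(x,y)=\tr(aB(x,y))$ for each $a\neq 0$, then diagonalize over $\gf_p$ and evaluate via the quadratic Gauss sum. One small simplification: since $\epsilon_{a,0}\in\{\pm 1\}$ is left unspecified in the statement, you never need Gauss's determination of the sign of $g$; the identity $g^2=p^*$ alone gives $g=\pm\sqrt{p^*}$, and the resulting factor $(\pm 1)^n$ is absorbed into $\epsilon_{a,0}$.
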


Now we are ready to give the main result of this Section. Note that the first part of the
following result first appeared in \cite{weng}. We include it here and give a short proof
for the completeness of the paper. Note that the second part of the following result
may give rise to new SRGs by using certain ZDB functions and comparing them with the 
known constructions in \cite{databasesrg,Xiang-srg,ma-pds}.

\begin{theorem}
  \label{gengraph}
  Let $F(x)=G(x^d)$ be a quadratic function from $\gf_{p^n}$ to itself, where $p$ is any prime and
  $\gcd(d,p^n-1)=p^t+1$ for some non-negative integer $t$.
  Assume that the restriction of $G$ to $C_d=\{x^d : x\in \gf_{p^n}^*\}=C_{p^t+1}$ is an injection from $C_d$ to $\gf_{p^n}$.
  Define the set $D=\{ F(x) : x\in\gf_{p^n}  \}\setminus\{0\}$. Then:
  \begin{enumerate}[(i)]
    \item if $t=0$ and $p$ is an odd prime, then $D$ is a
	    \begin{eqnarray*}
		    & \left( p^n, \frac{p^n-1}{2}, \frac{p^n-3}{4} \right) \ \mbox{difference set}, &\mbox{when}\ p^n\equiv 3\mod 4, \\
		    & \left( p^n, \frac{p^n-1}{2}, \frac{p^n-5}{4}, \frac{p^n-1}{4} \right) \ \mbox{partial difference set},
		    &\mbox{when}\ p^n\equiv 1\mod 4.
	    \end{eqnarray*}
    \item if $t>0$ and $n$ is divisible by $2t$, then $D$ is a
	    \begin{equation*}
		    \left(p^n, \frac{p^n-1}{p^t+1},
		          \frac{p^n-3p^t-2-\epsilon p^{n/2+2t}+\epsilon p^{n/2+t}}{(p^t+1)^2},
                          \frac{p^n-\epsilon p^{n/2}+\epsilon p^{n/2+t}-p^t}{(p^t+1)^2}
            \right)
	    \end{equation*}
	    partial difference set, where $n=2kt$ and $\epsilon=(-1)^{k}$.
  \end{enumerate}
\end{theorem}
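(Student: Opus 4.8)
The plan is to pass to the additive group ${\cal G}=(\gf_{p^n},+)$ and compute all character values $\chi_b(D)$, then invoke Lemma \ref{pdsiff}(ii). Normalising $F(0)=0$, the hypothesis that $G|_{C_d}$ is injective together with the fact that $x\mapsto x^d$ is $(p^t+1)$-to-one from $\gf_{p^n}^*$ onto $C_d$ (recall $\gcd(d,p^n-1)=p^t+1$) shows that $0$ is attained only at $x=0$, that every element of $D=G(C_d)$ is attained exactly $p^t+1$ times, and that $|D|=\frac{p^n-1}{p^t+1}$. Evaluating the character $\chi_b(x)=\zeta_p^{\tr(bx)}$ on this image multiset gives $\sum_{x}\zeta_p^{\tr(bF(x))}=1+(p^t+1)\chi_b(D)$, so that, writing $S_b:=\sum_{x\in\gf_{p^n}}\zeta_p^{\tr(bF(x))}$, one has $\chi_b(D)=\frac{S_b-1}{p^t+1}$ for $b\neq0$ and $\chi_0(D)=\frac{p^n-1}{p^t+1}$. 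Thus the whole statement reduces to understanding the Gauss sums $S_b$ for $b\neq0$.

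Next I would extract the structure of $S_b$. By Lemma \ref{formofF}, $F$ is Dembowski--Ostrom, so $q_b:=\tr(bF)$ is a homogeneous quadratic form over $\gf_p$ with polar form $\widetilde B_b(s,y)=\tr\!\big(b(F(s+y)-F(s)-F(y))\big)$; a direct check shows the space $E_b$ of Lemma \ref{dimEa} equals $\{s\in\mathrm{rad}(\widetilde B_b): q_b(s)=0\}$. The standard identity $|S_b|^2=p^n\sum_{s\in\mathrm{rad}(\widetilde B_b)}\zeta_p^{q_b(s)}$ gives $S_b=0$ when $q_b$ is nonconstant on $\mathrm{rad}(\widetilde B_b)$, and $|S_b|^2=p^{\,n+\dim E_b}$ otherwise. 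Lemma \ref{dimEa} makes $\dim E_b$ even (and, refining its proof with $w\in\gf_{p^{2t}}$ and $e_it\equiv t\pmod{2t}$, shows $E_b$ is a $\gf_{p^{2t}}$-space, so $\dim E_b\in\{0,2t,4t,\dots\}$). Since $q_b$ restricts to an additive map on $\mathrm{rad}(\widetilde B_b)$, if it were nonzero there then $\dim E_b$ would be odd, a contradiction; hence $q_b$ always vanishes on its radical, $E_b=\mathrm{rad}(\widetilde B_b)$, and $S_b\neq0$ with $|S_b|=p^{(n+\dim E_b)/2}\in\mathbb Z$. In particular every $S_b$ is a rational integer, so $\overline{\chi_b(D)}=\chi_b(D)$; as $\overline{\chi_b(D)}=\chi_b(D^{(-1)})$, Lemma \ref{groupringeq} yields $D^{(-1)}=D$, i.e.\ $D$ is regular (and $0\notin D$).

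The crux, and the hardest step, is to show $\dim E_b\in\{0,2t\}$ together with the correct sign of $S_b$. From zero-difference $p^t$-balance I would first extract a uniform rank statement: with $B(s,y)=F(s+y)-F(s)-F(y)$ the $\gf_{p^n}$-valued polar form, $F(x+a)=F(x)$ reads $B(a,x)=-F(a)$, so balance forces the $\gf_p$-linear map $R_a\colon x\mapsto B(a,x)$ to have kernel of dimension exactly $t$ for every $a\neq0$; dually $s\in E_b\iff b\in(\mathrm{Im}\,R_s)^{\perp}$. The decisive structural input is the exponent condition $k_i\equiv \ell_i+t \pmod{2t}$ of Lemma \ref{formofF}: viewing $\gf_{p^n}$ as a rank-$k$ module over $\gf_{p^{2t}}$ with the involution $z\mapsto z^{p^t}$, $q_b$ is the $\gf_{p^{2t}}/\gf_p$-trace of a Hermitian-type form of $\gf_{p^{2t}}$-rank $r_b\le k$. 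The uniform rank then forces its radical to have $\gf_{p^{2t}}$-dimension at most $1$, i.e.\ $\dim E_b\in\{0,2t\}$, and the rigidity of Hermitian Gauss sums fixes the signs, giving $S_b=\epsilon\,p^{n/2}$ when $\dim E_b=0$ and $S_b=-\epsilon\,p^{n/2+t}$ when $\dim E_b=2t$, with $\epsilon=(-1)^k$ and $n=2kt$. For $t=0$ this is exactly Lemma \ref{Feng-Luo} applied to the even perfect nonlinear function $F=G(x^2)$, yielding part (i) (Weng's case). Ruling out radicals of $\gf_p$-dimension $\ge 4t$ and pinning the global sign $\epsilon$ is where the real work lies; Lemma \ref{divide} enters to certify the relevant divisibilities $p^t+1\mid p^{n/2\pm i}\mp1$.

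Finally I would close by counting. The two elementary power moments $\sum_{b\neq0}S_b=p^n-S_0=0$ and $\sum_{b\neq0}|S_b|^2=p^n\big(1+(p^n-1)(p^t+1)\big)-p^{2n}=p^{\,n+t}(p^n-1)$, computed by swapping the order of summation and using $F^{-1}(0)=\{0\}$ and the uniform preimage size $p^t+1$, determine the multiplicities $m_0=\frac{p^t(p^n-1)}{p^t+1}$ of the value $\epsilon p^{n/2}$ and $m_{2t}=\frac{p^n-1}{p^t+1}$ of the value $-\epsilon p^{n/2+t}$. Hence $\chi_b(D)=\frac{S_b-1}{p^t+1}$ takes exactly two values for $b\neq0$; writing the claimed $(\lambda,\mu)$ one checks $(\lambda-\mu)^2+4\big(\tfrac{p^n-1}{p^t+1}-\mu\big)=p^n$, so the PDS character equation of Lemma \ref{pdsiff}(ii) has roots $\frac{(\lambda-\mu)\pm p^{n/2}}{2}$, which coincide with the two values of $\frac{S_b-1}{p^t+1}$. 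Together with regularity and $0\notin D$ this shows $D$ is a $\big(p^n,\frac{p^n-1}{p^t+1},\lambda,\mu\big)$-PDS with the stated parameters, and Lemma \ref{pdsandsrg} converts it into the promised strongly regular graph, of negative Latin square type precisely when $k$ is even.
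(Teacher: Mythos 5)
Your overall architecture matches the paper's proof of Theorem \ref{gengraph}: reduce everything to the exponential sums $S_b=\mathcal{W}_{\tr(bF)}(0)$ via $\chi_b(D)=\frac{S_b-1}{p^t+1}$, show $S_b\overline{S_b}=p^n|E_b|$ with $E_b$ an even-dimensional $\gf_p$-space (Lemma \ref{dimEa}), and finish by moments and Lemma \ref{pdsiff}(ii); your two power-moment computations agree exactly with the paper's. But the decisive step of part (ii) --- that $|E_b|\in\{1,p^{2t}\}$ and that $S_b$ equals $\epsilon p^{n/2}$ at the small magnitude and $-\epsilon p^{n/2+t}$ at the large one --- is asserted, not proved: you appeal to a "Hermitian-type form" over $\gf_{p^{2t}}$ whose "uniform rank forces its radical to have $\gf_{p^{2t}}$-dimension at most $1$" and to "rigidity of Hermitian Gauss sums," and then explicitly concede that ruling out radicals of dimension $\geq 4t$ and pinning $\epsilon$ "is where the real work lies." The uniform-rank fact you extract ($\dim\ker R_a=t$ for all $a\neq 0$) yields, via your duality $s\in E_b\iff b\perp\mathrm{Im}\,R_s$, only the counting identity $\sum_{b\neq 0}(|E_b|-1)=(p^n-1)(p^t-1)$, which is precisely the paper's equation (\ref{N0eq1}) and by itself does not bound $\dim E_b$ by $2t$. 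Note also that your final moment count cannot rescue the signs: flipping both signs simultaneously leaves both $\sum_{b\neq0}S_b=0$ and $\sum_{b\neq0}|S_b|^2$ invariant, so an independent sign determination is indispensable. The paper supplies both missing pieces with one arithmetic mechanism: integrality of $\chi_a(D)=(X_a-1)/(p^t+1)$ plus Lemma \ref{divide}(iii) force $N_i\neq 0$ only for $i\equiv kt$ or $(k+1)t\pmod{2t}$, with the sign of $X_a=\pm p^{n/2+i}$ determined by the residue class; then $\sum_{a\neq0}X_a=0$ combined with (\ref{N0eq1}) and the elementary inequality $\frac{p^{2i}-p^t}{p^t-1}\geq p^i$ (equality iff $i=t$) kills every $N_i$ with $i\notin\{0,t\}$.

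There is a second genuine gap upstream of this: your regularity and integrality claims. You infer "every $S_b$ is a rational integer" from $|S_b|=p^{(n+\dim E_b)/2}\in\mathbb{Z}$, which is a non sequitur --- an algebraic integer in $\mathbb{Z}[\zeta_p]$ of rational-integral modulus need not be rational (e.g.\ $p^m\zeta_p$) --- and both your deduction of $D^{(-1)}=D$ and the divisibility step that pins the signs hinge on this rationality. The paper argues in the opposite order: it first proves $D=D^{(-1)}$ directly, by exhibiting $\alpha\in\gf_{p^n}$ of order $2(p^t+1)$ (possible since $2(p^t+1)\mid p^n-1$ when $n=2kt$) and using the exponent structure of Lemma \ref{formofF} to check $F(\alpha x)=-F(x)$; only then does it conclude $\chi_a(D)\in\mathbb{Z}$ via Lemma \ref{rationalinteger}, whence $X_a=\pm p^{n/2+i}$. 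To repair your chain you would have to either reproduce that computation or invoke the evaluation of Gauss sums of even-rank quadratic forms in odd characteristic --- neither appears in your text. (Your parity argument that $q_b$ vanishes on the radical is acceptable, though the paper gets this more directly from zero-difference balance via Corollary \ref{dto1map}; and your brief treatment of part (i) via Lemma \ref{Feng-Luo}, as well as the closing verification through Lemma \ref{pdsiff}(ii), are sound in outline.)
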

\begin{proof}
	Without loss of generality, we may assume that $d=p^t+1$.
	Let us denote the additive group of $\gf_{p^n}$ by ${\cal G}$.
	By Corollary \ref{invfor}, to prove that $D$ is a (partial) difference set with the prescribed parameters,
	we need to determine the character values of $D$. Now, for each nontrivial character
	$\chi_a\in\widehat{{\cal G}}, a\in {\cal G}^*$, we have
	$\chi_a(D)=\sum\limits_{x\in C_d}\zeta_p^{\tr(aG(x))}$, where $\zeta_p$ is the chosen $p$-th root of unity.
	It is not difficult to see that
	\begin{eqnarray*}
		\mathcal{W}_{\tr(aF)}(0) = 
		\sum\limits_{x\in\gf_{p^n}}\zeta_p^{\tr(aF(x))}=1+d\sum\limits_{x\in C_d}\zeta_p^{\tr(aG(x))}
				    = 1+d\chi_a(D),
	\end{eqnarray*}
	and hence
	\begin{equation}
		\label{chiD}
		\chi_a(D)=\frac{1}{d}\left( \mathcal{W}_{\tr(aF)}(0) - 1 \right).
	\end{equation}
	Denoting $\mathcal{W}_{\tr(aF)}(0)$ by $X_a$, we have
	\begin{eqnarray}
		\label{xxbar}
		X_a\overline{X_a}=\sum\limits_{x,y\in\gf_{p^n}}\zeta_p^{\tr(a(F(x)-F(y)))}     
		=\sum\limits_{t\in\gf_{p^n}}\left(\sum\limits_{y\in\gf_{p^n}}\zeta_p^{\tr(a(F(y+t)-F(y)))}\right).
	\end{eqnarray}
	
	(i): For $t=0$ we have $d=p^0+1=2$. By hypothesis, we assume that $F(x)=G(x^2)$ is a quadratic function and that 
	$G|_{C_2}$ is an injection.
	Then $F$ is a PN function (see \cite{weng} or Corollary \ref{dto1map} above)
	and then $F(y+t)-F(y)$ is a PP
	over $\gf_{p^n}$ for any nonzero $t$. Therefore, by (\ref{xxbar}) we have 	
	$X_a\overline{X_a}=p^n$. By Lemma \ref{Feng-Luo}, we have $X_a=\varsigma(\sqrt{p^*})^n$, 
	where $\varsigma\in\{-1,1\}$ and $p^*=\left(\frac{-1}{p}\right)p$.
	In the following we divide the proof into two cases.
		
	Case 1: $n$ is even. Note that in this case $p^n\equiv 1\mod 4$, 
	which implies that  
	$X_a=\varsigma(\sqrt{p^*})^n=\varsigma\left((\frac{-1}{p})p\right)^{n/2}=\varsigma (\frac{-1}{p})^{n/2}p^{n/2}$.
	Hence we have $\chi_a(D)=\frac{1}{2}(\varsigma (\frac{-1}{p})^{n/2}p^{n/2}-1)$. It can be verified that 
	$\chi_a(DD^{(-1)})=\chi_a(D)\chi_a(D^{(-1)})=\chi_a(D)\overline{\chi_a(D)}
	=\frac{1}{4}(p^n+1-2\varsigma(\frac{-1}{p})^{n/2} p^{n/2})$. On the other hand,
	it can be easily computed that $(k-\lambda)+(\mu-\lambda)\chi_a(D)=\frac{1}{4}(p^n+1-2\varsigma(\frac{-1}{p})^{n/2} p^{n/2}))$. 
	Then, by Lemma \ref{pdsiff}, we have that $D$ is a PDS with the prescribed parameter.
			
	Case 2: $p\equiv 3\mod 4$ and $n$ is odd. Assume that $n=2m+1$. In this case we have 
	$X_a=\varsigma(\sqrt{p^*})^{2m+1}=\varsigma\left((\frac{-1}{p})p\right)^{m}\sqrt{p^*}=
	\varsigma\left(\frac{-1}{p}\right)^mp^m\sqrt{p^*}$, then $\chi_a(D)=\frac{1}{2}(\varsigma\left(\frac{-1}{p}\right)^mp^m\sqrt{p^*}-1)$.
	On the one hand, note that the complex conjugate of $\sqrt{p^*}$ equals $-\sqrt{p^*}$ 
	(since $\sqrt{p^*}\cdot (-\sqrt{p^*})=-p^*=-\left(\frac{-1}{p}\right)p=p=|\sqrt{p^*}|^2$). 
	Then $\chi(DD^{(-1)})=\chi(D)\overline{\chi(D)}=\frac{1}{4}(\varsigma\left(\frac{-1}{p}\right)^mp^m\sqrt{p^*}-1)(\varsigma\left(\frac{-1}{p}\right)^mp^m\overline{\sqrt{p^*}}-1)=\frac{1}{4}(\varsigma\left(\frac{-1}{p}\right)^mp^m\sqrt{p^*}-1)(-\varsigma\left(\frac{-1}{p}\right)^mp^m\sqrt{p^*}-1)=-\frac{1}{4}(\left(\frac{-1}{p}\right)p^n-1)=\frac{1}{4}(p^n+1)$ (since $\left(\frac{-1}{p}\right)=-1$ as $p\equiv3\mod 4$). On the other hand, one may compute that $k-\lambda=\frac{1}{4}(p^n+1)$. By Lemma \ref{pdsiff}, we prove that $D$ is the difference set with the prescribed parameters. \\
			
			(ii) Next, we deal with the case that $d=p^t+1$ with $t>0$. First we claim that $D=D^{(-1)}$ in this case. When $p=2$, this is clear. When $p$ is an odd prime, we have $p^n\equiv 1\mod 4$ since $n$ is even and then $-1$ is a square in $\gf_{p^n}$. Since $2(p^t+1)\mid p^n-1$ (one may see that $p^n-1=p^{2kt}-1=(p^t+1)(1-p^t+p^{2t}+\cdots+(-1)^{(2k-1)}p^{(2k-1)t})$ and the second factor is even), there exists an element $\alpha\in\gf_{p^n}$ such that its order is $2(p^t+1)$, namely $\alpha^{p^t+1}=-1$. By Lemma \ref{formofF}, we have $F(\alpha x)=\sum_{i\in\Omega}a_i(\alpha x)^{p^{k_i}+p^{\ell_i}}=\sum_{i\in\Omega}a_i(\alpha x)^{p^{\ell_i}(p^{e_it}+1)}$, where $e_i=(k_i-\ell_i)/t$ is odd by Lemma \ref{formofF}. Since $\alpha^{p^{\ell_i}(p^{e_it}+1)}=\alpha^{p^{\ell_i}(p^t+1)(1-p^t+\cdots+(-1)^{e_i-1}p^{(e_i-1)t})}=\left(\alpha^{p^t+1}\right)^{p^{\ell_i}(1-p^t+\cdots+(-1)^{e_i-1}p^{(e_i-1)t})}=-1$. Therefore from the above computations we get $F(\alpha x)=-F(x)$ for every $x\in\gf_{p^n}$, which implies that $D=D^{(-1)}$.
	
Further, since $F$ is quadratic, the function $f_{a,s}(y)=\tr(a(F(y+s)-F(y)))$ is affine for every $a,s\in\gf_{p^n}$ and the sum $\sum_{y\in\gf_{p^n}}\zeta_p^{\tr(a(F(y+s)-F(y)))}$ is then nonzero only when $f_{a,s}(y)$ is a constant function. For each $a\in {\cal G}^*$, let us define then
    $$
      E_a=\{s : s\in\gf_{p^n}| \tr(a(F(y+s)-F(y)))\ \mbox{is a constant for all}\ y\in\gf_{p^n} \}.
    $$
	According to Corollary \ref{dto1map},  for each nonzero $s$, there always exists $y\in\gf_{p^n}$ such that $F(y+s)-F(y)=0$.
	As a result, the set $E_a$ actually is
	$$ E_a=\{s:s\in\gf_{p^n}| \tr(a(F(y+s)-F(y)))=0\ \mbox{for all}\ y\in\gf_{p^n} \} $$
	and Relation (\ref{xxbar}) becomes $
		X_a\overline{X_a}=p^n|E_a|$.
\\
     By Lemma \ref{dimEa}, we know that $E_a$ is a vector space over $\gf_p$ and its dimension is even.
	For all $0\leq i\le n/2$, let $N_i$ be the number of nonzero $a\in\gf_{p^n}$ such that $|E_a|=p^{2i}$.
	Now we consider the following sum,
	\begin{equation}
		\label{threesum}
		\sum\limits_{a\in\gf_{p^n}^*}X_a\overline{X_a}=
	        \sum\limits_{a\in\gf_{p^n}^*}\sum\limits_{x,y\in\gf_{p^n}}\zeta_p^{\tr(a(F(x)-F(y)))}.
        \end{equation}
	On the one hand, the LHS of (\ref{threesum}) equals
	$\sum_{a\in\gf_{p^n}^*}p^n|E_a|=\sum_{i\geq 0}p^{n+2i}N_i$.
	On the other hand, the RHS of (\ref{threesum}) is
	\begin{eqnarray*}
	    \mbox{RHS} &=& \sum_{x,y\in\gf_{p^n}}\sum_{a\in\gf_{p^n}}\zeta_p^{\tr(a(F(x)-F(y)))}-p^{2n}
	    = p^n|\{(x,y)\in\gf_{p^n}^2 | F(x)=F(y) \}| - p^{2n} 	   \\
	    &=& p^n(1+d(p^n-1))-p^{2n} = (d-1)p^n(p^n-1)=(d-1)p^n\sum\limits_{i\geq 0}N_i.
        \end{eqnarray*}
	Hence, we have $\sum_{i\geq 0}p^{n+2i}N_i=(d-1)p^n\sum_{i\geq 0}N_i$, which implies that
	$\sum_{i\geq 0}(p^{2i}-(d-1))N_i=\sum_{i\geq 0}(p^{2i}-p^t)N_i=0$. Rearrange the term,
	we get
	\begin{equation}
		\label{N0eq1}
		(p^t-1)N_0=\sum\limits_{i\geq 1}(p^{2i}-p^t)N_i.
	\end{equation}
	Since $D=D^{(-1)}$ and $\chi_a(D)\in\mathbb{Z}[\zeta_p]$, we have $\chi_a(D)\in\mathbb{Z}$ by Lemma \ref{rationalinteger} and then
    $X_a=d\chi_a(D)+1\in\mathbb{Z}$.
	Therefore, from $X_a\overline{X_a}=p^{n+2i}$ for some $i\geq 0$ we get
	$X_a=\pm p^{n/2+i}$ since $n$ is even.
    Furthermore, by Lemma \ref{divide}(iii) and $\chi_a(D)=\frac{1}{p^t+1}(X_a-1)=\frac{1}{p^t+1}(\pm p^{n/2+i}-1)\in\mathbb{Z}$,
    we have that $N_i\neq 0$ if and only if $i\mod 2t\in\{ kt, (k+1)t \}$, and
    \begin{equation}
    \label{chiDvalue}
     (X_a,\chi_a(D))=\left\{
    \begin{array}{ll}
      \left(p^{n/2+i}, \frac{p^{n/2+i}-1}{p^t+1}\right)  &\mbox{if}\ i\equiv kt\mod 2t,  \\
      \left(-p^{n/2+i}, \frac{-p^{n/2+i}-1}{p^t+1}\right)  &\mbox{if}\ i\equiv (k+1)t\mod 2t. \\
    \end{array}
  \right.
  \end{equation}
  Let $\Phi$ be the set $\{0\leq i\leq n/2 | i\mod 2t\in\{kt, (k+1)t\}\}$.
	Hence, since
	$\sum_{a\in\gf_{p^n}^*}X_a=\sum_{a\in \gf_{p^n}^*}\sum_{x\in \gf_{p^n}}\zeta_p^{\tr(aF(x))}=p^n|\{x\in \gf_{p^n}\, |\, F(x)=0\}|-p^n=0$
    we have
    $\sum_{i\in\Phi}(-p)^{(n/2+i)/t}N_i= 0,$ that is:
    \begin{equation}
    \label{thm2:eq5}
    \sum_{i\in\Phi\atop i\equiv kt\mod 2t}p^iN_i=\sum_{i\in\Phi\atop i\equiv (k+1)t\mod 2t}p^iN_i.
    \end{equation}

    By equation (\ref{thm2:eq5}) we have $\sum_{i\in\Phi,i\equiv 0\mod 2t}p^iN_i=\sum_{i\in\Phi,i\equiv t\mod 2t}p^iN_i$
    and then $N_0=\sum_{i\in\Phi,i\neq 0}(-1)^{i/t-1}p^iN_i$. By equation (\ref{N0eq1}) we have $N_0=\sum_{i\in\Phi,i\neq 0}\frac{p^{2i}-p^t}{p^t-1}N_i$. Note that one may easily check that $\frac{p^{2i}-p^t}{p^t-1}\ge p^i$ when $i\ge t$ and the equality holds if and only if $i=t$. Now, if $N_i$ is nonzero for any $i\in\Phi\setminus\{0,t\}$, we will have
    $$ 
     N_0=\sum_{i\in\Phi,i\neq 0}\frac{p^{2i}-p^t}{p^t-1}N_i > \sum_{i\in\Phi,i\neq 0}(-1)^{i/t-1}p^iN_i=N_0,
    $$
    which is a contradiction.
    Therefore we have $|E_a|=1$ or $p^{2t}$ for any $a\in\gf_{p^n}^*$, and then the values of
	$X_a$ lie in the set $\left\{\pm p^{n/2}, \pm p^{n/2+t} \right\}$.
    We divide the following proof into two cases according to the parity of $k$.
    
    Case 1: $k$ is even. By (\ref{chiDvalue}), we have $X_a\in\{p^{n/2}, -p^{n/2+t}\}$
    and $\chi_a(D)\in\{\frac{p^{n/2}-1}{p^t+1}, \frac{-p^{n/2+t}-1}{p^t+1}\}$.

    Case 2: $k$ is odd. Again, by (\ref{chiDvalue}), we have $X_a\in\{-p^{n/2}, p^{n/2+t}$
    and $\chi_a(D)\in\{-\frac{p^{n/2}-1}{p^t+1}, \frac{p^{n/2+t}-1}{p^t+1}\}$.
    
    Finally, by Lemma \ref{pdsiff}(ii) and the fact that $D=D^{(-1)}$, we have that $D$ is a PDS
    with the prescribed parameters. We complete the proof.
\end{proof}

Particularly, when $p=2$ and $t=1$, we may obtain PDS from APN functions.
\begin{corollary}
\label{APNgraph}
Let $F$ be a quadratic APN function on $\gf_{2^{n}}$ with the form $F(x)=G(x^3)$, where $G|_{C_3}$
is an injection and $n=2k$. Let $D$ denote the set
$$
D=\{ F(x) : x\in\gf_{2^n} \} \setminus \{ 0 \}.
$$
Then $D$ is a partial difference set with parameters
\begin{eqnarray*}
&\left(2^n, \frac{2^n-1}{3}, \frac{1}{9}(2^k+4)(2^k-2),
\frac{1}{9}(2^k+1)(2^k-2) \right) & \ \mbox{if}\ k\ \mbox{is odd}, \\
[2ex] &\left(2^n, \frac{2^n-1}{3}, \frac{1}{9}(2^k-4)(2^k+2),
\frac{1}{9}(2^k-1)(2^k+2) \right) & \ \mbox{if}\ k\ \mbox{is even}. \\
\end{eqnarray*}
\end{corollary}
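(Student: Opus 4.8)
The plan is to recognize that this corollary is nothing more than the specialization of Theorem \ref{gengraph}(ii) to the values $p=2$ and $t=1$, followed by an algebraic rewriting of the resulting parameters. First I would check that all hypotheses of Theorem \ref{gengraph}(ii) are indeed met. With $p=2$ and $t=1$ we have $d=p^t+1=3$, and since $n=2k$ is even we get $\gcd(3,2^n-1)=3=p^t+1$, so the gcd condition holds. The function $F(x)=G(x^3)$ is quadratic by assumption, and the hypothesis that $G|_{C_3}$ is an injection is exactly the hypothesis required on $G|_{C_d}$ in the theorem; by Corollary \ref{dto1map} these conditions already force $F$ to be differentially $2$-uniform, i.e.\ APN, consistent with the statement. (Regularity $D^{(-1)}=D$ is automatic here, since $-x=x$ in the additive group of $\gf_{2^n}$.) Thus Theorem \ref{gengraph}(ii) applies directly and tells us that $D$ is a partial difference set whose parameters are read off from the general formula in its statement (equivalently (\ref{srgpara})) upon setting $p=2$, $t=1$, $n=2k$, and $\epsilon=(-1)^{k}$.

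It then remains only to substitute these values and put the parameters into the factored form claimed. The valency is immediately $\frac{2^n-1}{3}$. For $\lambda$, substituting into $\frac{p^n-3p^t-2-\epsilon p^{n/2+2t}+\epsilon p^{n/2+t}}{(p^t+1)^2}$ and collecting the terms $p^{n/2+2t}=2^{k+2}$ and $p^{n/2+t}=2^{k+1}$ gives $\frac{2^{2k}-8-\epsilon\,2^{k+1}}{9}$; for $\mu$, substituting into $\frac{p^n-\epsilon p^{n/2}+\epsilon p^{n/2+t}-p^t}{(p^t+1)^2}$ and collecting $p^{n/2}=2^{k}$ and $p^{n/2+t}=2^{k+1}$ gives $\frac{2^{2k}-2+\epsilon\,2^{k}}{9}$. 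I would then split according to the parity of $k$. When $k$ is odd, $\epsilon=-1$, and one checks $(2^k+4)(2^k-2)=2^{2k}+2^{k+1}-8$ and $(2^k+1)(2^k-2)=2^{2k}-2^{k}-2$, which are exactly the numerators of $\lambda$ and $\mu$. When $k$ is even, $\epsilon=1$, and similarly $(2^k-4)(2^k+2)=2^{2k}-2^{k+1}-8$ and $(2^k-1)(2^k+2)=2^{2k}+2^{k}-2$ match. This produces the two displayed parameter sets and completes the proof.

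Since the whole argument is a pure specialization, there is no genuine conceptual obstacle; the only point requiring care is the bookkeeping in the factorization, in particular keeping track of the sign $\epsilon=(-1)^{k}$ when merging the two power-of-two terms in each of $\lambda$ and $\mu$, and noting that $n/2=k$ so that the exponents $n/2+2t$, $n/2+t$, $n/2$ collapse to $k+2$, $k+1$, $k$. As a safeguard against arithmetic slips I would, for each parity of $k$, verify the consistency identity of Lemma \ref{pdsiff}(ii), namely $r^{2}=(r-\mu)+r(\lambda-\mu)+\mu\,2^{n}$ with $r=\frac{2^n-1}{3}$; this confirms the parameters without re-running any character-sum computation.
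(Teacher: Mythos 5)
Your proposal is correct and matches the paper exactly: the paper offers no separate proof of Corollary \ref{APNgraph}, presenting it as the direct specialization of Theorem \ref{gengraph}(ii) to $p=2$, $t=1$, $n=2k$, which is precisely your argument, and your algebra (the numerators $2^{2k}-8-\epsilon\,2^{k+1}$ for $\lambda$ and $2^{2k}-2+\epsilon\,2^{k}$ for $\mu$, factored according to $\epsilon=(-1)^k$) checks out in both parity cases.
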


From the proof of Theorem \ref{gengraph}, we have the following result, which may be
used to determine the Walsh spectrum of $F$ in Theorem \ref{gengraph}.

\begin{corollary}
   \label{hyperplane}
  Let $F(x)=G(x^d)$ be a quadratic function from $\gf_{p^n}$ to itself, where $p$ is a prime,
  $\gcd(d,p^n-1)=p^t+1$ for some integer $t>0$ and $n=2kt$ for 
  some positive integer $k$.
  Assume that the restriction of $F$ to $C_d$ is an injection from $C_d$ to $\gf_{p^n}$.
  For any $a\in\gf_{p^n}^*$, define the set
   $$ E_a = \{ s : s\in\gf_{p^n} | \tr\left( a(F(y+s)-F(y)) \right)=0\ \mbox{for all}\ y\in\gf_{p^n} \}. $$
   Then $|E_a|$ is either $1$ or $p^{2t}$.
\end{corollary}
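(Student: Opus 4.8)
The plan is to reuse, essentially verbatim, the counting mechanism developed inside the proof of Theorem \ref{gengraph}(ii), since the hypotheses here coincide with that case: $F$ is quadratic, $\gcd(d,p^n-1)=p^t+1$ with $t>0$, $n=2kt$, and the injectivity assumption makes $F$ zero-difference $p^t$-balanced by Corollary \ref{dto1map}. First I would record, via Lemma \ref{dimEa}, that each $E_a$ is a $\gf_p$-vector space of even dimension, so $|E_a|=p^{2i}$ for some integer $i\ge 0$; writing $X_a=\mathcal{W}_{\tr(aF)}(0)$ and noting that $\sum_{y\in\gf_{p^n}}\zeta_p^{\tr(a(F(y+s)-F(y)))}$ is nonzero exactly when $s\in E_a$, Relation (\ref{xxbar}) gives $X_a\overline{X_a}=p^n|E_a|=p^{n+2i}$. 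Letting $N_i$ count the nonzero $a$ with $|E_a|=p^{2i}$, I would evaluate $\sum_{a\in\gf_{p^n}^*}X_a\overline{X_a}$ in two ways (directly as $\sum_{i\ge0}p^{n+2i}N_i$, and by interchanging summation and using zero-difference balancedness to count the pairs with $F(x)=F(y)$) to arrive at Relation (\ref{N0eq1}): $(p^t-1)N_0=\sum_{i\ge1}(p^{2i}-p^t)N_i$.

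The second ingredient is the integrality and sign structure of $X_a$. Since $D=D^{(-1)}$ (established in the proof of Theorem \ref{gengraph} using $\alpha^{p^t+1}=-1$), $\chi_a(D)\in\mathbb{Z}[\zeta_p]$ is fixed by every $\sigma_\alpha$ and hence lies in $\mathbb{Z}$ by Lemma \ref{rationalinteger}; thus $X_a=(p^t+1)\chi_a(D)+1\in\mathbb{Z}$, forcing $X_a=\pm p^{n/2+i}$. Demanding $\chi_a(D)=\frac{1}{p^t+1}(\pm p^{n/2+i}-1)\in\mathbb{Z}$ and invoking Lemma \ref{divide}(iii) restricts the admissible $i$ to the set $\Phi$ where $i\bmod 2t\in\{kt,(k+1)t\}$, with the sign of $X_a$ determined by the residue exactly as in (\ref{chiDvalue}). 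I would then use $\sum_{a\in\gf_{p^n}^*}X_a=0$, which holds because $F^{-1}(0)=\{0\}$, to extract Relation (\ref{thm2:eq5}); separating the two residue classes modulo $2t$ turns this into $N_0=\sum_{i\in\Phi,\,i\ne0}(-1)^{i/t-1}p^iN_i$.

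The crux, and the only step carrying real content, is the comparison of this expression with (\ref{N0eq1}). Using the elementary inequality $\frac{p^{2i}-p^t}{p^t-1}\ge p^i$ for $i\ge t$, with equality precisely at $i=t$, and noting that every nonzero element of $\Phi$ is $\ge t$, any nonzero $N_i$ with $i\in\Phi\setminus\{0,t\}$ would make the value of $N_0$ coming from (\ref{N0eq1}) strictly exceed the value coming from (\ref{thm2:eq5}), a contradiction. Hence $N_i=0$ for every $i\in\Phi\setminus\{0,t\}$, so $|E_a|\in\{p^0,p^{2t}\}=\{1,p^{2t}\}$ for all nonzero $a$, which is the claim. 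The only point requiring a moment's care is verifying that the injectivity hypothesis stated here yields both the zero-difference balancedness (used to count $|\{(x,y):F(x)=F(y)\}|$) and $F^{-1}(0)=\{0\}$; both follow directly from Corollary \ref{dto1map} and the normalization $F(0)=0$.
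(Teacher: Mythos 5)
Your proposal is correct and follows essentially the same route as the paper: the paper obtains Corollary \ref{hyperplane} as a direct by-product of the proof of Theorem \ref{gengraph}(ii), and your argument reproduces exactly that mechanism (Lemma \ref{dimEa} for even dimension, the two counting identities (\ref{N0eq1}) and (\ref{thm2:eq5}) via integrality of $X_a$ from $D=D^{(-1)}$ and Lemmas \ref{rationalinteger} and \ref{divide}(iii), and the inequality $\frac{p^{2i}-p^t}{p^t-1}\ge p^i$ forcing $N_i=0$ for $i\in\Phi\setminus\{0,t\}$). Your closing remark correctly isolates the only hypotheses that need checking, namely zero-difference $p^t$-balancedness and $F^{-1}(0)=\{0\}$, which the paper handles the same way via Corollary \ref{dto1map} and the normalization $F(0)=0$.
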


\begin{theorem}
\label{walsh}
    Let $F(x)=G(x^d)$ be a quadratic function from $\gf_{p^n}$ to itself, where $p$ is a prime,
    $\gcd(d,p^n-1)=p^t+1$ for some integer $t>0$ and  $n=2kt$ for
  some positive integer $k$.
    Then, for any $a,b\in\gf_{p^n}$, the Walsh coefficient $\mathcal{W}_F(a,b)$ satisfies
    $$ |\mathcal{W}_{F}(a, b)|^2\in\{0, p^n, p^{n+2t} \}. $$
    Particularly, if $n$ is even and $F=G(x^3)$ is a quadratic APN function on $\gf_{2^n}$,
    then the Walsh spectrum of $F$ is $\{0, \pm 2^{n/2}, \pm 2^{n/2+1} \}$.
\end{theorem}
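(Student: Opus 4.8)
The plan is to reduce $|\mathcal{W}_F(a,b)|^2$ to the size of the linear space $E_a$ of Corollary \ref{hyperplane}, in the same spirit as the computation of $X_a\overline{X_a}$ in the proof of Theorem \ref{gengraph}. Fix $a\in\gf_{p^n}^*$ and $b\in\gf_{p^n}$. Expanding the modulus squared and substituting $x=y+s$, I would write
\begin{equation*}
|\mathcal{W}_F(a,b)|^2=\sum_{x,y\in\gf_{p^n}}\zeta_p^{\tr(a(F(x)-F(y))+b(x-y))}
=\sum_{s\in\gf_{p^n}}\zeta_p^{\tr(bs)}\sum_{y\in\gf_{p^n}}\zeta_p^{\tr(a(F(y+s)-F(y)))}.
\end{equation*}

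Since $F$ is quadratic, for each $s$ the map $y\mapsto\tr(a(F(y+s)-F(y)))$ is affine, so the inner sum over $y$ vanishes unless this map is constant in $y$. The set of such $s$ is a $\gf_p$-subspace, and on it the constant value is in fact $0$: this follows either from the Dembowski--Ostrom normal form of Lemma \ref{formofF} (so that $\tr(aF)$ is a homogeneous quadratic form, whose derivative along a radical direction is $0$), or from Corollary \ref{dto1map}, which provides for each nonzero $s$ a $y$ with $F(y+s)-F(y)=0$. Hence this subspace coincides with $E_a$, the inner sum equals $p^n$ for $s\in E_a$ and $0$ otherwise, and therefore
\begin{equation*}
|\mathcal{W}_F(a,b)|^2=p^n\sum_{s\in E_a}\zeta_p^{\tr(bs)}.
\end{equation*}

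By Lemma \ref{dimEa}, $E_a$ is a $\gf_p$-vector space, and $s\mapsto\tr(bs)$ is a homomorphism from $E_a$ into the additive group of $\gf_p$; thus the remaining character sum equals $|E_a|$ when $\tr(bs)=0$ for all $s\in E_a$, and equals $0$ otherwise. This gives $|\mathcal{W}_F(a,b)|^2\in\{0,\,p^n|E_a|\}$, and invoking the dichotomy $|E_a|\in\{1,p^{2t}\}$ of Corollary \ref{hyperplane} yields $|\mathcal{W}_F(a,b)|^2\in\{0,\,p^n,\,p^{n+2t}\}$, which is the first assertion. For the specialization I would set $p=2$, $t=1$ and $n=2k$ even, so that $\gcd(d,2^n-1)=3$ and the APNness of $F=G(x^3)$ forces $G|_{C_3}$ to be an injection (Corollary \ref{dto1map}), meeting the hypotheses of Corollary \ref{hyperplane}. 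The first part then gives $|\mathcal{W}_F(a,b)|^2\in\{0,2^n,2^{n+2}\}$; since $\zeta_2=-1$ makes every $\mathcal{W}_F(a,b)$ a real integer, taking signed square roots produces the Walsh spectrum $\{0,\pm2^{n/2},\pm2^{n/2+1}\}$.

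The entire substantive content is packaged in Corollary \ref{hyperplane}; once $|E_a|\in\{1,p^{2t}\}$ is granted, the rest is the routine character-sum manipulation valid for any quadratic function. The one step I would treat most carefully is the identification of the ``inner sum is nonzero'' set with $E_a$, that is, checking that the constant appearing in the affine derivative vanishes on that subspace --- this is exactly where the Dembowski--Ostrom structure (equivalently, the injectivity of $G|_{C_d}$) is needed, and overlooking it would spoil the clean equality $|\mathcal{W}_F(a,b)|^2=p^n\sum_{s\in E_a}\zeta_p^{\tr(bs)}$.
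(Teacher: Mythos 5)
Your proposal is correct and follows essentially the same route as the paper's proof: expand $|\mathcal{W}_F(a,b)|^2$, shift variables to factor out $\sum_{s}\zeta_p^{\tr(bs)}$, reduce the expression to $p^n\chi_b(E_a)$ using the affineness of the derivative and the vanishing of its constant (justified, as in the paper, by the existence for each nonzero $s$ of a $y$ with $F(y+s)-F(y)=0$), and then invoke Corollary \ref{hyperplane} for $|E_a|\in\{1,p^{2t}\}$, with integrality of the Walsh coefficients handling the $p=2$ specialization. One minor caveat: your parenthetical alternative justification via the Dembowski--Ostrom radical is not quite accurate (the derivative of $\tr(aF)$ along a radical direction is the constant $\tr(aF(s))$, which need not vanish for a general quadratic form), but since your primary justification via Corollary \ref{dto1map} is exactly the paper's argument, the proof stands.
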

\begin{proof}
	For any $a, b\in\gf_{2^n}$, we have
	\begin{equation}
		\begin{array}{lll}
		\mathcal{W}_{F}(a, b)\overline{\mathcal{W}_{F}(a,b)} &=&
                             \sum\limits_{x,y\in\gf_{p^n}}\zeta_p^{\tr(a(F(x)-F(y))+b(x-y))}
                           \\
                           &=& \sum\limits_{t, x}\zeta_p^{\tr(aF(x+t)-aF(x)+bt)} \\
                           &=& \sum\limits_t\zeta_p^{\tr(bt)}\sum\limits_x\zeta_p^{\tr(a(F(x+t)-F(x)))}
                           \\
                           &=&p^n \sum\limits_{t\in E_a}\zeta_p^{\tr(bt)}=\chi_b(E_a),     \\
		\end{array}
	\end{equation}
    where $\chi_b$ is the character of $\gf_{p^n}$.
    It is then clear that $\mathcal{W}_{F}(a, b)\overline{\mathcal{W}_{F}(a,b)}$ lies in the set
    $\{0, p^n|E_a|\}$ since $E_a$ is an affine subspace. By Corollary \ref{hyperplane}, we have
    $\mathcal{W}_{F}(a, b)\overline{\mathcal{W}_{F}(a,b)}\in\{0,p^n,p^{n+2t}\}$.
    Particularly, when $F=G(x^3)$ is a quadratic APN function on $\gf_{2^n}$ with $n$ even,
    we have $\mathcal{W}_{F}(a, b)\in\mathbb{Z}$ and hence
    $\mathcal{W}_{F}(a, b)\in\{0,\pm p^{n/2}, \pm p^{n/2+1}\}$.
    The proof is completed.
\end{proof}

\begin{remark}
The above result shows that the Walsh spectrum of the APN function with the form $F(x)=G(x^3)$
is the same as the one of Gold APN function. Theorem \ref{walsh} presents a unifying treatment of
determining the Walsh spectrum of quadratic APN functions with the form $F(x)=G(x^3)$, where $G$ is
an injection from $C_3$ to $\gf_{2^n}$. For instance, it includes the
result in \cite{Bracken-APN} when $n$ is even.
\end{remark}

\section{Newness}
In this section, we discuss the newness of the SRGs generated by the quadratic zero-difference $p^t$-balanced 
functions in Theorem \ref{gengraph}. We first give some general results,
and then we show that some of the obtained negative Latin square type SRGs are new by comparing
them to the known constructions.

\subsection{Isomorphism of graphs and equivalence of functions}
We first introduce some definitions. Two graphs $\mathtt{G}_1=(V_1,E_1)$ and $\mathtt{G}_2=(V_2,E_2)$
are called \textit{isomorphic} if there exists a one-to-one function $\sigma$ mapping $V_1$ to $V_2$,
and $E_1$ to $E_2$ such that for each
pair $(P,e)\in V_1\times E_1$, we have $\sigma(P)\in \sigma(e)$ if and only if
$P\in e$. Let $D_1,D_2$ be two partial difference sets of the group ${\cal G}$, they 
are called \textit{CI}-equivalent if there exists an automorphism $\phi\in\mbox{Aut}({\cal G})$
such that $\phi(D_1)=D_2$. Moreover, they are said to be \textit{SRG}-equivalent if the corresponding
Cayley graphs are isomorphic, i.e. $\mbox{Cay}({\cal G}, D_1)\cong \mbox{Cay}({\cal G}, D_2)$.
It is known that SRG-equivalence implies CI-equivalence, but not vice versa (see an example in \cite{counterexample}).
For more details on CI-equivalence and SRG-equivalence, one may refer to \cite{weng}.

It is not difficult to see that the property of zero-difference balancedness for a function $F$ is not necessarily preserved
under EA-equivalence. However, the zero-difference balancedness property is 
preserved by affine equivalence and induces isomorphism of the associated Cayley graphs, but not the addition of a linear function. 
It is worth pointing out that EA-inequivalent functions may not necessarily lead to
non-isomorphic graphs (see the following Table 1). Indeed, $18$ EA-inequivalent APN functions
with the form $G(x^3)$ on $\gf_{2^8}$ were found in \cite{tan-APN} (independently found in \cite{yu-APN}),  
where $G$ is an injection on the set of nonzero cubes (listed in the Appendix),
the No. 2 and 6 functions, and the No. 13, 14 and 17 functions lead to isomorphic SRGs
(see Table 1).

\subsection{New negative Latin square type SRGs}

It is well known that there are many non-isomorphic Latin square type SRGs by the
following construction, therefore we only focus on the newness of negative Latin
square type SRGs. Recall that the definition of (negative) Latin square type SRGs
can be found at the end of Section 2.3.

\begin{lemma}[PCP construction, \cite{ma-pds}]
	\label{PCP}
Let ${\cal G}$ be the additive group of a $2$-dimensional vector space $V$ over $\gf_q$.
Let $H_1, H_2, \ldots, H_r$, where $r\leq q+1$, be $r$ hyperplanes of $V$.
Then $D=(H_1\cup H_2\cup \cdots \cup H_r)\setminus \{0\}$ is a
$(q^2, r(q-1), q + r^2-3r, r^2-r)$ partial difference set in ${\cal G}$.
\end{lemma}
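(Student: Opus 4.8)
The plan is to verify the partial difference set condition through the character-theoretic criterion of Lemma \ref{pdsiff}(ii), exploiting the fact that $D$ is built entirely from subgroups so that all the relevant character sums collapse to either $0$ or $q$. First I would record the basic combinatorial data. Since $V$ is $2$-dimensional over $\gf_q$, each hyperplane $H_i$ is a $1$-dimensional subspace of size $q$, and two distinct such subspaces meet only in $\{0\}$; hence $|D| = r(q-1)$, which matches the claimed value of $k$. Because each $H_i$ is a subgroup we have $-H_i = H_i$, so the union is symmetric and $D^{(-1)} = D$, i.e. $D$ is regular --- which is exactly the hypothesis allowing the simplified form of Lemma \ref{pdsiff}(ii). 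In the group algebra $\mathbb{C}[\mathcal{G}]$, using that the $H_i$ overlap only at the identity $1_{\mathcal{G}}$ (the zero vector), I would write $D = \sum_{i=1}^r H_i - r\cdot 1_{\mathcal{G}}$.

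Next I would evaluate a non-principal character $\chi$ on $D$. For a subgroup $H_i$ one has the standard dichotomy $\chi(H_i) = q$ if $H_i \subseteq \ker\chi$ and $\chi(H_i) = 0$ otherwise, so that $\chi(D) = \sum_{i=1}^r \chi(H_i) - r$. The crucial observation is that at most one of the $H_i$ can lie in $\ker\chi$: if two distinct lines $H_i, H_j$ were both contained in $\ker\chi$, then since distinct $1$-dimensional subspaces of a $2$-dimensional space satisfy $H_i + H_j = V$ and $\ker\chi$ is additively closed, we would get $V \subseteq \ker\chi$, forcing $\chi$ to be principal, a contradiction. This step is the heart of the argument, and it is the one I expect to be the main obstacle to state cleanly; note however that it is purely group-theoretic --- it does not require $\ker\chi$ to respect the $\gf_q$-structure --- so it remains valid for an arbitrary prime power $q$. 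Consequently $\chi(D) \in \{-r,\, q-r\}$ for every non-principal $\chi$.

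It then remains to check that $\{-r,\, q-r\}$ are precisely the two admissible character values attached to the claimed parameters. With $k = r(q-1)$, $\lambda = q + r^2 - 3r$ and $\mu = r^2 - r$, I would compute $\lambda - \mu = q - 2r$ and $k - \mu = r(q-r)$, so that the quadratic $X^2 - (\lambda-\mu)X - (k-\mu) = X^2 - (q-2r)X - r(q-r)$ has discriminant $(q-2r)^2 + 4r(q-r) = q^2$ and hence roots $\tfrac{(q-2r)\pm q}{2}$, namely $q-r$ and $-r$; these are exactly the values appearing in the root formula of Lemma \ref{pdsiff}(ii). A short expansion also confirms the counting identity $k^2 = (k-\mu) + k(\lambda-\mu) + \mu v$ with $v = q^2$, both sides equalling $r^2(q-1)^2$.

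Having matched the character spectrum and the counting condition, Lemma \ref{pdsiff}(ii), together with $D^{(-1)}=D$, yields that $D$ is a $(q^2,\, r(q-1),\, q+r^2-3r,\, r^2-r)$ partial difference set, completing the proof. Everything beyond the ``at most one hyperplane in the kernel'' claim is routine bookkeeping, and I would dispose of the boundary case $r = q+1$ (where $D = \mathcal{G}\setminus\{1_{\mathcal{G}}\}$, so that $\chi(D) = -1$ for all non-principal $\chi$) simply by noting that it is consistent with the very same character computation.
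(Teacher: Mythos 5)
Your proof is correct, and there is nothing in the paper to compare it against: the authors import this lemma from \cite{ma-pds} without proof, so any verification must stand on its own. Yours does. The combinatorial bookkeeping checks out ($k-\mu=r(q-r)$, $\lambda-\mu=q-2r$, discriminant $(q-2r)^2+4r(q-r)=q^2$, roots $q-r$ and $-r$, and the counting identity with both sides equal to $r^2(q-1)^2$), the decomposition $D=\sum_{i=1}^r H_i - r\cdot 1_{\cal G}$ is right since distinct lines in a $2$-dimensional space meet only in $0$, and your key step --- at most one $H_i$ can lie in $\ker\chi$, because $H_i+H_j=V$ for $i\neq j$ and $\ker\chi$ is a subgroup --- is exactly what makes the character spectrum collapse to $\{-r,\,q-r\}$, which together with regularity ($-H_i=H_i$) and Lemma \ref{pdsiff}(ii) closes the argument. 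You are also right that this step needs no $\gf_q$-linearity of $\ker\chi$, only additive closure. One remark for comparison: the classical route (the one in Ma's survey) avoids characters entirely and works directly in the group ring, using $H_iH_j={\cal G}$ for $i\neq j$ (each $v\in V$ has a unique representation $h_i+h_j$) and $H_i^2=qH_i$, whence $\bigl(\sum_i H_i\bigr)^2=q\sum_i H_i+r(r-1){\cal G}$ and expanding $D^2$ reproduces equation (\ref{PDSequation}) with the stated parameters in two lines; that computation is slightly shorter and yields the parameters rather than merely verifying them, whereas your character argument has the merit of running entirely through the machinery (Lemma \ref{pdsiff}) that this paper actually sets up and uses for Theorem \ref{gengraph}. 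Your handling of the boundary case $r=q+1$ is fine as well: $\chi(D)=-1=q-r$ is one of the two admissible root values, and the criterion only requires membership in that set.
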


In the following we list the known constructions which may generate negative Latin square type
SRGs with the same parameters in Theorem \ref{gengraph}(ii). Recall that the $e$-th cyclotomic
classes of a finite field $\gf_{p^n}$ ($p^n=ef+1$) are the subsets 
$$ C_j^e=\{w^{ie+j}:i=0,\cdots,f-1\}, \ \  (0\leq j\leq e-1), $$
where $w$ is a primitive element of $\gf_{p^n}$.

\begin{lemma}[Calderbank and Kantor, \cite{cyclosrg}]
	\label{cyclosrg}
	Let $q$ be a prime power and $C_0,C_1,\ldots,C_q$ be the $(q+1)$-th cyclotomic classes in
	$\gf_{q^{2m}}$. For any $I\subset\{0, 1, \ldots, q\}$, $D=\cup_{i\in I}C_i$ is a
	regular $(q^{2m}, u(q^{2m}-1)/(q+1), u^2\eta^2+(3u-q-1)\eta-1, u^2\eta^2+u\eta)$-PDS
	in the additive group of $\gf_{q^{2m}}$ where $u=|I|$ and $\eta=( (-q)^m -1)/(q+1)$.
\end{lemma}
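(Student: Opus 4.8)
\emph{Proof plan.} Throughout write $e=q+1$, $f=(q^{2m}-1)/e$, and let $w$ be the primitive element used to define the classes $C_j=C_j^{q+1}$. Denote by $\chi_1(x)=\zeta_p^{\tr(x)}$ the canonical additive character of $\gf_{q^{2m}}$ and by $\theta_j:=\chi_1(C_j)=\sum_{x\in C_j}\zeta_p^{\tr(x)}$ the associated Gaussian periods. The plan is to apply Lemma \ref{pdsiff}(ii), so the whole problem reduces to computing $\chi(D)$ for every nonprincipal additive character $\chi$. First I would dispose of regularity: since $-1=w^{(q^{2m}-1)/2}$ (and $-1=1$ when $p=2$), we have $-D=D$ as soon as $-1\in C_0$, i.e.\ as soon as $e\mid (q^{2m}-1)/2$, equivalently $f$ even. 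Writing $f=q^{2m-1}-q^{2m-2}+\cdots+q-1$ as an alternating sum of $2m$ odd terms (when $q$ is odd) gives $f\equiv 2m\equiv 0\pmod 2$, so $D^{(-1)}=D$ in all cases.

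Next, every nonprincipal character has the form $\chi_b$ with $b=w^{-\ell}\neq 0$, and since multiplication by $b$ permutes the classes cyclically one gets $\chi_b(C_j)=\theta_{(j-\ell)\bmod e}$, hence $\chi_b(D)=\sum_{i\in I}\theta_{(i-\ell)\bmod e}$. Thus I only need the list of period values. \textbf{This is the crux of the argument}: the pair $(q,e)=(p^s,p^s+1)$ is \emph{semiprimitive}, because $p^s\equiv -1\pmod e$, and the field $\gf_{q^{2m}}=\gf_{p^{2sm}}$ has the even exponent $2sm$ required by the semiprimitive Gauss-sum theorem (Baumert--McEliece). That theorem evaluates the Gauss sum $G(\psi)$ of every nontrivial multiplicative character $\psi$ whose order divides $e$ as the \emph{same} quantity $\varepsilon\,q^m$, with a single sign $\varepsilon=(-1)^{m+1}$. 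Establishing this uniform evaluation — in particular pinning down $\varepsilon$ through a Davenport--Hasse/Stickelberger analysis — is the only genuinely hard step; everything after it is bookkeeping.

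Granting the Gauss-sum evaluation, the orthogonality relation $\theta_j=\frac1e\big(-1+\sum_{\psi\neq 1,\,\psi^e=1}\bar\psi(w^j)G(\psi)\big)$ collapses to two values: for $j\not\equiv 0\pmod e$ the inner sum of characters is $-1$, giving $\theta_j=\eta$ with $\eta=((-q)^m-1)/e$ exactly as in the statement; and for $j\equiv 0$ the identity $\sum_j\theta_j=-1$ forces $\theta_0=-1-q\eta$. Consequently, according to whether $\ell\bmod e$ lies in $I$ or not, $\chi_b(D)$ takes exactly one of the two values $u\eta$ or $(u-1-q)\eta-1$.

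To finish via Lemma \ref{pdsiff}(ii) for a regular PDS it suffices to check that these two numbers are the roots of $X^2-(\lambda-\mu)X+(\mu-k)=0$: their sum is $(2u-q-1)\eta-1=\lambda-\mu$, which is immediate, and their product equals $\mu-k$ precisely because of the identity $(q+1)\eta^2+2\eta=f$, which reduces, upon substituting $\eta=((-q)^m-1)/e$ and $f=(q^{2m}-1)/e$, to $(A-1)\big((A-1)+2-(A+1)\big)=0$ with $A=(-q)^m$. Together with the routine counting identity $k^2=(k-\mu)+k(\lambda-\mu)+\mu v$, Lemma \ref{pdsiff}(ii) then yields that $D$ is the claimed regular $(q^{2m},\,u(q^{2m}-1)/(q+1),\,u^2\eta^2+(3u-q-1)\eta-1,\,u^2\eta^2+u\eta)$ partial difference set.
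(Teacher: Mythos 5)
First, a point of comparison: the paper does not prove this lemma at all --- it is quoted verbatim from Calderbank--Kantor \cite{cyclosrg} as a known result --- so your attempt can only be measured against the standard cyclotomic argument in the literature, which your plan does essentially follow: regularity, reduction to Gaussian periods $\theta_j$ via Lemma \ref{pdsiff}(ii), a semiprimitive evaluation of the periods, and the root check against $X^2-(\lambda-\mu)X+(\mu-k)=0$. Your endgame is correct: the identity $f=(q+1)\eta^2+2\eta$ (your computation with $A=(-q)^m$), the sum/product check of the two candidate values $u\eta$ and $(u-1-q)\eta-1$, the counting identity, and the regularity argument ($f\equiv 2m\equiv 0 \bmod 2$ for $q$ odd, trivial for $p=2$) all verify.

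However, the step you yourself flag as the crux is wrong as stated. It is \emph{not} true that every nontrivial multiplicative character $\psi$ with $\psi^{q+1}=1$ has the same Gauss sum $\varepsilon q^m$ with a single sign $\varepsilon=(-1)^{m+1}$: the semiprimitive evaluation must be applied divisor-by-divisor, with the minimal exponent $j_d$ for each order $d\mid q+1$, and the resulting signs differ when $p$ and $m$ are both odd. Concrete counterexample: $q=3$, $m=1$, $e=4$, field $\gf_9$. The quadratic character has Gauss sum $+3$ while the two quartic characters have Gauss sum $-3$; consequently the periods are $\theta_0=\theta_1=\theta_3=-1=\eta$ and $\theta_2=2=-1-q\eta$ (directly: with $\tr(x)=x+x^3$ one has $\tr(\pm 1)=\pm 2$ and $C_2=\{i,-i\}$ with $\tr(\pm i)=0$). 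So the exceptional period sits on $C_{e/2}$, not on $C_0$, contradicting your ``$\theta_j=\eta$ for $j\not\equiv 0$, $\theta_0=-1-q\eta$.'' The correct statement is the Baumert--Mills--Ward uniform cyclotomy theorem: in the semiprimitive case the periods are two-valued, taking the value $\eta$ on all classes but one, and the value $-1-q\eta$ on exactly one class, which is $C_0$ when $m$ is even or $p=2$, and $C_{e/2}$ when $p$ and $m$ are both odd (here $(p^s+1)/e=1$ is odd, which is what triggers the exceptional case). Fortunately your proof repairs with a one-line change: your subsequent argument uses only the \emph{value set} --- each $\chi_b(D)=\sum_{i\in I}\theta_{(i-\ell)\bmod e}$ equals $u\eta$ or $(u-1-q)\eta-1$ according to whether the shifted index of the special class lies in $I$ --- and this is insensitive to where that class sits. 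With the uniform-sign claim replaced by the BMW two-case statement, the rest of your verification goes through verbatim and the lemma is proved.
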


In \cite[Theorem 2]{Xiang-srg}, Brouwer, Wilson and Xiang generalized the construction of SRGs in
the above Lemma \ref{cyclosrg}. Their construction requires the so-called semiprimitive condition.
We shall elaborate below (before Table 1) that our constructions of SRGs in Theorem 
\ref{gengraph}(ii) is more general since it does not require the semiprimitive condition
and it may generate new SRGs which are not covered by \cite[Theorem 2]{Xiang-srg}.

Finally, it is well known that negative Latin square type SRGs may be obtained from
projective two-weight codes (see definition in \cite{cyclosrg}) as follows:
\begin{lemma}
    [\cite{ma-pds}]
	\label{twoweightcode}
	Let $y_1, y_2, \ldots, y_n$ be pairwise linear independent vectors in $\gf_q^n$. Then $y_1, y_2, \ldots, y_n$
	span a two-weight $[n, s]$-projective code $\mathcal{C}$ if and only if
	$$ D=\{ty_i : t\in\gf_q\setminus\{0\} \ \mbox{and}\ i=1,2,\ldots, n \} $$
	is a regular PDS in the additive group of $\gf_q^s$. Furthermore, if the two nonzero weights of $\mathcal{C}$
	are $w_1$ and $w_2$, then $D$ is a
	$$
	  (q^s, n(q-1), k^2+3k-q(k+1)(w_1+w_2)+q^2w_1w_2, k^2+k-qk(w_1+w_2)+q^2w_1w_2)
    $$
    partial difference set.
\end{lemma}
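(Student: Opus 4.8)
The plan is to pass to the additive group ${\cal G}=(\gf_q^s,+)$ (the vectors $y_i$ lie in $\gf_q^s$), compute the character values $\chi(D)$, interpret them through the Hamming weights of $\C$, and then invoke the regular partial difference set criterion of Lemma \ref{pdsiff}(ii). Fix a nontrivial additive character $\psi$ of $\gf_q$; then every character of ${\cal G}$ has the form $\chi_a\colon u\mapsto\psi(a\cdot u)$ for a unique $a\in\gf_q^s$, the principal one being $\chi_0$. Since the $y_i$ are nonzero and pairwise linearly independent, the sets $\gf_q^*y_i=\{ty_i:t\in\gf_q^*\}$ are pairwise disjoint, so $|D|=n(q-1)=:k$; and because $-1\in\gf_q^*$ we get $D^{(-1)}=-D=D$, i.e. $D$ is regular. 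Regarding $y_1,\dots,y_n$ as the columns of the $s\times n$ generator matrix of $\C$, the codeword attached to the message $a$ is $c_a=(a\cdot y_1,\dots,a\cdot y_n)$, so $\mathrm{wt}(c_a)=|\{i:a\cdot y_i\neq0\}|$; as the $y_i$ span $\gf_q^s$, the map $a\mapsto c_a$ is a linear bijection onto $\C$ carrying ${\cal G}^*$ onto the nonzero codewords.

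First I would carry out the one essential computation. For a nonprincipal character $\chi_a$,
\[
  \chi_a(D)=\sum_{i=1}^{n}\sum_{t\in\gf_q^*}\psi\bigl(t\,(a\cdot y_i)\bigr),
\]
where the inner sum equals $q-1$ if $a\cdot y_i=0$ and equals $-1$ otherwise. Writing $Z_a=|\{i:a\cdot y_i=0\}|=n-\mathrm{wt}(c_a)$ yields
\[
  \chi_a(D)=(q-1)Z_a-(n-Z_a)=qZ_a-n=k-q\,\mathrm{wt}(c_a).
\]
Letting $a$ run over ${\cal G}^*$ and using the bijection $a\mapsto c_a$, the values $\chi_a(D)$ are exactly $\{\,k-q\,w:\ w\ \text{a nonzero weight of }\C\,\}$. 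Hence $\C$ is a two-weight code with weights $w_1,w_2$ \emph{precisely when} $\chi_a(D)$ assumes exactly the two values $r:=k-qw_1$ and $s:=k-qw_2$ over all nonprincipal characters, which is the asserted equivalence.

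Next I would convert this two-value property into the PDS identity. Put $\lambda-\mu:=r+s$ and $\mu:=k+rs$, so that $r,s$ are the two roots of $X^2-(\lambda-\mu)X-(k-\mu)=0$. By regularity $\chi(D)$ is real and $\chi(DD^{(-1)})=\chi(D)\overline{\chi(D)}=\chi(D)^2$ for every character; since $\chi(D)\in\{r,s\}$ for nonprincipal $\chi$, this equals $(k-\mu)+(\lambda-\mu)\chi(D)$. Thus the group-ring element $\mathcal{R}:=DD^{(-1)}-(k-\mu)1_{\cal G}-(\lambda-\mu)D$ is annihilated by every nonprincipal character, so by the Inversion Formula (Lemma \ref{invfor}) it is a scalar multiple of ${\cal G}$; comparing the coefficient of $1_{\cal G}$ — namely $k$ in $DD^{(-1)}$, $k-\mu$ in $(k-\mu)1_{\cal G}$, and $0$ in $(\lambda-\mu)D$ — pins that scalar down to $\mu$. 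This is exactly (\ref{PDSequation}), so $D$ is a regular $(q^s,k,\lambda,\mu)$-PDS. Substituting $k=n(q-1)$ together with $r=k-qw_1$, $s=k-qw_2$ into $\mu=k+rs$ and $\lambda=\mu+(r+s)$, and using $(k-r)(k-s)=q^2w_1w_2$, reproduces the stated parameters; the converse follows by reading the same equivalence backwards.

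The main obstacle is not a hard estimate but two points of care. First, one must make the correspondence between nonprincipal characters and nonzero codewords precise: that $a\mapsto c_a$ is a bijection onto $\C$ — so that ``two weights of $\C$'' matches ``two nonprincipal character values of $D$'' — rests on the columns $y_i$ spanning $\gf_q^s$, equivalently on $\C$ having full dimension $s$. Second, the counting relation $k^2=(k-\mu)+k(\lambda-\mu)+\mu v$ appearing in Lemma \ref{pdsiff}(ii) should \emph{not} be treated as an independent hypothesis to be verified; it is automatically enforced by the identity-coefficient computation above, which is the subtle step that fixes the scalar at $\mu$ rather than leaving it free. The remaining verification of the parameter formulas is routine algebra.
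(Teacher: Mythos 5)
Your proof is correct. Note, however, that the paper offers no proof of this lemma at all: it is quoted verbatim from Ma's survey \cite{ma-pds}, so there is no internal argument to compare against. What you have written is essentially the standard character-theoretic proof from that literature (Calderbank--Kantor/Ma), and it uses exactly the toolkit the paper sets up in Section 2: the computation $\chi_a(D)=k-q\,\mathrm{wt}(c_a)$ via the complete character sum over $t\in\gf_q^*$, the two-valued character spectrum, and the Inversion Formula (Lemma \ref{invfor}) to pin the group-ring identity (\ref{PDSequation}) with the scalar $\mu$ fixed by the identity coefficient. Your two ``points of care'' are both well taken: the bijection $a\mapsto c_a$ does rest on the columns spanning $\gf_q^s$ (and you correctly read the ambient space as $\gf_q^s$, silently repairing the paper's typo ``$\gf_q^n$''), and the counting relation in Lemma \ref{pdsiff}(ii) is indeed a consequence of applying the principal character to the established identity rather than an extra hypothesis. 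The only cosmetic gap is the tacit exclusion of the degenerate one-weight case (e.g.\ $n=(q^s-1)/(q-1)$, where $\chi_a(D)$ is constant and $D={\cal G}\setminus\{0\}$), which the phrase ``two-weight'' rules out on the code side and nontriviality rules out on the PDS side; the cited statement glosses over this in the same way, so nothing essential is missing.
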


In the following, with the help of a computer, we show that, using the construction of SRGs in Theorem \ref{gengraph}(ii)
and the aforementioned 18 APN functions on $\gf_{2^8}$, new negative Latin square type SRGs are obtained. 

In Theorem \ref{gengraph}, by letting $p=2, t=1, n=8$, the $18$ APN functions of the form
$G(x^3)$ where $G|_{C_3}$ is an injection lead to $(256,85,24,30)$-SRGs. By the 
online database of known constructions of SRGs \cite{databasesrg}, SRGs with these parameters can be
constructed from the following methods:
\begin{enumerate}[(i)]
   \item The SRG from Lemma \ref{cyclosrg} by letting $u=1$. This graph is verified to be isomorphic to
	   the SRGs with No. 13, 14, 17 APN functions in Appendix.
   \item The SRG from projective binary $[85,8]$ two-weight codes with weights $40,48$ as in Lemma \ref{twoweightcode}.
           Checking the online database of two-weight codes \cite{databasetwoweightcode} shows that there are three constructions of
           such codes \cite{858code}. By a computer the SRGs from these codes are all isomorphic to the one from Lemma \ref{cyclosrg} above
	       (actually by Magma these three codes are equivalent).
   \item In \cite[Theorem 2]{Xiang-srg}, to obtain an SRG with the above parameter, we need to require
         (using the same notation as in \cite[Theorem 2]{Xiang-srg})
         $u/e=1/3$, where $e\mid 255$ and there exists $l>0$ such that $2^l\equiv -1\mod e$ and $1\leq u\leq e-1$.
	 It is easy to verify that for all divisors of 255 only $e=3,u=1$ satisfy the above conditions.
	 Then by \cite[Theorem 2]{Xiang-srg}, the Cayley graphs generated by the sets $D_i=\alpha^iK$ are
	 SRGs with the parameters $(256,85,24,30)$, where $K$ is the set of all non-zero cubes of $\gf_{2^8}$,
	 $\alpha$ is a primitive element of $\gf_{2^8}$ and $i\in\{0,1,2\}$.
	 Clearly, $K$ is exactly the image set of the APN function $x^3$ over $\gf_{2^8}$.
	 Therefore, the SRGs with the parameters $(256,85,24,30)$ from \cite[Theorem 2]{Xiang-srg} are isomorphic
	 to the one from our Theorem \ref{gengraph}(ii) by applying $F(x)=x^3$.
\end{enumerate}

By MAGMA, the SRGs from the other $15$ APN functions in the Appendix are not isomorphic to the known constructions,
and pairwise non-isomorphic.
We list the $15$ new $(256,85,24,30)$-SRGs in the following table.
The notation $\mathtt{G}$ denotes the SRG, $\mbox{Aut}(\mathtt{G})$ denotes the automorphism group of the
graph $\mathtt{G}$, $M$ denotes the adjacent matrix of $\mathtt{G}$, and $\mbox{Rank}(M)$ denotes the $2$-rank
of the adjacent matrix $M$.

\begin{table*}[htb]
    \label{imagec1}
    \tabcaption{New Negative Latin square type $(256,85,24,30)$-SRGs from APN functions}
    \centering
      \begin{tabular}{|c|c|c|c||c|c|c|c|} \hline
      No. &$|\mbox{Aut}(\mathtt{G})|$   & $\mbox{Rank}(M)$ & Remark
      & No.  &$|\mbox{Aut}(\mathtt{G})|$   & $\mbox{Rank}(M)$  & Remark     \\ \hline  
      $1$    &$2^9$        & $256$   & new     & $2, 6$     &$2^{11}$           & $256$    & new     \\ \hline  
      $3$    &$2^8$        & $256$   & new     & $4$        &$2^{10}$           & $256$   & new     \\ \hline  
      $5$    &$2^9$        & $256$   & new     & $6$        &$2^{11}$           & $256$   & new\\ \hline  
      $7$    &$2^{10}$     & $256$   & new     & $8$        &$2^{10}$           & $256$   & new \\ \hline  
      $9$    &$2^{9}$      & $256$   & new     & $10$       &$2^{10}$           & $256$   & new\\ \hline  
      $11$   &$2^{8}$      & $256$   & new     & $12$       &$2^{10}$           & $256$   & new \\ \hline  
      $13,14,17$    &$2^{11}\cdot 5\cdot 17$  & $256$ & Lemma \ref{cyclosrg}   & $15$        &$2^{10}$           & $256$  & new \\ \hline  
      $16$        &$2^{9}$           & $256$    & new                     & $18$        &$2^{10}$           & $256$ & new   \\ \hline  
    \end{tabular}
  \end{table*}

  Finally, due to the relationship between quadratic zero-difference balanced function of the form $F(x)=G(x^{p^t+1})$ and SRGs, we leave the following problem for interested readers.

\begin{problem}
        Let $t$ be a positive integer and $n\equiv 0\mod 2t$, construct
	quadratic zero-difference $p^t$-balanced functions $F(x)$ of the form $G(x^{p^t+1})$ over $\gf_{p^n}$, where
	the restriction of $G$ to $C_{p^t+1}$ is an injection. Particularly, find such functions when
	$n/2t$ is odd (as by Theorem \ref{gengraph} they can lead to negative Latin square type
	SRGs).
\end{problem}

\section{Concluding remarks}

In this paper, we introduced a new notion \textit{differentially $\delta$-vanishing functions} and studied its relationship with zero-difference balanced (ZDB) functions and differentially $\delta\rq{}$-uniform functions.  We showed that any zero-difference $\delta$-balanced functions are differentially $\delta$-vanishing, and any quadratic  differentially $\delta$-vanishing functions are differentially $\delta$-uniform. Note that the converse of these two inclusion relations are not true in general. We provided both the constructions of differentially $\delta$-vanishing functions and the characterization of such functions via the Walsh transform. In particular, we studied the quadratic zero-difference $p^t$-balanced functions over $\gf_{p^n}$ with the form $F(x)=G(x^{p^t+1})$, where $G$ restricts to the $d$-th power set $C_d$ is an injection and $n=2kt$. When $p=2$ and $t=1$, we explored the condition for $\alpha,\beta,\gamma$ such that $G(x)=x+\alpha\tr(\beta x+\gamma x^3)$ is an injection on $C_3$, and therefore obtaining new (up to CCZ-equivalence) APN functions. Interestingly, we showed that the image set (excluding $0$) of the aforementioned quadratic function $F(x)=G(x^{p^t+1})$ is a partial difference set in the group $(\gf_{p^n},+)$ with parameter (\ref{srgpara}). By using the recently discovered APN functions of the form $G(x^3)$ and by comparing the strongly regular graphs from their image sets with the known constructions, $15$ new $(256, 85, 24, 30)$-strongly regular graphs were obtained.

\section*{Acknowledgement}
We would like to thank Eric Chen for kindly sending us the generator matrices of the binary $[85,8]$ codes in the database of two-weight codes he maintains. We also thank the anonymous reviewers for their valuable suggestions which improve this paper. We particularly thank one reviewer for pointing out the relationship between zero-difference balanced functions and optimal constant composition codes, which leads to the results in Remark 5.

\newpage
\section*{Appendix}

The primitive polynomial of the field $\gf_{2^8}$ over $\gf_2$ is $x^8+x^4+x^3+x^2+1$.

{\tiny{
\begin{table*}[htb]
    \label{imagec1}
    \tabcaption{Quadratic APN functions on $\gf_{2^8}$ with the form $G(x^3)$ and $G|_{C_3}$ is injective}
    \centering
      \begin{tabular}{|l|c|} \hline
      No.        & Function         \\ \hline  
      $1$        & $\begin{array}{ll}
                     &w^{132}x^{192} + w^{37}x^{144} + w^{91}x^{132} + w^{188}x^{129} + w^{76}x^{96} + w^{162}x^{72} + w^{46}x^{66} + w^{252}x^{48} \\
                     &+w^{42}x^{36} + w^{81}x^{33} + w^{83}x^{24} + w^{13}x^{18} + w^{185}x^{12} + w^{163}x^{9} + w^{216}x^{6} + w^{181}x^{3} \\
                     \end{array}$                        \\ \hline  
      $2$        &$x^{144} + x^6 + x^3$                          \\ \hline  
      $3$        &$\begin{array}{ll}
                   &w^{91}x^{192} + w^{124}x^{144} + w^{214}x^{132} + w^{106}x^{129} + w^{59}x^{96} + w^{172}x^{72} + w^{138}x^{66} + w^{163}x^{48}\\
                   & + w^{58}x^{36} + w^{100}x^{33} + w^{32}x^{24} + w^{250}x^{18} + w^{45}x^{12} + w^{241}x^{6} + w^{157}x^{3}\\ \end{array}$
                    \\ \hline   
      $4$        &$w^{21}x^{144} + w^{183}x^{66} + w^{245}x^{33} + x^3$                       \\ \hline  
      $5$        &$\begin{array}{ll}
                    &w^{155}x^{192} + w^{96}x^{144} + w^{223}x^{132} + w^{77}x^{129} + w^{88}x^{96} + w^{232}x^{72} + w^{69}x^{66} + w^{142}x^{48}\\
                    &+ w^{168}x^{36} + x^{33} + w^{145}x^{24} + w^{234}x^{18} + w^{202}x^{12} + w^{94}x^{9} + w^{189}x^{6} + w^{241}x^{3}
                    \end{array}$                       \\ \hline  
      $6$        &$x^{72} + x^6 + x^3$                       \\ \hline  
      $7$        &$\begin{array}{ll}
                     &w^{126}x^{192} + w^{119}x^{144} + w^{221}x^{132} + w^{222}x^{129} + w^{79}x^{96} + w^{221}x^{72} + w^{187}x^{66}\\
                     &+ w^{148}x^{48} + w^{187}x^{36} + w^{237}x^{24} + w^{231}x^{12} + w^{119}x^{9} +
                      w^{244}x^{6} + w^{236}x^{3}\end{array}$                       \\ \hline  
      $8$        &$\begin{array}{ll}
                    &w^{25}x^{192} + w^{140}x^{144} + w^{59}x^{132} + w^{129}x^{129} + w^{42}x^{96} + w^{164}x^{72} + w^{149}x^{66} + w^{119}x^{48}\\
                    & + w^{74}x^{36} + w^{211}x^{33} + w^{9}x^{24} + w^{46}x^{18} +
                      w^{130}x^{12} + w^{185}x^{9} + w^{147}x^{6} + w^{27}x^{3}\end{array}$                       \\ \hline  
      $9$        &$\begin{array}{ll}
                   &w^{151}x^{192} + w^{13}x^{144} + w^{58}x^{132} + w^{143}x^{129} + w^{110}x^{96} + wx^{72} + w^{244}x^{66} + w^{26}x^{48} \\
                   &+ w^{180}x^{36} + w^{8}x^{33} + w^{69}x^{24} + w^{76}x^{18} +
                      w^{201}x^{12} + w^{201}x^{9} + w^{19}x^{6} + w^{107}x^{3}\end{array}$                       \\ \hline  
      $10$        &$w^{135}x^{144} + w^{120}x^{66} + w^{65}x^{18} + x^3$                       \\ \hline  
      $11$        &$\begin{array}{ll}
                     &w^{113}x^{192} + w^{56}x^{144} + w^{68}x^{132} + w^{155}x^{129} + w^{91}x^{96} + w^{78}x^{72} + w^{159}x^{66} \\
                     &+ w^{30}x^{48} + w^{194}x^{36} + w^{14}x^{33} + w^{238}x^{24} + w^{91}x^{18} +
                       w^{100}x^{12} + w^{96}x^{9} + w^{222}x^{6} + w^{178}x^{3}\end{array}$                       \\ \hline  
      $12$        &$\begin{array}{ll}
                       &w^{86}x^{192} + w^{224}x^{129} + w^{163}x^{96} + w^{102}x^{66} + w^{129}x^{48} + w^{102}x^{36} + w^{170}x^{33} + w^{14}x^{24}\\
                       & + w^{170}x^{18} + w^{101}x^{12} + w^{58}x^{6} + w^{254}x^{3}\end{array}$                       \\ \hline  
      $13$        &$x^9$                       \\ \hline  
      $14$        &$x^3$                       \\ \hline  
      $15$        &$\begin{array}{ll}
                      &w^{95}x^{192} + w^{242}x^{144} + w^{195}x^{132} + w^{98}x^{129} + w^{84}x^{96} + w^{45}x^{72} + w^{234}x^{66}\\
                      & + w^{202}x^{48} + w^{159}x^{36} + w^{58}x^{33} + w^{23}x^{24} + w^{148}x^{18} +
                        w^{230}x^{12} + w^{32}x^{9} + w^{54}x^{6} + w^{41}x^{3}\end{array}$                       \\ \hline  
      $16$        &$\begin{array}{ll}
                    &w^{189}x^{192} + w^{143}x^{144} + w^{22}x^{132} + w^{21}x^{129} + w^{133}x^{96} + w^{239}x^{72} + w^{229}x^{66} + w^{31}x^{48}\\
                    &+ w^{187}x^{36} + w^{185}x^{33} + w^{68}x^{24} + w^{236}x^{18} +
                     w^{75}x^{12} + w^{91}x^{9} + w^{97}x^{6} + w^{160}x^{3}\end{array}$                       \\ \hline  
      $17$        &$x^{57}$                       \\ \hline  
      $18$        &$w^{67}x^{192} + w^{182}x^{132} + w^{24}x^{6} + x^3$                       \\ \hline  
    \end{tabular}
  \end{table*}
}}

\end{document}